\title{Set-valued shortfall and divergence risk measures}
\author{\c{C}a\u{g}{\i}n Ararat\thanks{Department of Industrial Engineering, Bilkent University, Ankara, 06800, Turkey, cararat@bilkent.edu.tr}
\and
Andreas H. Hamel\thanks{School for Economics and Management, Free University Bozen, Bozen-Bolzano, 39031, Italy, andreas.hamel@unibz.it}
\and
Birgit Rudloff\thanks{Institute for Statistics and Mathematics, Vienna University of Economics and Business, Vienna, 1020, Austria, brudloff@wu.ac.at}
}
\date{September 10, 2017}
\makeatletter \renewenvironment{proof}[1][\proofname] {\par\pushQED{\qed}\normalfont\topsep6\p@\@plus6\p@\relax\trivlist\item[\hskip\labelsep\bfseries#1\@addpunct{.}]\ignorespaces}{\popQED\endtrivlist\@endpefalse} \makeatother
\newtheorem{thm}{Theorem}[section]
\newtheorem{cor}{Corollary}[section]
\newtheorem{lem}{Lemma}[section]
\newtheorem{prop}{Proposition}[section]
\newtheorem{defn}{Definition}[section]
\newtheorem{example}{Example}[section]
\newtheorem{rem}{Remark}[section]
\newtheorem{assumption}{Assumption}[section]
\numberwithin{equation}{section}
\let\abs=\envert
\renewcommand{\o}{\omega}
\renewcommand{\O}{\Omega}
\renewcommand{\Pr}{\mathbb{P}}
\newcommand{\E}{\mathbb{E}}
\newcommand{\Q}{\mathbb{Q}}
\renewcommand{\P}{\mathcal{P}}
\newcommand{\G}{\mathcal{G}}
\newcommand{\M}{\mathcal{M}}
\newcommand{\F}{\mathcal{F}}
\newcommand{\B}{\mathcal{B}}
\newcommand{\C}{\mathcal{C}}
\newcommand{\D}{\mathcal{D}}
\newcommand{\K}{\mathcal{K}}
\newcommand{\T}{\mathbb{T}}
\DeclareMathOperator{\ent}{ent}
\DeclareMathOperator{\cl}{cl}
\DeclareMathOperator{\co}{co}
\DeclareMathOperator{\cone}{cone}
\DeclareMathOperator{\convex}{convex}
\DeclareMathOperator{\graph}{graph}
\DeclareMathOperator{\interior}{int}
\DeclareMathOperator{\market}{mar}
\DeclareMathOperator{\dom}{dom}
\DeclareMathOperator{\esssup}{ess\,sup}
\DeclareMathOperator{\essinf}{ess\,inf}
\DeclareMathOperator{\Avar}{AV@R}
\newcommand{\R}{\mathrm{I\negthinspace R}}
\newcommand{\N}{\mathrm{I\negthinspace N}}
\newcommand{\bs}{\backslash}
\newcommand{\of}[1]{\ensuremath{\left( #1 \right)}}
\newcommand{\cb}[1]{\ensuremath{ \left\{ #1 \right\} }}
\newcommand{\sqb}[1]{\ensuremath{ \left[ #1 \right] }}
\def\prehp(#1,#2){\ensuremath{  #1 \cdot #2 }}
\begin{document}
\maketitle
\thispagestyle{empty}

\begin{abstract}
Risk measures for multivariate financial positions are studied in a utility-based framework. Under a certain incomplete preference relation, shortfall and divergence risk measures are defined as the optimal values of specific set minimization problems. The dual relationship between these two classes of multivariate risk measures is constructed via a recent Lagrange duality for set optimization. In particular, it is shown that a shortfall risk measure can be written as an intersection over a family of divergence risk measures indexed by a scalarization parameter. Examples include set-valued versions of the entropic risk measure and the average value at risk. As a second step, the minimization of these risk measures subject to trading opportunities is studied in a general convex market in discrete time. The optimal value of the minimization problem, called the market risk measure, is also a set-valued risk measure. A dual representation for the market risk measure that decomposes the effects of the original risk measure and the frictions of the market is proved.
\\
\\[-5pt]
\textbf{Keywords and phrases: }Optimized certainty equivalent; shortfall risk; divergence; relative entropy; entropic risk measure; average value at risk; set-valued risk measure; multivariate risk; incomplete preference; transaction cost; solvency cone; liquidity risk; infimal convolution; Lagrange duality; set optimization.\\
\\[-5pt]
\textbf{Mathematics Subject Classification (2010): }91B30, 46N10, 46A20, 26E25, 90C46.
\end{abstract}

\section{Introduction}

Risk measures for random vectors have recently gained interest in the financial mathematics community. Introduced in the pioneering work \cite{jouini}, \emph{set-valued risk measures} have been used to quantify \emph{financial risk} in markets with frictions such as transaction costs or illiquidity effects. These risk measures are functions which assign to an $m$-dimensional random vector $X$ a set $R(X)\subseteq\R^m$ whose elements can be used as risk compensating portfolios. Here, $X$ denotes a financial position in $m$ assets whose components are in terms of physical units rather than values with respect to a specific num\'{e}raire. More recently, set-valued risk measures have also been used to quantify \emph{systemic risk} in financial networks; see \cite{FRW15}, \cite{AR15}. In this case, $m$ is the number of financial institutions and the components of $X$ denote the corresponding magnitudes of a random shock (equity/loss) for these institutions.

The coherent set-valued risk measures in \cite{jouini} have been extended to the convex case in \cite{hh:duality} and to random market models in \cite{hhr:setval}. These extensions were possible by an application of the duality theory and, in particular, the Moreau-Fenchel biconjugation theorem for set-valued functions developed in \cite{andreasduality}. Extensions to the dynamic framework have been studied in \cite{FR13,FR13b,FRsurvey}, \cite{LBT13} and to set-valued portfolio arguments in \cite{CM13}. Scalar risk measures for multivariate random variables, which can be interpreted as scalarizations of set-valued risk measures (see \citealt[Section~2.4]{FRsurvey}) have been studied in \cite{JK95}, \cite{BR06}, \cite{Weber13} (financial risk) as well as in \cite{CIM13}, \cite{BFFMb15} (systemic risk).

Set-valued generalizations of some well-known scalar coherent risk measures have already been studied such as the set-valued version of the average value at risk in \cite{hry:avar}, \cite{FR13b}, \cite{HLR13}, or the set of superhedging portfolios in markets with transaction costs in \cite{hhr:setval}, \cite{LR13}, \cite{FR13}. Other examples of coherent risk measures for multivariate claims can be found in \cite{B06}, \cite{CM13}. To the best of our knowledge, apart from superhedging with certain trading constraints in markets with frictions, which leads to set-valued convex risk measures (see \citealt{HLR13}), no other examples have been studied in the convex case yet.

This paper introduces utility-based convex risk measures for random vectors. The basic assumption is that the investor has a complete risk preference towards each asset which has a numerical representation in terms of a von Neumann - Morgenstern loss (utility) function. However, her risk preference towards multivariate positions is incomplete and it can be represented in terms of the vector of individual loss functions. Based on this incomplete preference, the \emph{shortfall risk} of the random vector $X$ is defined as the collection of all portfolios $z\in\R^m$ for which $X+z$ is preferred to a benchmark portfolio $z^0\in\R^m$. As an example, when the individual loss functions are exponential, we obtain set-valued versions of the well-known \emph{entropic risk measure} (see \citealt{fs:srm, fs:sf}).

We formulate the computation of the shortfall risk measure as a constrained set optimization problem and apply recent tools from the set optimization literature to obtain a dual formulation. In particular, using the Lagrange duality in \cite{hl:lagrange}, another type of convex risk measures, called \emph{divergence risk measures}, are obtained in the dual problem. A divergence risk measure is defined based on the trade-off between consuming a deterministic amount $z\in\R^m$ of the position today and realizing the expected loss of the remaining amount $X-z$ at terminal time. The decision making problem is bi-objective: The investor wants to choose a portfolio $z$ so as to \emph{maximize} $z$ and \emph{minimize} the (vector-valued) expected loss due to $X-z$ at the same time, both of which are understood in the sense of set optimization (see Section~\ref{divergence-sec}). The two objectives are combined by means of a relative weight (scalarization) parameter $r\in\R^m_{+}$ and the divergence risk of $X$ is defined as an unconstrained set optimization problem over the choices of the deterministic consumption $z$. As special cases, we obtain the definition of the set-valued average value at risk given in \cite{hry:avar} as well as a convex version of it.

One of the main results of this paper is that a shortfall risk measure can be written as the intersection of divergence risk measures indexed by their relative weights and, in general, the intersection is not attained by a unique relative weight. Hence, the shortfall risk measure is a (much) more conservative risk measure than a divergence risk measure. While the shortfall risk measure is more difficult to compute as a constrained optimization problem, we show that the computation of a divergence risk measure can be reduced to the computation of scalar divergence risk measures (\emph{optimized certainty equivalents} in \citealt{oldoce, bt:oce}). On the flip side, and in contrast to shortfall risk measures, to be able to use a divergence risk measure, the decision maker has to specify the relative weight of her loss with respect to her consumption for each asset.

While shortfall and divergence risk measures are defined based on the preferences of the investor, they do not take into account how the market frictions affect the riskiness of a position. In Section~\ref{market}, we propose a method for incorporating these frictions in the computation of risk. We generalize the notion of \emph{market risk measure} (see \citealt{hry:avar} with the name \emph{market-extension}) by including trading constraints modeled by convex random sets, and considering issues of liquidation into a certain subcollection of the assets. In contrast to \cite{hry:avar}, we allow for a convex (and not necessarily conical) market model to include temporary illiquidity effects in which the bid-ask prices depend on the magnitude of the trade, and thus, are given by the shape of the limit order book; see \cite{AT07}, \cite{PP10}, for instance. Letting $R$ be a (market-free) risk measure such as a shortfall or divergence risk measure, its induced market risk measure is defined as the minimized value of $R$ over the set of all financial positions that are attainable by trading in the market. As the second main result of the paper, we prove a dual representation theorem for the market risk measure (Theorem~\ref{liquidationtheorem}). In particular, we show that the penalty (Fenchel conjugate) function of the market risk measure is the sum of the penalty function of the base risk measure $R$ and the supporting halfspaces of the convex regions of the market.

The rest of this paper is organized as follows. In Section~\ref{scalartheory}, we review the scalar theory of shortfall and divergence risk measures. However, we generalize the standard results in the literature as we allow for extended real-valued loss functions and we do not impose any growth conditions on the loss functions as in \cite{fs:srm}, \cite{bt:oce}. The main part of the paper is Section~\ref{mainsection}, where set-valued shortfall and divergence risk measures are studied. In Section~\ref{examples}, set-valued entropic risk measures are studied as examples of shortfall risk measures and set-valued average value at risks are recalled as examples of divergence risk measures. Market risk measures in a general convex market model with liquidation and trading constraints are studied in Section~\ref{market}. All proofs are collected in Section~\ref{appendix}.

\section{Scalar shortfall and divergence risk measures}\label{scalartheory}

In this section, we summarize the theory of (utility/loss-based) shortfall and divergence risk meas\-ures for univariate financial positions. Shortfall risk measures are introduced in \cite{fs:srm}. Divergence risk measures are introduced in \cite{oldoce}, and analyzed further in \cite{bt:oce} with the name \emph{optimized certainty equivalents} and in \cite{CKpre} with the name \emph{divergence utilities} for their negatives. The dual relationship between shortfall and divergence risk measures is pointed out in \cite{schied} and \cite{bt:oce}. In terms of the assumptions on the underlying loss function, we generalize the results of these papers by dropping growth conditions; see Section~\ref{technical} for a comparison.

The proofs of the results of this section are given in Section~\ref{scalarproof} and most of them inherit the convex duality arguments in \cite{bt:oce} rather than the analytic arguments in \cite{fs:srm}.

\begin{defn}\label{loss}
A lower semicontinuous, convex function $f:\R\rightarrow\R\cup\{+\infty\}$ with effective domain $\dom f =\cb{x\in\R\mid f(x)<+\infty}$ is said to be a \emph{loss function} if it satisfies the following properties:
\begin{enumerate}[(i)]
\item $f$ is nondecreasing with $\inf_{x\in\R} f(x)>-\infty$.
\item $0\in \dom f$.
\item $f$ is not identically constant on $\dom f$.
\end{enumerate}
\end{defn}

Throughout this section, let $\ell \colon \R\rightarrow\R\cup\{+\infty\}$ be a loss function. Definition~\ref{loss} above guarantees that $\interior\ell(\R)\not=\emptyset$, where $\interior$ denotes the interior operator. Let us fix a threshold level $x^{0}\in\interior\ell(\R)$ for expected loss values. Without loss of generality, we assume $x^0 = 0$. Based on the loss function $\ell$, we define the shortfall risk measure on the space $L^\infty$ of essentially bounded real-valued random variables of a probability space $(\O,\mathcal{F},\Pr)$, where random variables are identified up to almost sure equality.

\begin{defn}\label{srm-def}
The function $\rho_{\ell} \colon L^{\infty}\rightarrow \R\cup\{\pm\infty\}$ defined by
\begin{equation}\label{srm-def2}
\rho_{\ell}(X)=\inf\{s\in\R\mid\mathbb{E}\sqb{\ell(-X-s)}\leq 0\}
\end{equation}
is called the \emph{shortfall risk measure}.
\end{defn}

\begin{prop}\label{srm-lsc}
The function $\rho_{\ell}$ is a (weak$^{*}$-)lower semicontinuous convex  risk measure in the sense of \citet[Definitions~4.1, 4.4]{fs:sf}. In particular, $\rho_\ell$ takes values in $\R$.
\end{prop}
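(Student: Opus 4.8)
The plan is to verify the three defining properties of a convex weak$^*$-lower semicontinuous scalar risk measure (monotonicity, cash-additivity/translativity, and convexity together with the lower semicontinuity condition) directly from the variational formula for $\rho_\ell$, and separately to argue finiteness so that $\rho_\ell$ actually takes values in $\R$. The key structural fact driving everything is that, since $\ell$ is nondecreasing, convex, and lower semicontinuous, the map $s\mapsto \mathbb{E}[\ell(-X-s)]$ is nonincreasing, convex, and lower semicontinuous in $s$ for each fixed $X$; the acceptance set $\{s\in\R\mid \mathbb{E}[\ell(-X-s)]\le x^0\}$ is therefore an upper half-line (possibly all of $\R$ or empty a priori), and $\rho_\ell(X)$ is its infimum.

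First I would establish monotonicity and translativity, which are the cheap properties. Translativity is immediate: replacing $X$ by $X+z$ for $z\in\R$ shifts the constraint $\mathbb{E}[\ell(-(X+z)-s)]\le x^0$ to $\mathbb{E}[\ell(-X-(s+z))]\le x^0$, so a change of variable $s'=s+z$ gives $\rho_\ell(X+z)=\rho_\ell(X)-z$. Monotonicity (if $X\le Y$ then $\rho_\ell(X)\ge\rho_\ell(Y)$) follows because $\ell$ nondecreasing forces $\ell(-Y-s)\le\ell(-X-s)$ pointwise, so every $s$ feasible for $X$ is feasible for $Y$, shrinking the infimum. Convexity I would get from the convexity of the map $(X,s)\mapsto\mathbb{E}[\ell(-X-s)]$: if $s_X$ and $s_Y$ are feasible for $X$ and $Y$ respectively, then Jensen-type convexity shows $\lambda s_X+(1-\lambda)s_Y$ is feasible for $\lambda X+(1-\lambda)Y$, which yields $\rho_\ell(\lambda X+(1-\lambda)Y)\le\lambda\rho_\ell(X)+(1-\lambda)\rho_\ell(Y)$; a standard epsilon-argument handles the passage from near-optimal $s$ to the infimum.

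The more delicate points are finiteness (that $\rho_\ell$ never takes the values $\pm\infty$) and the weak$^*$-lower semicontinuity. For finiteness I would use the three properties of a loss function: since $x^0\in\interior\ell(\R)$, there exist finite values below and above $x^0$ in the range of $\ell$, so for each bounded $X$ one can drive $\mathbb{E}[\ell(-X-s)]$ below $x^0$ by taking $s$ large (using that $\ell$ is not identically constant on its domain and is nondecreasing, so it increases somewhere and decreases the argument by sending $s\to+\infty$), giving $\rho_\ell(X)<+\infty$; conversely, taking $s\to-\infty$ eventually violates the constraint because $\ell$ is nonconstant and monotone, giving $\rho_\ell(X)>-\infty$. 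The essential boundedness of $X$ is what makes these limits uniform enough to conclude. For the weak$^*$-lower semicontinuity I would appeal to the closedness of the acceptance set $\{X\in L^\infty\mid \mathbb{E}[\ell(-X)]\le x^0\}$ in the $\sigma(L^\infty,L^1)$-topology, invoking the correspondence between risk measures and acceptance sets in the scalar theory of \cite{fs:sf} (cited in the statement): lower semicontinuity of $\rho_\ell$ is equivalent to weak$^*$-closedness of this sublevel set, which in turn follows from Fatou's lemma applied along weak$^*$-convergent sequences together with the lower semicontinuity of $\ell$.

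I expect the main obstacle to be the weak$^*$-lower semicontinuity, specifically the closedness of the acceptance set under $\sigma(L^\infty,L^1)$-convergence. Weak$^*$ convergence of a bounded net $X_n\to X$ does not give pointwise convergence, so one cannot naively apply Fatou; the cleanest route is to reduce to sequences (using boundedness and the structure of $L^\infty$) or to use the convexity of the acceptance set so that weak$^*$-closedness reduces to norm-closedness of a convex set via Mazur's lemma, after which the lower semicontinuity of $X\mapsto\mathbb{E}[\ell(-X)]$ under a.s.\ convergence along a norm-convergent subsequence finishes the argument by Fatou. The interplay between the topology and the integral functional is the technical heart; everything else is routine manipulation of the defining infimum.
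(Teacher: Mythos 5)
Your treatment of monotonicity, translativity, convexity and finiteness is sound and essentially what the paper does: the paper makes your finiteness sketch precise by sandwiching $\ell(-\esssup X-s)\leq\mathbb{E}[\ell(-X-s)]\leq\ell(-\essinf X-s)$ and observing that $\ell$ is strictly increasing on $\ell^{-1}(\interior\ell(\R))$, so that $\ell^{-1}(x^{0})$ is well defined and yields explicit thresholds in $s$ above which the constraint holds and below which it fails; this gives $\rho_{\ell}(X)\in\R$, together with the feasibility of the infimum itself, $\mathbb{E}[\ell(-X-\rho_{\ell}(X))]\leq x^{0}$, which uses continuity of $\ell$ on $\dom\ell$ and is needed later.

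The genuine gap is in the weak$^{*}$-lower semicontinuity step, exactly where you located the technical heart. Your fallback via Mazur's lemma is incorrect: Mazur identifies norm-closedness with \emph{weak} closedness, i.e.\ closedness for $\sigma(L^{\infty},(L^{\infty})^{*})$, for convex sets. Weak$^{*}$-closedness, i.e.\ closedness for $\sigma(L^{\infty},L^{1})$, is strictly stronger because $L^{1}$ is a proper subspace of $(L^{\infty})^{*}$; there exist norm-closed convex acceptance sets in $L^{\infty}$ that are not weak$^{*}$-closed (equivalently, convex risk measures failing the Fatou property), which is precisely why weak$^{*}$-lower semicontinuity is a nontrivial hypothesis in the scalar theory. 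Your other route (``reduce to sequences'') is the right idea but is not elementary and is left unproven: passing from the non-metrizable weak$^{*}$ topology to a sequential criterion is itself the content of \cite[Theorem~4.33]{fs:sf}, whose proof rests on a Krein--\v{S}mulian argument (a correct Mazur-type step does appear there, but applied in $L^{1}$ to weakly convergent sequences after restricting to norm-bounded sets, not in $L^{\infty}$). The paper's proof closes this gap by invoking that theorem directly: for a convex, monotone, translative, finite function on $L^{\infty}$, weak$^{*}$-lower semicontinuity is \emph{equivalent} to the Fatou property, stated for norm-bounded sequences converging $\mathbb{P}$-almost surely --- not weak$^{*}$-convergent ones. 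The Fatou property is then checked in two lines: for a bounded sequence $X^{n}\rightarrow X^{\infty}$ a.s., one has $\mathbb{E}[\ell(-X^{n}-\rho_{\ell}(X^{n}))]\leq x^{0}$ for each $n$, and Fatou's lemma together with the monotonicity and continuity of $\ell$ on its domain gives $\mathbb{E}[\ell(-X^{\infty}-\liminf_{n\rightarrow\infty}\rho_{\ell}(X^{n}))]\leq x^{0}$, i.e.\ $\rho_{\ell}(X^{\infty})\leq\liminf_{n\rightarrow\infty}\rho_{\ell}(X^{n})$. Almost sure convergence is what makes Fatou's lemma available here, which is exactly what weak$^{*}$ convergence fails to provide; any correct completion of your proposal must route through this equivalence (or reprove it) rather than through Mazur in $L^{\infty}$.
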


\begin{rem}
Since $\inf_{x\in\R}\ell(x)>-\infty$, it holds $\E\sqb{\ell(-X-s)}>-\infty$ for every $X\in L^\infty, s\in\R$. Hence, the expectation in \eqref{srm-def2} is always well-defined. Moreover, the assumption $x^0=0\in\interior \ell(\R)$ is essential for the finiteness of $\rho_{\ell}(X)$ as shown in the proof of Proposition~\ref{srm-lsc}; see Section~\ref{scalarproof}.
\end{rem}

According to Definition~\ref{srm-def}, the number $\rho_{\ell}(X)$ can be seen as the optimal value of a convex minimization problem. The next proposition computes $\rho_{\ell}(X)$ as the optimal value of the corresponding Lagrangian dual problem. Its proof in Section~\ref{scalarproof} is an easy application of strong duality.

\begin{prop}\label{srm-simple-dual}
For every $X\in L^{\infty}$,
\begin{equation}\label{srm-dual}
\rho_{\ell}(X)=\sup_{\lambda\in\R_+}\delta_{\ell,\lambda}(X),
\end{equation}
where
\begin{align}\label{divergencerisk}
\delta_{\ell,\lambda}(X)\coloneqq & \inf_{s\in\R\colon\E\sqb{\ell(-X-s)}<+\infty}\of{s+\lambda\E\sqb{\ell(-X-s)}}\\
&=\begin{cases}\inf_{s\in\R}\of{s+\lambda\mathbb{E}\sqb{\ell(-X-s)}}& \text{if }\lambda>0,\\ -\essinf X - \sup \dom\ell &\text{if }\lambda=0.\end{cases}\notag
\end{align}
\end{prop}

Note that $\delta_{\ell,\lambda}$ is a monotone and translative function on $L^\infty$ for each $\lambda\in\R_+$. Our aim is to determine the values of $\lambda$ for which this function is a lower semicontinuous convex risk measure with values in $\R$. To that end, we define the Legendre-Fenchel conjugate $g \colon \R \to \R\cup\{\pm\infty\}$ of the loss function by
\begin{equation}
g(y) := \ell^*(y) =\sup_{x\in\R}(xy-\ell(x)).
\end{equation}
In the following, we will adopt the convention $(+\infty) \cdot 0 = 0$ as usual in convex analysis, see \cite{rockafellar2}. We will also use $\frac{1}{+\infty} = 0$ as well as $\frac{1}{0} = +\infty$.

\begin{defn}\label{divergence}
A proper, convex, lower semicontinuous function $\varphi \colon \R\rightarrow\R\cup\{+\infty\}$ with effective domain $\dom \varphi=\cb{y\in\R\mid \varphi(y)<+\infty}$ is said to be a \emph{divergence function} if it satisfies the following properties:
\begin{enumerate}[(i)]
\item $0\in \dom\varphi \subseteq\R_{+}$.
\item $\varphi$ attains its infimum over $\R$.
\item $\varphi$ is not of the form $y\mapsto +\infty\cdot 1_{\{y<0\}}+(ay+b)\cdot 1_{\{y\geq 0\}}$ with $a\in\R_{+}\cup\{+\infty\}$ and $b\in\R$.
\end{enumerate}
\end{defn}

\begin{prop}\label{conjugation-bijection}
Legendre-Fenchel conjugation furnishes a bijection between loss and divergence functions.
\end{prop}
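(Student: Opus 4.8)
The plan is to establish the claimed bijection by showing that Legendre-Fenchel conjugation maps loss functions to divergence functions and vice versa, then verifying that the conjugation operation is its own inverse on these classes. Since both classes consist of proper, convex, lower semicontinuous functions on $\R$ (a loss function is real-valued on its domain with $0\in\dom f$, hence proper), the Fenchel-Moreau theorem guarantees $f^{**}=f$ for any such $f$. Thus conjugation is an involution on the class of all proper convex lsc functions, and it suffices to prove the two inclusions: (a) if $\ell$ is a loss function, then $g=\ell^*$ is a divergence function; and (b) if $\varphi$ is a divergence function, then $\varphi^*$ is a loss function. Because conjugation is involutive, these two implications together yield that the map restricts to a genuine bijection between the two classes.

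For direction (a), I would translate each defining property of a loss function into a property of $g=\ell^*$ using standard conjugacy dictionary entries. Monotonicity of $\ell$ (property (i), nondecreasing) forces $\dom g\subseteq\R_+$: indeed, for $y<0$, the supremum $\sup_x(xy-\ell(x))$ is driven to $+\infty$ by sending $x\to-\infty$ along the region where $\ell$ is bounded above, which is exactly divergence property (i). The fact that $0\in\dom\ell$ (property (ii)) makes $\ell$ proper and bounded below in the conjugate sense, ensuring $g$ attains its infimum; more precisely $\inf_y g(y) = -\inf_x \ell(x)\cdot(\text{evaluated at } y=0)$, and attainment of the infimum of $g$ corresponds to $\ell$ having a nonvertical supporting behavior, giving divergence property (ii). The non-degeneracy condition (iii)---that $\ell$ is not constant on its domain---should be dualized to rule out the excluded affine-plus-indicator form in divergence property (iii); I expect this matching of the two exceptional-form conditions to be the subtlest bookkeeping, since one must check that $\ell$ being eventually constant corresponds precisely to $\varphi$ having the forbidden shape $+\infty\cdot 1_{\{y<0\}}+(ay+b)\cdot 1_{\{y\geq 0\}}$.

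For direction (b), I would run the same dictionary in reverse. Starting from a divergence function $\varphi$ with $\dom\varphi\subseteq\R_+$, the conjugate $\ell=\varphi^*$ is automatically nondecreasing because restricting the domain of $\varphi$ to $\R_+$ means the supremum defining $\ell(x)$ is nondecreasing in $x$ (larger $x$ pairs favorably with nonnegative $y$). That $\varphi$ attains its infimum translates to $0\in\dom\ell$ with $\ell(0)=-\inf\varphi$ finite, giving loss property (ii). Finally, the exclusion in divergence property (iii) is exactly what is needed to guarantee $\ell$ is not identically constant on its domain (loss property (iii)), and I would verify this by checking that the forbidden form for $\varphi$ is precisely the conjugate-dual of a constant (or everywhere-$+\infty$-beyond-a-point) loss function, which violates property (iii) or the finiteness/properness requirements.

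The main obstacle I anticipate is the careful handling of the two exceptional conditions (property (iii) in each definition) and the boundary/degenerate cases where $\ell$ is affine on part of its domain or where $\dom\ell$ is a half-line; here the conventions $(+\infty)\cdot 0=0$, $1/0=+\infty$, and $1/+\infty=0$ stated before Definition~\ref{divergence} must be invoked precisely to make the conjugate well-defined and to see that the excluded form on the divergence side corresponds bijectively to the excluded (constant-on-domain) behavior on the loss side. The convexity, lower semicontinuity, and properness are routine consequences of general conjugacy theory and the Fenchel-Moreau theorem, so the real content is confirming that the three structural conditions on each side map onto one another, and that no loss function produces a divergence function of the forbidden form and no divergence function produces a constant loss function.
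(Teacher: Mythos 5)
Your overall architecture---reduce to two inclusions via the Fenchel--Moreau involution, then translate each defining property through the standard conjugacy dictionary---is exactly the structure of the paper's proof, and your arguments for $\dom g\subseteq\R_+$ (monotonicity of $\ell$), for the monotonicity of $\varphi^*$ (from $\dom\varphi\subseteq\R_+$), for $0\in\dom\varphi^*$ (from attainment of $\inf\varphi$), and for matching the two exceptional forms in property (iii) via biconjugation all agree with what the paper does. The genuine gap is in direction (a), divergence property (ii): attainment of the infimum of $g=\ell^*$. You claim that $0\in\dom\ell$ ``makes $\ell$ proper and bounded below in the conjugate sense, ensuring $g$ attains its infimum.'' This is a non sequitur: a proper convex lower semicontinuous function that is bounded below need not attain its infimum. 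Your supporting identity is also garbled: $g(0)=-\inf_{x}\ell(x)$, whereas $\inf_{y}g(y)=-\ell^{**}(0)=-\ell(0)$; these are different quantities, and you conflate them. The correct argument---the one the paper gives---is via the Fenchel equality: $g(y)=-\ell(0)=\inf g$ holds precisely for $y\in\partial\ell(0)$, so attainment is \emph{equivalent} to $\partial\ell(0)\neq\emptyset$, and the paper then invokes Rockafellar (its Theorems~23.3 and 23.5) to get a subgradient of $\ell$ at $0$ and conclude. Your phrase ``nonvertical supporting behavior'' gestures at this subgradient, but you never argue that one exists, and nothing in properness or boundedness below provides it.

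This step is where the whole proposition is delicate, not routine bookkeeping. Consider $\ell(x)=-\sqrt{-x}$ for $x\le 0$ and $\ell(x)=+\infty$ for $x>0$: this is convex, lower semicontinuous, nondecreasing, has $0\in\dom\ell$, and is not constant on its domain, yet $\partial\ell(0)=\emptyset$ (the slopes blow up at the right endpoint of $\dom\ell$), and its conjugate $g(y)=1/(4y)$ for $y>0$ ($+\infty$ otherwise) is proper, convex, lower semicontinuous and bounded below but does \emph{not} attain its infimum $0=-\ell(0)$. So the attainment property cannot be waved through from general principles; it hinges on $\partial\ell(0)\neq\emptyset$, which is automatic when $0\in\ri(\dom\ell)$ but fails in boundary cases like this one (a point on which even the paper's citation of Rockafellar is sensitive). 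In short: your plan is structurally the paper's proof, but the single step in it that is not dictionary translation---attainment of the infimum of the conjugate---is exactly the step your proposal gets wrong.
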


\begin{rem}\label{divergenceindex}
Let $\lambda>0$. If $f$ is a loss function, then $\lambda f$ is also a loss function. If $\varphi$ is a divergence function, then the function $y\mapsto\varphi_{\lambda}(y)\coloneqq\lambda\varphi(\frac{y}{\lambda})$ on $\R$ is also a divergence function. The functions $f$ and $\varphi$ are conjugates of each other if and only if $\lambda f$ and $\varphi_{\lambda}$ are. In this case, we also define the \emph{recession function} $\varphi_0\colon\R\to\R\cup\cb{+\infty}$ of $\varphi$ by
\begin{equation}
\varphi_0(y)\coloneqq \sup_{\lambda>0}\of{\varphi_\lambda(y)-\lambda \varphi(0)}=\lim_{\lambda\downarrow 0}\varphi_\lambda(y)=\begin{cases}y\sup\dom f & \text{if }y\geq 0,\\ +\infty & \text{if }y<0,\end{cases}
\end{equation}
for each $y\in\R$. Here, $\lambda\mapsto \varphi_\lambda(y)-\lambda g(0)$ is a nonincreasing convex function on $\R_{++}$ for each $y\in\R$. Moreover, the second equality holds thanks to the assumption $0\in\dom \varphi$, see \citet[Theorem~8.5, Corollary~8.5.2]{rockafellar}. The last equality is due to the fact that the support function of the effective domain of the proper convex function $f$ coincides with the recession function $\varphi_0$ of its conjugate, see \citet[Theorem~13.3]{rockafellar}.
\end{rem}

We next recall the notion of divergence. To that end, let $\M(\Pr)$ be the set of all probability measures on $(\O,\mathcal{F})$ that are absolutely continuous with respect to $\Pr$.

\begin{defn}\label{divergencedefn}
Let $\varphi$ be a divergence function with the corresponding loss function $f$. For $\lambda\in\R_+$ and $\Q\in\M(\Pr)$, the quantity
\begin{equation}
I_{\varphi,\lambda}(\Q\mid\Pr) \coloneqq \E\sqb{\varphi_{\lambda}\of{\frac{d\Q}{d\Pr}}} = \begin{cases}\lambda\E\sqb{\varphi\of{\frac{1}{\lambda}\frac{d\Q}{d\Pr}}}&\text{if }\lambda>0,\\ \sup\dom f & \text{if }\lambda=0\end{cases}
\end{equation}
is called the $(\varphi,\lambda)$-\emph{divergence} of $\Q$ with respect to $\Pr$.
\end{defn}

\begin{rem}
$I_{\varphi,1}$ is the usual $\varphi$-divergence in the sense of \cite{csi}. It is a notion of ``distance" between probability measures and includes the well-known \emph{relative entropy} as a special case, see \eqref{relent} below.
\end{rem}

Note that $g = \ell^*$ is a divergence function, and $\dom g$ is an interval of the form $[0,\beta)$ or $[0,\beta]$ for some $\beta\in\R_{++}\cup\{+\infty\}$. Here, we have $\dom g\not=\{0\}$ since otherwise $g$ would be of the form $y\mapsto +\infty\cdot 1_{\{y<0\}}+(ay+b)\cdot 1_{\{y\geq 0\}}$ for $a=+\infty$ and $b=g(0)$. For each $\lambda>0$, $y\mapsto g_{\lambda}(y)\coloneqq\lambda g(\frac{y}{\lambda})$ on $\R$ is a divergence function with $\dom g_\lambda=[0,\lambda\beta)$ or $\dom g_\lambda=[0,\lambda\beta]$ by Remark~\ref{divergenceindex}, and the corresponding $(g,\lambda)$-divergence is defined according to Definition~\ref{divergencedefn}. In the case $\lambda=0$, $y\mapsto g_0(y)=+\infty\cdot1_{\cb{y<0}}+(\sup\dom\ell)y\cdot 1_{\cb{y\geq 0}}$ on $\R$ is not a divergence function. Moreover, we have $\dom g_0=\cb{0}$ if $\dom\ell=\R$, and $\dom g_0=\R_+$ if $\dom \ell\neq\R$.

\begin{thm}\label{oce-duality}
For every $\lambda\in\R_+$ and $X\in L^{\infty}$,
\begin{equation}\label{oce-primal-dual}
\delta_{\ell,\lambda}(X) = \sup_{\Q\in\M(\Pr)} \of{\E^{\Q}\sqb{-X}-I_{g,\lambda}(\Q\mid\Pr)}.
\end{equation}
Moreover, $\delta_{\ell,\lambda}$ is a lower semicontinuous convex risk measure if $1\in \dom g_{\lambda}$, and $\delta_{\ell,\lambda}(X)=-\infty$ for every $X\in L^\infty$ otherwise. Hence,
\begin{equation}
\rho_{\ell}(X)=\sup_{\lambda\in\R_+\colon 1\in \dom g_{\lambda}}\delta_{\ell,\lambda}(X).
\end{equation}
In particular, if $\dom\ell=\R$, then
\begin{equation}
\rho_{\ell}(X)=\sup_{\lambda>0\colon \frac{1}{\lambda}\in \dom g}\delta_{\ell,\lambda}(X).
\end{equation}
\end{thm}

\begin{defn}\label{divergencermdefn}
For $\lambda\in\R_+$ with $1\in \dom g_\lambda$, the function $\delta_{\ell,\lambda}: L^{\infty}\rightarrow\R$ is called the \emph{divergence risk measure} with weight $\lambda$.
\end{defn}

In \eqref{oce-primal-dual}, note that a divergence risk measure is represented in terms of probability measures. More generally, by \citet[Theorem 4.33]{fs:sf}, every lower semicontinuous convex risk measure $\rho \colon L^\infty \to \R$ has a \emph{dual representation} in the sense that it is characterized by its so-called \emph{penalty function} $\alpha_{\rho} \colon \M(\Pr)\to\R\cup\{+\infty\}$ by the following relationships:
\begin{equation}\label{dual-srm}
\rho(X)=\sup_{\Q\in\M(\Pr)}\of{\E^{\Q}\sqb{-X}-\alpha_{\rho}(\Q)},\quad \alpha_{\rho}(\Q) = \sup_{X\in L^{\infty}}\of{\E^{\Q}\sqb{-X}-\rho(X)}.
\end{equation}
In Proposition~\ref{divergence-minimal}, we check that \eqref{oce-primal-dual} is indeed the dual representation of the divergence risk measure $\delta_{\ell,\lambda}$. We also compute the penalty function of the shortfall risk measure in terms of the penalty functions of divergence risk measures.

\begin{prop}\label{divergence-minimal}
Let $\lambda\in\R_+$ with $1\in \dom g_\lambda$. For each $\Q\in\M(\Pr)$, it holds
\begin{equation}
\alpha_{\delta_{\ell,\lambda}}(\Q)=I_{g,\lambda}(\Q\mid\Pr),
\end{equation}
and moreover,
\begin{equation}
\alpha_{\rho_{\ell}}(\Q)=\inf_{\lambda\in\R_+}I_{g,\lambda}(\Q\mid\Pr)=\inf_{\lambda\in\R_+\colon 1\in\dom g_\lambda}\alpha_{\delta_{\ell,\lambda}}(\Q).
\end{equation}
\end{prop}

\section{Set-valued shortfall and divergence risk measures}\label{mainsection}

In this section, we introduce utility-based shortfall and divergence risk measures for multivariate financial positions, the central objects of this paper. The proofs are presented in Section~\ref{setvaluedproof}.

Let us introduce some notation that will be used frequently throughout the rest of the paper. Let $m\geq 1$ be an integer and $\abs{\cdot}$ an arbitrary fixed norm on $\R^m$. By $\R^m_+$ and $\R^m_{++}$, we denote the set of elements of $\R^m$ with nonnegative and strictly positive components, respectively.

Throughout, we consider a probability space $(\O,\mathcal{F},\Pr)$. We denote by $L_{m}^{0} := L^0_m(\Omega,\mathcal{F},\mathbb{P})$ the linear space of random variables taking values in $\R^{m}$, where two elements are identified if they are equal $\mathbb{P}$-almost surely; and we define
\begin{align}
& L_{m}^{1}=\{X\in L_{m}^{0}\mid\mathbb{E}\sqb{\abs{X}}<+\infty\},\quad L_{m}^{\infty}=\{X\in L_{m}^{0}\mid\esssup\abs{X}<+\infty\}, \\
& L^{p}_{m,+}=\{X\in L_{m}^{p}\mid \mathbb{P}\{X\in\R_{+}^{m}\}=1\},\quad p\in\{1,+\infty\}.\notag
\end{align}

Componentwise ordering of vectors is denoted by $\leq$, that is, for $x,z\in\R^m$, it holds $x\leq z$ if and only if $x_i\leq z_i$ for each $i\in\cb{1,\ldots,m}$. The Hadamard product of $x,z\in\R^m$ is defined by $x\cdot z \coloneqq (x_1z_1,\ldots,x_mz_m)^{\mathsf{T}}$. We denote by $\mathcal{P}(\R^{m})$ the power set of $\R^m$, that is, the set of all subsets of $\R^m$ including the empty set $\emptyset$. On $\mathcal{P}(\R^{m})$, the Minkowski addition and multiplication with scalars are defined by $A+B=\{a+b\mid a\in A,\; b\in B\}$ and $sA=\{sa\mid a\in A\}$ for $A, B \subseteq \R^{m}$ and $s \in \R$ with the conventions $A+\emptyset=\emptyset+B=\emptyset+\emptyset=\emptyset$, $s\emptyset=\emptyset$ $(s\not=0)$, and $0\emptyset=\{0\}\subseteq\R^{m}$. We also use the shorthand notations $A-B = A + (-1)B$ and  $z+A = \{z\}+A$. For $x\in\R^m$ and a nonempty set $A\subseteq\R^m$, we set $x\cdot A\coloneqq \cb{x\cdot a\mid a\in A}$. These operations can be defined on the power set $\mathcal{P}(L_{m}^{p})$ of $L_{m}^{p}$, $p\in\{0,1,\infty\}$, in a similar way. (In)equalities between random variables are understood in the $\Pr$-almost sure sense.

\subsection{The incomplete preference relation}\label{incomplete-sec}

Let $m \geq 1$ be an integer denoting the number of assets in a financial market. The linear space $\R^{m}$ is called the space of eligible portfolios. This means that every $z\in \R^{m}$ is a potential deposit to be used at initial time in order to compensate for the risk of a financial position.

We model a financial position as an element $X \in L_{m}^{\infty}$, where $X_i(\omega)$ represents the number of physical units in the $i^{\text{th}}$ asset for $i\in\{1, \ldots, m\}$ when the state of the world $\omega\in\Omega$ occurs. We assume that the investor has a (possibly) incomplete preference relation for multivariate financial positions in $L_m^\infty$. Its numerical representation is in terms of the individual loss functions for the assets and a comparison rule for the vectors of expected losses:
\begin{enumerate}[(i)]
\item \textbf{Loss functions for assets: }We assume that the investor has a complete preference relation $\succeq_i$ on $L^\infty$ corresponding to each asset $i\in\{1,\ldots,m\}$ and this preference relation has a von Neumann - Morgenstern representation given by a (scalar) loss function $\ell_{i} \colon \R\rightarrow\R\cup\{+\infty\}$ (see Definition~\ref{loss}). That is, for $X_i, Z_i\in L^\infty$,
\begin{equation}
X_i \succeq_i Z_i \;\Leftrightarrow\; \E\sqb{\ell_i(-X_i)}\leq \E\sqb{\ell_i(-Z_i)}.
\end{equation}

\item \textbf{Vector loss function: }Let $\ell \colon \R^m \to \R^m\cup\{+\infty\}$ be the \emph{vector loss function} defined by
\begin{equation}\label{topelement}
\ell(x) =\begin{cases}\of{\ell_{1}(x_1), \ldots, \ell_{m}(x_m)}^{\mathsf{T}} &\text{if }x \in \bigtimes_{i=1}^{m}\dom\ell_{i}, \\
+\infty & \text{else},
	\end{cases}
\end{equation}
for $x\in\R^m$. Similarly, the expected loss vector corresponding to a random position $X\in L_m^\infty$ is
$\E\sqb{\ell(-X)} \coloneqq 
\of{\E\sqb{\ell_1(-X_1)},\ldots,\E\sqb{\ell_m(-X_m)}}^{\mathsf{T}}$
if
$\Pr\cb{-X_i \in \dom\ell_i}=1$ for each $i\in\cb{1,\ldots,m}$, and $\E\sqb{\ell(-X)}\coloneqq +\infty$ otherwise.

\item \textbf{Comparison rule: }Let $C \subseteq \R^m$ be a closed convex set such that $C+\R^m_+ \subseteq C$ and $0 \in \R^m$ is a boundary point of $C$. Expected loss vectors will be compared according to the relation $\leq_C$ on $\R^m$ defined by
\begin{equation}\label{compC}
x \leq_C z \;\Leftrightarrow\; z \in x+ C.
\end{equation}
As $\R^m_+\subseteq C$, the relation $\leq_C$ provides a definition for a ``smaller" expected loss vector by generalizing the componentwise comparison of expected loss vectors with $\leq_{\R^m_+}$. Some examples of the set $C$ are discussed in Example~\ref{Cexamples} below.
\end{enumerate}
Finally, the incomplete preference relation $\succeq$ of the investor on $L_m^\infty$ is assumed to have the following numerical representation: For $X, Z\in L_m^\infty$,
\begin{equation}\label{exploss}
X\succeq Z \; \Leftrightarrow\; \E\sqb{\ell(-X)}\leq_C \E\sqb{\ell(-Z)}.
\end{equation}

\begin{rem}
In \eqref{compC} and \eqref{exploss}, the element $+\infty$ is added to $\R^m$ as a top element with respect to $\leq_C$, that is, $z \leq_C +\infty$ for every $z\in\R^m\cup\cb{+\infty}$. The addition on $\R^m$ is extended to $\R^m\cup\{+\infty\}$ by $z + (+\infty) = (+\infty) + z = +\infty$ for every $z \in \R^m\cup\{+\infty\}$.
\end{rem}

\begin{rem}
Note that $\leq_C$ (and hence $\succeq$) is reflexive (since $0 \in C$), transitive if $C + C \subseteq C$ and antisymmetric if $C \cap (-C) = \cb{0}$ ($C$ is ``pointed."). In particular, if $C$ is a pointed convex cone, then $\leq_C$ is a partial order which is compatible with the linear structure on $\R^m$.
\end{rem}

\begin{rem}
It is easy to check that $\succeq$ respects the complete preferences $\succeq_1, \ldots, \succeq_m$ on individual assets described in (i). In other words, for every $i\in\cb{1,\ldots,m}$, $X\in L_m^\infty$, and $Z_i\in L^\infty$,
\begin{equation}
X_i \succeq_i Z_i \;\Rightarrow\; X\succeq (X_1, \ldots, X_{i-1}, Z_i, X_{i+1}, \ldots, X_m).
\end{equation}
This is thanks to $\R^m_+\subseteq C$.
\end{rem}

\begin{rem}\label{motivation}
The choice of the componentwise structure of the vector loss function in \eqref{topelement} is justified by the following reasons.

\begin{enumerate}[(i)]
\item For each asset $i$, one could consider a more general loss function $\ell_i$ that depends on the vector $x \in \R^m$ but not only on the component $x_i$. However, the interconnectedness of the components of a portfolio $x=X(\omega)$ at time $t$ will be modeled in Section~\ref{market} by the prevailing exchange rates $\mathcal{C}_t(\omega)$ and trading constraints $\mathcal{D}_t(\omega)$, and thus, will be included in the market risk measure.

\item One could also consider vector loss functions $\ell \colon \R^d\rightarrow\R^m$ with $d>m$. The dimension reduction, which is motivated by allowing only $m$ of the $d$ assets to be used as eligible assets for risk compensation, is modeled by forcing liquidation into $L_{m}^{\infty}$ in Definition~\ref{marketextension} of the market risk measure. This includes the case where a large number of assets $d$ are denoted in a few ($m<d$) currencies, the currencies are used as eligible assets, and the loss functions are just defined for each of the $m$ currencies (but not for each asset individually).

\item On the other hand, the use of a vector loss function in the present paper is already more general than working under the assumption that there is a complete risk preference for multivariate positions (as in \citealt{BR06}) which even has a von Neumann - Morgenstern representation given by a real-valued loss function on $\R^m$ as in \cite{CampiOwen11} (see Example~\ref{Cexamples}(ii) below).
\end{enumerate}
\end{rem}

\begin{example}\label{Cexamples}
We consider the following examples of comparison rules for different choices of $C$.
\begin{enumerate}[(i)]
\item If $C=\R^{m}_{+}$, then $\leq_C=\leq$ corresponds to the componentwise ordering of the expected loss vectors. In this case, we simply have $\succeq = \succeq_1\times\ldots\times\succeq_m$.

\item If $C$ is a halfspace of the form $C=\cb{x\in\R^m\mid w^\mathsf{T} x\geq 0}$ for some $w\in\R^m_+\setminus\cb{0}$, then
\begin{equation}
X\succeq Z\; \Leftrightarrow\; \E\sqb{L(-X)}\leq \E\sqb{L(-Z)},
\end{equation}
where $x\mapsto L(x)\coloneqq \sum_{i=1}^m w_i\ell_{i}(x_i)$ is a multivariate real-valued loss function as in \citet[Example~2.10]{CampiOwen11}. In this case, $\succeq$ is a complete preference relation.

\item If $C$ is a polyhedral convex set of the form $C=\cb{x\in\R^m \mid Ax\geq b}$ for some $A\in\R^{n\times m}_+$, $b\in\R^n$, $n\geq 1$ (with $b_j = 0$ for some $j\in\cb{1,\ldots,n}$), then
\begin{equation}
X\succeq Z\; \Leftrightarrow\; \E\sqb{A(\ell(-X)-\ell(-Z))}\leq_{\R^n_+} b,
\end{equation}
which is a system of linear inequalities.
\end{enumerate}
\end{example}

\subsection{The shortfall risk measure and its set optimization formulation}\label{shortfall-sec}

Based on the incomplete preference relation $\succeq$ described in Section~\ref{incomplete-sec}, we define the shortfall risk measure next. To that end, for each $i\in\cb{1,\ldots,m}$, let $z^0_i \in \R$ such that $x^0_i\coloneqq \ell_i(-z^0_i)\in \interior\ell_i(\R)$. The point $z^0=(z^0_1,\ldots,z^0_m)^\mathsf{T}$ will be used as a \emph{deterministic benchmark} for multivariate random positions and  $x^0=(x^0_1,\ldots,x^0_m)^\mathsf{T}$ is the corresponding \emph{threshold value} for expected losses. Throughout, we assume that $x^0 = 0$. This is without loss of generality as otherwise one can shift the loss function and work with $x\mapsto \tilde{\ell}(x)=\ell(x)-x^0$. Recalling \eqref{compC} and \eqref{exploss}, note that
\begin{equation}
X\succeq z^0 \;\Leftrightarrow\; \E\sqb{\ell(-X)}\leq_C 0 \;\Leftrightarrow\; 0\in \E\sqb{\ell(-X)}+C\;\Leftrightarrow\; \E\sqb{\ell(-X)}\in - C. 
\end{equation}

The shortfall risk of a multivariate position $X\in L_m^\infty$ is defined as the set of all deterministic portfolio vectors $z\in\R^m$ that make $X+z$ preferable to the benchmark $z^0$.

\begin{defn}\label{set-srm-def}
The function $R_{\ell} \colon L_{m}^{\infty}\to\P(\R^{m})$ defined by
\begin{equation}
R_{\ell}(X)=\cb{z\in\R^m\mid X+z\succeq z^0 }
=\cb{z\in\R^{m}\mid\E\sqb{\ell(-X-z)}\in -C}
\end{equation}
is called the \emph{shortfall risk measure} (on $L_{m}^{\infty}$ with comparison rule $C$).
\end{defn}

In other words, the shortfall risk of $X\in L_m^\infty$ is the set of all vectors $z\in\R^m$ for which $X+z$ has a ``small enough" expected loss vector.

\begin{prop}\label{set-srm-lsc}
The shortfall risk measure $R_\ell$ satisfies the following properties:
\begin{enumerate}[(i)]
\item \textbf{Monotonicity:} $Z\geq X$ implies $R_\ell(Z)\supseteq R_\ell(X)$ for every $X, Z\in L_{m}^{\infty}$.
\item \textbf{Translativity:} $R_{\ell}(X+z)=R_{\ell}(X)-z$ for every $X\in L_{m}^{\infty}$, $z\in \R^{m}$.
\item \textbf{Finiteness at $0$:} $R_{\ell}(0)\notin\{\emptyset,\R^{m}\}$.
\item \textbf{Convexity:} $R_{\ell}(\lambda X+(1-\lambda)Z)\supseteq \lambda R_{\ell}(X)+(1-\lambda)R_{\ell}(Z)$ for every $X, Z\in L_{m}^{\infty}$, $\lambda\in(0,1)$.
\item \textbf{Weak$^\ast$-closedness:} The set $\graph R_\ell\coloneqq\{(X,z)\in L_{m}^{\infty}\times \R^{m}\mid z\in R_\ell(X)\}$ is closed with respect to the product of the $\text{weak}^{\ast}$ topology $\sigma(L_{m}^{\infty},L_{m}^{1})$ and the usual topology on $\R^{m}$.
\end{enumerate}
\end{prop}

\begin{rem}\label{interpret-rm}
The properties in Proposition~\ref{set-srm-lsc} make $R_\ell$ a sensible measure of risk for multivariate positions in the sense that every portfolio $z\in R_\ell(X)$ can compensate for the risk of $X\in L_m^\infty$. Let us comment on the financial interpretations of these properties. Monotonicity ensures that a larger position (with respect to componentwise ordering) is \emph{less} risky, that is, it has a \emph{larger} set of risk compensating portfolios. Translativity is the requirement that a deterministic increment to a position reduces each of its risk compensating portfolios by the same amount. Finiteness at $0$ guarantees that the risk of the zero position can be compensated by at least one but not all eligible portfolios. With the former two properties, it even guarantees \emph{finiteness everywhere} in the sense that $R_\ell(X)\notin\{\emptyset,\R^{m}\}$ for every $X\in L_{m}^{\infty}$. Convexity can be interpreted as the reduction of risk by diversification. Finally, weak$^\ast$-closedness is the set-valued version of the weak$^\ast$-lower semicontinuity property (as in \citealt{fs:sf}) for scalar risk measures.
\end{rem}

Set-valued functions satisfying the properties in Proposition~\ref{set-srm-lsc} are called \emph{(set-valued) \mbox{(weak$^\ast$-)} closed convex risk measures} and are studied in \cite{hh:duality}, \cite{hhr:setval}. An immediate consequence of these pro\-perties is that the values of a closed convex risk measure belong to the collection 
\begin{equation}\label{conveximage}
\G_m\coloneqq \G(\R^{m},\R^{m}_{+})\coloneqq\{A\subseteq \R^{m}\mid A=\cl\co(A+\R^{m}_{+})\},
\end{equation}
where $\cl$ and $\co$ denote the closure and convex hull operators, respectively. It turns out that $\G_m$ is a convenient image space\footnote{The phrase ``image space" for a set-valued function refers to the set (subset of a power set) where the function maps into. This set is not a linear space in general. In particular, $\G_m$ is a conlinear space in the sense of \cite{andreasduality}: It is closed under the closure of the Minkowski addition, and it is closed under multiplication by nonnegative scalars (with the convention $0\emptyset=\R^{m}_{+}$).} to study set optimization, see \cite{andreasduality}. In particular, it is an order complete lattice when equipped with the usual superset relation $\supseteq$. We have the following infimum and supremum formulae for every nonempty subset $\mathcal{A}$ of $\G_m$:
\begin{equation}\label{infsup}
\inf_{(\G_m,\supseteq)}\mathcal{A}=\cl\co\bigcup_{A\in\mathcal{A}}A,\quad \sup_{(\G_m,\supseteq)}\mathcal{A}=\bigcap_{A\in\mathcal{A}}A.
\end{equation}
The infimum formula is motivated by the fact that the union of closed (convex) sets is not necessarily closed (convex). We also use the conventions $\inf_{(\G_m,\supseteq)}\emptyset = \emptyset$ and $\sup_{(\G_m,\supseteq)}\emptyset = \R^{m}$.

Note that $C\in\G_m$ with $0$ being a boundary point of it. If $C=\R^{m}_{+}$, then the shortfall risk measure $R_{\ell}$ becomes a trivial generalization of the scalar shortfall risk measure (see Definition~\ref{srm-def}) in the sense that
\begin{equation}
R_{\ell}(X)=(\rho_{\ell_{1}}(X_{1}), \ldots,\rho_{\ell_{m}}(X_{m}))^{\mathsf{T}}+\R^{m}_{+},
\end{equation}
for every $X\in L_{m}^{\infty}$. In general, such an explicit representation of $R_{\ell}$ may not exist. However, given $X \in L_{m}^{\infty}$, one may write
\begin{equation}\label{set-primal-shortfall}
R_{\ell}(X)=\inf_{(\G_m,\supseteq)}\cb{z+\R_{+}^{m}\mid 0\in \mathbb{E}\sqb{\ell(-X-z)}+C,\; z\in\R^{m}},
\end{equation}
that is, $R_{\ell}(X)$ is the optimal value of the set minimization problem
\begin{equation}\label{setoptimization}
\text{minimize} \quad \Phi(z) \quad \text{subject to} \quad 0\in \Psi(z),\; z\in\R^{m},
\end{equation}
where $\Phi \colon \R^{m} \to \G_m$ and $\Psi \colon \R^{m} \to \G_m$ are the (set-valued) objective function and constraint function, respectively, defined by
\begin{equation}
\Phi(z)=z+\R^{m}_{+}, \quad \quad \Psi(z) = \mathbb{E}\sqb{\ell(-X-z)}+C.
\end{equation}
Here, it is understood that $\Psi(z) = \emptyset$ whenever $\E\sqb{\ell(-X-z)} = +\infty$.

A Lagrange duality theory for problems of the form \eqref{setoptimization} is developed in \cite{hl:lagrange}. Using this theory, we will compute the Lagrangian dual problem for $R_{\ell}(X)$. It will turn out that, after a change of variables, the dual objective function gives rise to another class of set-valued convex risk measures, called \emph{divergence risk measures}. We introduce these risk measures separately in Section~\ref{divergence-sec} first, and the duality results are deferred to Section~\ref{lagrange-sec}.

\subsection{Divergence risk measures}\label{divergence-sec}

In this section, we introduce divergence risk measures as decision-making problems of the investor about the level of consumption of a multivariate random position. The relationship between shortfall and divergence risk measures will be formulated in Section~\ref{lagrange-sec}.
 
Suppose that the investor with random portfolio $X\in L_m^\infty$ wants to choose a deterministic portfolio $z \in \R^m$ to be received at initial time. Hence, she will hold $X-z$ at terminal time. She has two competing objectives:
\begin{enumerate}[(i)]
\item \textbf{Maximizing consumption: }The investor wants to maximize her immediate consumption $z$, or equivalently, minimize $-z$. The optimization is simply with respect to the componentwise ordering of portfolio vectors. 

\item \textbf{Minimizing loss: }The investor wants to minimize the expected loss $\E\sqb{\ell(-X+z)}$ of the remaining random position $X-z$. In this case, the expected loss vectors are compared with respect to the set $C$ as in (iii) of Section~\ref{incomplete-sec}.
\end{enumerate}
These two objectives can be summarized as the following set minimization problem where the objective function maps into $\G_{2m}$: 
\begin{equation}
\text{minimize}\quad \begin{pmatrix} -z+\R^{m}_{+} \\ \mathbb{E}\sqb{\ell(-X+z)}+C\end{pmatrix} \quad \text{subject to}\quad  z\in\R^{m}.
\end{equation}
Here and in Definition~\ref{div-defn} below, the value of the objective function is understood to be $\emptyset$ if $\E\sqb{\ell(-X-z)}\notin\R^m$. On the other hand, the investor combines these competing objectives by means of a \emph{relative weight vector} $r\in\R^m_{+}$: For each asset $i\in\cb{1,\ldots,m}$, $r_i$ is the relative weight of the expected loss $\E\sqb{\ell_i(-X_i+z_i)}$ with respect to $-z_i$. As a result, she solves the ``partially scalarized" problem
\begin{equation}
\text{minimize} \quad -z+\R^{m}_{+}+r\cdot\of{\mathbb{E}\sqb{\ell(-X+z)}+C} \quad  \text{subject to} \quad z\in\R^{m}.
\end{equation}
The optimal value of this problem is defined as the divergence risk of $X$. 

\begin{defn}\label{div-defn}
Let $r\in\R^m_{+}$. The function $D_{\ell,r} \colon L_{m}^{\infty}\to\G_m$ defined by
\begin{equation}
D_{\ell,r}(X)=\inf_{(\G_m,\supseteq)}\cb{-z+r\cdot(\mathbb{E}\sqb{\ell(-X+z)}+C)\mid z\in\R^{m}},
\end{equation}
is called the \emph{divergence risk measure} with relative weight vector $r$.
\end{defn}

Apparently, for some values of $r\in\R^m_{+}$, the optimization problem is unbounded and one has $D_{\ell,r}(X)= \R^m$ for every $X\in L_m^\infty$. In Proposition~\ref{setvaluedoce}, we will characterize the set of all values of $r$ for which $D_{\ell,r}$ has finite values and indeed is a closed convex risk measure, that is, it satisfies the five properties in Proposition~\ref{set-srm-lsc}.

\begin{rem}
In the one-dimensional case $m=1$,  one has $D_{\ell,r}(X)=\delta_{\ell,r}(X)+\R_+$, where $\delta_{\ell,r}(X)=\inf_{z\in\R\colon\E\sqb{\ell(-X+z)}<+\infty}\of{-z+r\E\sqb{\ell(-X+z)}}$ is the divergence risk measure as in Definition~\ref{divergencermdefn}. In the literature (see \citealt{oldoce,bt:oce}), only the case $r=1$ is considered in the definition of divergence risk measure (or \emph{optimized certainty equivalent} for $-\delta_{\ell,1}$). In \cite{bt:oce}, the general case $r>0$ is simply treated with a scaled loss function $r\ell$ since $\delta_{\ell,r}=\delta_{r\ell,1}$. (This simplification is not possible in the multidimensional case $m>1$ as $r\in\R^m_{+}$ is multiplied by the set $\E\sqb{\ell(-X+z)}+C$ but not only the vector $\E\sqb{\ell(-X+z)}$.)

However, in our treatment, $\delta_{\ell,r}$ is interpreted as a weighted sum scalarization of a bi-objective optimization problem and this problem is, in turn, characterized by the whole family $\of{\delta_{\ell,r}}_{r\in\R_{+}}$ of divergence risk measures. This interpretation is an original contribution of the present paper to the best of our knowledge.
\end{rem}

\subsection{The Lagrange dual formulation of the shortfall risk measure}\label{lagrange-sec}

This section formulates one of the main results of the paper, Theorem~\ref{srmdrmconnection}: The shortfall risk measure can be written as the intersection, that is, the set-valued supremum (see \eqref{infsup}), of divergence risk measures indexed by their relative weight vectors. 

The result is derived in Section~\ref{setvaluedproof} using the recent Lagrange duality in \cite{hl:lagrange} applied to the shortfall risk measure as the primal problem. The Lagrange duality is reviewed in Section~\ref{lagrange-app}. The result of its direct application to the shortfall risk measure is stated in Lemma~\ref{set-srm-simple-dual}, followed by a change of variables provided in Lemma~\ref{changeofvariable}. This additional latter step is essential in obtaining divergence risk measures in the (reformulated) dual problem.

Recall that the conjugate function of $\ell_i$ is denoted by $g_i$, which is a divergence function in the sense of Definition~\ref{divergence}. The \emph{vector divergence function} $g \colon \R^m \to \R^m\cup\{+\infty\}$ is defined by
\begin{equation}\label{vectordomain}
g(y) =\begin{cases}\of{g_{1}(y_1), \ldots, g_{m}(y_m)}^{\mathsf{T}} &\text{if }y \in \dom g\coloneqq\bigtimes_{i=1}^m\dom g_{i},\\
+\infty & \text{else}.
	\end{cases}
\end{equation}
In view of Remark~\ref{divergenceindex}, given $r\in\R^m_+$, we define 
\begin{equation}
g_r(y)=((g_1)_{r_1}(y_1),\ldots,(g_m)_{r_m}(y_m))^{\mathsf{T}}
\end{equation}
for each $y\in\R^m$ and set $\dom g_r = \bigtimes_{i=1}^m \dom (g_{i})_{r_i}$. Moreover, for $r\in\R^m_{++}$, we write $\frac{1}{r}\coloneqq(\frac{1}{r_1},\ldots,\frac{1}{r_m})^{\mathsf{T}}$.

\begin{thm}\label{srmdrmconnection}
For every $X\in L_{m}^{\infty}$,
\begin{equation}\label{shortfallmaxdivergence1}
R_{\ell}(X)=\sup_{(\G_m,\supseteq)}\cb{D_{\ell,r}(X)\mid r\in \R^m_{+},1\in\dom g_r}
=\bigcap_{r\in \R^m_{+}\colon 1\in \dom g_r}D_{\ell,r}(X).
\end{equation}
In particular, if $\dom\ell\coloneqq\bigtimes_{i=1}^m\dom\ell_i=\R^m$, then
\begin{align}\label{shortfallmaxdivergence}
R_{\ell}(X)=\sup_{(\G_m,\supseteq)}\cb{D_{\ell,r}(X)\mid r\in \R^m_{++}, \frac{1}{r}\in \dom g}=\bigcap_{r\in \R^m_{++}\colon \frac{1}{r}\in \dom g}D_{\ell,r}(X).
\end{align}
\end{thm}

\begin{rem}\label{solconcept}
Theorem~\ref{srmdrmconnection} shows that the shortfall risk measure can be computed as a set-valued supremum over divergence risk measures. However, in general, there is no single $r \in\R^m_{+}$ with $ 1\in\dom g_r$ which yields this supremum, that is, the supremum is not attained at a single argument. Instead, one could look for a set $\Gamma \subseteq\R^m_{+}$ with $1\in\dom g_r$ for every $r\in\Gamma$ that satisfies the following two conditions: \eqref{shortfallmaxdivergence1} holds with the intersection running through all $r\in\Gamma$, and each $D_{\ell,r}(X)$ with $r \in \Gamma$ is a maximal element of the set $\cb{D_{\ell,r}(X) \mid r\in \R^m_{+}, 1\in \dom g_r}$ with respect to $\supseteq$. This corresponds to the  solution concept for set optimization problems introduced in \cite{heydeloehne11} (see also \citealt[Definition~3.3]{hl:lagrange}) and will be discussed for the entropic risk measure in Section~\ref{entropicexample} together with the precise definition of a maximal element.
\end{rem}

For every $r \in \R^{m}_{+}$, define a function $\delta_{\ell,r} \colon L_{m}^{\infty} \to \R^m\cup\cb{-\infty}$ by 
\begin{equation}\label{vectordrm}
\delta_{\ell,r}(X) = (\delta_{\ell_{1},r_{1}}(X_{1}), \ldots, \delta_{\ell_{m},r_{m}}(X_{m}))^{\mathsf{T}}
\end{equation}
whenever the right hand side is in $\R^m$ and $\delta_{\ell,r}(X) = -\infty$ otherwise. Recall that $\delta_{\ell_{i},r_{i}}$ is given by
\begin{equation}\label{EqDualDiv1}
\delta_{\ell_{i},r_{i}}(X_{i})=\inf_{z_{i}\in\R}\of{z_{i}+r_{i}\E\sqb{\ell_{i}(-X_{i}-z_{i})}}
\end{equation}
if $r_i >0$, and we have $\delta_{\ell_{i},0}(X_i)=-\essinf X_i -\sup\dom\ell_i$; see \eqref{divergencerisk}. If $1\in\dom (g_i)_{r_i}$, then $\delta_{\ell_i,r_i}$ is the scalar $(\ell_i,r_i)$-divergence risk measure according to Definition~\ref{divergencermdefn}.

As a byproduct of Theorem~\ref{srmdrmconnection}, we show that a divergence risk measure has a much simpler form in terms of scalar divergence risk measures. 

\begin{prop}\label{setvaluedoce}
Let $r\in \R^m_+$.
\begin{enumerate}[(i)]
\item If $1\in \dom g_r$, then $D_{\ell, r}$ is a closed convex risk measure with the representation
\begin{equation}
D_{\ell,r}(X) =\delta_{\ell,r}(X)+r\cdot C.
\end{equation}
\item Otherwise, $D_{\ell,r}(X)=\R^m$ for every $X\in L^\infty_m$.
\end{enumerate}
In particular, if $\dom\ell=\R^m$, then $D_{\ell,r}$ is a closed convex risk measure if and only if $r\in\R^{m}_{++}$ with $\frac{1}{r}\in\dom g$.
\end{prop}

Note that, in the representation in Proposition~\ref{setvaluedoce}, the dependence on $X\in L_m^\infty$ is only through the vector part $\delta_{\ell,r}(X)$; however, the choice of the relative weight vector $r$ still affects the distortion on the set $C$ through $r\cdot C$.

\begin{rem}
Let us comment on the trade-off between using a shortfall risk measure and a divergence risk measure. According to the representation in Proposition~\ref{setvaluedoce}(i), the divergence risk measure with relative weight vector $r\in \R^m_{+}$ (with $1\in\dom g_r$) has the simple structure $D_{\ell,r}(X)=\delta_{\ell,r}(X)+r\cdot C$, where the dependence on $X$ is solely on the vector $\delta_{\ell,r}(X)$ of scalar divergence risk measures. Hence, the computation of the divergence risk measure reduces to the computation of $m$ scalar risk measures. However, the shortfall risk measure does not possess such a simple representation as a constrained optimization problem. It is a (much) more conservative notion of set-valued risk as $R_{\ell}(X)\subseteq D_{\ell,r}(X)$. On the other hand, the divergence risk measure has the additional parameter $r$: For each asset, the investor has to specify how many units of her expected loss is comparable with one unit of the consumption at initial time.
\end{rem}

\subsection{Dual representations}

In this section, we state representations of shortfall and divergence risk measures in terms of \emph{vector probability measures} and \emph{weight vectors}. Such representations of convex risk measures are called \emph{dual representations}.

In \cite{hh:duality} and \cite{hhr:setval}, it is shown that a closed convex risk measure is indeed characterized by a halfspace-valued function that shows up in its dual representation. We recall this result first. To that end, let $\Q=(\Q_{1},\ldots,\Q_{m})^{\mathsf{T}}$ be an \emph{$m$-dimensional vector probability measure} in the sense that $\Q_{i}$ is a probability measure on $(\Omega,\mathcal{F})$ for each $i\in\{1,\ldots,m\}$. We define $\E^{\Q}\sqb{X}=(\E^{\Q_{1}}\sqb{X_{1}},\ldots,\E^{\Q_{m}}\sqb{X_{m}})^{\mathsf{T}}$ for every $X\in L_{m}^{0}$ such that the components exist in $\R$. We denote by $\mathcal{M}_{m}(\Pr)$ the set of all $m$-dimensional vector probability measures on $(\Omega,\mathcal{F})$ whose components are absolutely continuous with respect to $\Pr$. For $\Q\in\mathcal{M}_{m}(\Pr)$, we set
$\frac{d\Q}{d\Pr}=(\frac{d\Q_{1}}{d\Pr}, \ldots,\frac{d\Q_{m}}{d\Pr})^{\mathsf{T}}$, where, for each $i\in\{1,\ldots,m\}$, $\frac{d\Q_{i}}{d\Pr}$ denotes the Radon-Nikodym derivative of $\Q_{i}$ with respect to $\Pr$. For $w\in\R^{m}_{+}\bs\{0\}$, we define the halfspace
\begin{equation}
G(w) \coloneqq \cb{z \in\R^m \mid w^{\mathsf{T}}z \geq 0}.
\end{equation}

\begin{prop}\label{dualthm}
(\citealt[Theorem 4.2]{hhr:setval}) A function $R \colon L_{m}^{\infty}\to\G_m$ is a closed convex risk measure if and only if, for every $X \in L_{m}^{\infty}$,
\begin{equation}\label{dualrep}
R(X)=\bigcap_{(\Q,w)\in\M_{m}(\Pr)\times (\R^{m}_{+}\bs\{0\})} \of{-\alpha_{R}(\Q,w)+\E^{\Q}\sqb{-X}},
\end{equation}
where $-\alpha_{R}:\M_{m}(\Pr)\times (\R^{m}_{+}\bs \{0\})\to\G_m$ is the penalty function of $R$ defined by
\begin{equation}\label{minpen}
-\alpha_{R}(\Q,w)=\cl\bigcup_{X\in L_{m}^{\infty}}\sqb{R(X)+\of{\E^{\Q}\sqb{X}+G(w)}},
\end{equation}
for each $(\Q,w)\in\M_{m}(\Pr)\times (\R^{m}_{+}\bs\{0\})$.
\end{prop}

As in the scalar case, the penalty function of a closed convex risk measure basically coincides with its Fenchel conjugate. In the set-valued case, the transformation from the set-valued conjugate with dual variables $L^{1}_{m} \times (\R^m_+\bs\{0\})$ to a penalty function with dual variables in $\M_m(\Pr) \times (\R^m_+\bs\{0\}) $ requires some extra care; this procedure is described in detail in \cite{hh:duality}, \cite{hhr:setval}.

In Proposition~\ref{set-drm-penalty} and Proposition~\ref{set-srm-penaltyname}, we present the penalty functions of divergence and shortfall risk measures, respectively. To that end, we define a divergence for a vector probability measure.

\begin{defn}\label{vectordiv}
Let $r \in\R^m_+$ and $\Q\in\M_{m}(\Pr)$. For each $i\in\{1,\ldots,m\}$, let
\begin{equation}
I_{g_{i},r_{i}}(\Q_{i}\mid\Pr)\coloneqq\begin{cases} r_{i}\E\sqb{g_{i}\of{\frac{1}{r_{i}}\frac{d\Q_{i}}{d\Pr}}}&\text{if }r_i>0,\\ \sup\dom\ell_i &\text{if }r_i=0.\end{cases}
\end{equation}
The element $I_{g,r}(\Q\mid\Pr) \in \R^m\cup\{+\infty\}$ defined by
\begin{equation}
I_{g,r}(\Q\mid\Pr)\coloneqq(I_{g_{1},r_{1}}(\Q_{1}\mid\Pr),\ldots, I_{g_{m},r_{m}}(\Q_{m}\mid\Pr))^{\mathsf{T}}
\end{equation}
if $I_{g_{i},r_{i}}(\Q_{1}\mid\Pr)\in\R$ for each $i\in\cb{1,\ldots,m}$, and by $I_{g,r}(\Q\mid\Pr)\coloneqq+\infty$ otherwise is called the \emph{vector $(g,r)$-divergence} of $\Q$ with respect to $\Pr$.
\end{defn}

Note that $I_{g_{i},r_{i}}(\Q_i \mid \Pr)$ is the (scalar) $(g_{i},r_{i})$-divergence of $\Q_{i}$ with respect to $\Pr$, see Definition~\ref{divergencedefn}.

\begin{prop}\label{set-drm-penalty}
Let $r\in \R^m_{+}$ with $1\in \dom g_r$. The penalty function of the divergence risk measure $D_{\ell,r}$ is given by
\begin{equation}\label{setdiver}
-\alpha_{D_{\ell,r}}(\Q,w) = -I_{g,r}(\Q\mid\Pr)+r\cdot C+G(w)
\end{equation}
for each $(\Q,w)\in\M_{m}(\Pr)\times(\R^{m}_{+}\bs\{0\})$ with the convention $-\alpha_{D_{\ell,r}}(\Q,w)=\R^m$ if $I_{g,r}(\Q\mid\Pr)=+\infty$.
\end{prop}

\begin{prop}\label{set-srm-penaltyname}
The penalty function of the shortfall risk measure $R_{\ell}$ is given by
\begin{align}\label{set-srm-penalty}
-\alpha_{R_{\ell}}(\mathbb{Q},w)
&=\left\{z\in\R^{m}\mid w^{\mathsf{T}}z\geq  \sup_{r\in\R^{m}_{+}}\left(-w^{\mathsf{T}}I_{g,r}(\Q\mid\Pr)+\inf_{x\in C}{w^{\mathsf{T}}(r\cdot x)}\right)\right\}\notag  \\
&=\bigcap_{r\in \R^m_{+}\colon 1\in \dom g_r}-\alpha_{D_{\ell,r}}(\Q,w).
\end{align}
for each $(\Q,w)\in\M_{m}(\Pr)\times \R^{m}_{++}$ with the convention $-\alpha_{D_{\ell,r}}(\Q,w)=\R^m$ if \mbox{$I_{g,r}(\Q\mid\Pr)=+\infty$}. In particular, if $\dom \ell =\R^m$, then
\begin{equation}
-\alpha_{R_{\ell}}(\mathbb{Q},w)=\bigcap_{r\in \R^m_{++}\colon \frac{1}{r}\in \dom g}-\alpha_{D_{\ell,r}}(\Q,w).
\end{equation}
\end{prop}

\section{Examples}\label{examples}
\subsection{Set-valued entropic risk measures}\label{entropicexample}

In this section, we assume that the vector loss function $\ell$ of Section~\ref{mainsection} is the \emph{vector exponential loss function} with constant risk aversion vector $\beta\in\R^{m}_{++}$, that is, for each $i\in\{1,\ldots,m\}$ and $x\in\R$,
\begin{equation}
\ell_{i}(x)=\frac{e^{\beta_{i}x}-1}{\beta_{i}},
\end{equation}
which satisfies the conditions in Definition~\ref{loss}. The corresponding vector divergence function $g$ is given by
\begin{equation}
g_{i}(y)=\frac{y}{\beta_{i}}\log y-\frac{y}{\beta_{i}}+\frac{1}{\beta_{i}},
\end{equation}
for each $i\in\{1,\ldots,m\}$ and $y\in\R$. Here and elsewhere, we make the convention $\log y=-\infty$ for every $y\leq 0$.

For convenience, we sometimes use the notation $[x_{i}]_{i=1}^{m}$ for $x=(x_{1},\ldots,x_{m})^{\mathsf{T}}\in\R^{m}$. Let us also define $x^{-1}\coloneqq (x_1^{-1},\ldots,x_m^{-1})^{\mathsf{T}}$ and $\log x \coloneqq (\log x_1, \ldots, \log x_m)^{\mathsf{T}}$ for $x\in\R^m_{++}$, and $\log[A]\coloneqq\{\log x\mid x\in A\}$ for $A \subseteq\R^{m}_{++}$. We will also use $1\coloneqq(1,\ldots,1)^{\mathsf{T}}$ as an element of $\R^m$.

Note that $\interior\ell(\dom\ell)=\ell(\dom\ell)=\ell(\R)=-\beta^{-1}+\R^m_{++}$ so that $0\in\interior\ell(\dom\ell)$. Let $C\in\G_m$ with $0\in\R^m$ being a boundary point of $C$. We call the corresponding shortfall risk measure $R^{\ent}\coloneqq R_{\ell}$ the \emph{entropic risk measure}. The next proposition shows that $R^{\ent}$ has the simple form of ``a vector-valued function plus a fixed set,'' which is, in general, not the case for an arbitrary loss function. Note that the functional $\rho^{\ent}$ in Proposition~\ref{pointplusset} is the vector of scalar entropic risk measures.

\begin{prop}\label{pointplusset}
For every $X\in L_{m}^{\infty}$,
\begin{equation}
R^{\ent}(X) = \rho^{\ent}(X)+C^{\ent},
\end{equation}
where
\begin{align}\label{componentwiseentropic}
&\rho^{\ent}(X) \coloneqq \sqb{\frac{1}{\beta_{i}}\log\E\sqb{e^{-\beta_{i}X_{i}}}}_{i=1}^m,\\
&C^{\ent} \coloneqq-\beta^{-1}\cdot \log\sqb{(1-\beta\cdot  C)\cap \R^{m}_{++}}.\notag 
\end{align}
\end{prop}

Note that the set $\dom g$ defined in \eqref{vectordomain} becomes $\R^{m}_{+}$. Since $\dom \ell = \R^m$, by Proposition~\ref{setvaluedoce}, $D^{\ent}_{r}\coloneqq D_{\ell,r}$ is a closed convex risk measure (divergence risk measure) if $r\in\R^m_{++}$ and $D_{\ell,r}(X)=\R^m$ for every $X\in L_m^\infty$ if $r\in\R^m_+\bs\R^m_{++}$.

\begin{prop}\label{ent-drm}
For every $r \in\R^{m}_{++}$ and $X \in L_{m}^{\infty}$,
\begin{equation}
D^{\ent}_{r}(X)=\rho^{\ent}(X)+\beta^{-1}\cdot (1-r+\log r)+r\cdot C,
\end{equation}
where $\rho^{\ent}(X)$ is defined by \eqref{componentwiseentropic}.
\end{prop}

Recall from \eqref{shortfallmaxdivergence} that $R^{\ent}(\cdot)$ is the supremum of all $D^{\ent}_{r}(\cdot)$ with $r\in\R^{m}_{++}$, that is, for $X\in L_{m}^{\infty}$,
\begin{equation}\label{entropicmaximization}
 R^{\ent}(X)  =\sup_{r\in\R^{m}_{++}}D_{r}^{\ent}(X)=\bigcap_{r\in\R^{m}_{++}}D_{r}^{\ent}(X).
\end{equation}
If $m=1$, then the only choice for $C$ is $\R_{+}$. In this case, one can check that, for $X\in L^{\infty}$,
\begin{equation}
 R^{\ent}(X)=D_{1}^{\ent}(X) = \rho^{\ent}(X)+\R_+.
\end{equation}
In other words, the supremum in \eqref{entropicmaximization} is \emph{attained} at $r=1$. In general, when $m\geq 2$, we may not be able to find some $\bar{r}\in\R_{++}$ for which $R^{\ent}(X)=D^{\ent}_{\bar{r}}(X)$. Instead, we will compute a \emph{solution} to this set maximization problem in the sense of \citet[Definition~3.3]{hl:lagrange}, that is, we will find a set $\Gamma\subseteq \R^m_{++}$ such that
\begin{enumerate}[(i)]
\item $R^{\ent}(X)=\bigcap_{r\in\Gamma}D^{\ent}_{r}(X)$,
\item for each $\bar{r}\in\Gamma$, $D_{\bar{r}}^{\ent}(X)$ is a \emph{maximal element} of the collection $\{D_{r}^{\ent}(X)\mid r\in\R^{m}_{++}\}$ in the following sense:
\begin{equation}\label{maximal}
\forall r\in\R^{m}_{++}:\quad D^{\ent}_{r}(X)\subseteq D^{\ent}_{\bar{r}}(X)\;\Rightarrow\;r=\bar{r}.
\end{equation}
\end{enumerate}
Moreover, the set $\Gamma$ will be independent of the choice of $X$. To that end, by Proposition~\ref{ent-drm}, we can rewrite $D^{\ent}_{r}(X)$ as
\begin{equation}
D^{\ent}_{r}(X)=\rho^{\ent}(X)+\bigcap_{w\in\R^{m}_{+}\bs\{0\}}\{z\in\R^{m}\mid w^{\mathsf{T}}z\geq -(f_{w}(r)+h_{w}(r))\},
\end{equation}
where, for $w\in\R^{m}_{+}\bs\{0\}$, $r\in\R^{m}_{++}$,
\begin{align}
&f_{w}(r)\coloneqq w^{\mathsf{T}}\of{-\beta^{-1}\cdot (1-r+\log r) },\\ &h_{w}(r)\coloneqq -\inf_{x\in C}w^{\mathsf{T}}(r\cdot x)=\sup_{x\in -C}w^{\mathsf{T}}(r\cdot x).\notag 
\end{align}

\begin{lem}\label{bestratio}
Let $w\in\R^{m}_{+}\bs\{0\}$. The function $f_{w}+h_{w}$ on $\R^{m}_{++}$ is either identically $+\infty$ or else it attains its infimum at a unique point $r^{w}\in\R^{m}_{++}$ which is determined by the following property: $r^{w}$ is the only vector $r\in\R^{m}_{++}$ for which $C$ is supported at the point
$\beta^{-1}\cdot (1-r^{-1})$ by the hyperplane with normal direction $r\cdot w$.
\end{lem}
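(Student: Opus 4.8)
The plan is to regard $\phi \coloneqq f_{w}+h_{w}$ as a convex function on the open orthant $\R^{m}_{++}$, establish attainment and uniqueness of its minimizer, and then read the stated support condition directly off the first-order optimality condition $0\in\partial\phi(r^{w})$.

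First I would record the structure. Writing $g_{i}(t)=\frac{w_{i}}{\beta_{i}}(-1+t-\log t)+w_{i}t\,x^{0}_{i}$ for $t>0$, the term $f_{w}(r)=\sum_{i=1}^{m}g_{i}(r_{i})$ is separable and smooth on $\R^{m}_{++}$, and in each coordinate $i$ with $w_{i}>0$ it is strictly convex because $-\log t$ is. Since $\diag(r)w=\diag(w)r$, we have $h_{w}(r)=\sup_{x\in -C}\of{\diag(w)r}^{\mathsf{T}}x$, the support function of $-C$ precomposed with the fixed linear map $\diag(w)$; hence $h_{w}$ is convex and lower semicontinuous, and because $0\in C$ gives $0\in -C$ we get $h_{w}\geq 0$ everywhere, so $\phi\geq f_{w}$. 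Note that both $f_{w}$ and $h_{w}$ are constant in the coordinates $i$ with $w_{i}=0$, so the minimization effectively lives on the coordinates $I\coloneqq\{i\mid w_{i}>0\}\neq\emptyset$, where $f_{w}$ is strictly convex.

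Next I would settle the dichotomy and existence. Writing $h_{w}(r)=-\inf_{y\in C}\of{\diag(w)r}^{\mathsf{T}}y$, this is finite iff $\of{\diag(w)r}^{\mathsf{T}}d\geq 0$ for every $d$ in the recession cone $0^{+}C$, i.e.\ iff $\diag(w)r\in(0^{+}C)^{+}$. If no $r\in\R^{m}_{++}$ satisfies this, then $h_{w}\equiv+\infty$ and $\phi\equiv+\infty$, the first alternative. Otherwise $\phi$ is proper. For attainment I would prove coercivity of $f_{w}$ on $\R^{I}_{++}$: using $x^{0}_{i}\in\interior\ell_{i}(\R)=\of{-\tfrac{1}{\beta_{i}},+\infty}$, i.e.\ $\tfrac{1}{\beta_{i}}+x^{0}_{i}>0$, each $g_{i}(t)=\frac{w_{i}}{\beta_{i}}(-1-\log t)+w_{i}\of{\tfrac{1}{\beta_{i}}+x^{0}_{i}}t$ tends to $+\infty$ both as $t\to 0^{+}$ (the $-\log t$ term) and as $t\to+\infty$ (the positive linear term). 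Thus $f_{w}$ has compact sublevel sets in $\R^{I}_{++}$; since $\phi\geq f_{w}$ is lower semicontinuous, the nonempty sets $\{\phi\leq c\}$ are compact subsets of $\R^{I}_{++}$, on which $\phi$ attains its minimum, and strict convexity along the coordinates in $I$ forces the minimizer $r^{w}$ to be unique.

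Finally I would derive the characterization. Optimality reads $0\in\nabla f_{w}(r^{w})+\partial h_{w}(r^{w})$, where $\of{\nabla f_{w}(r)}_{i}=w_{i}\sqb{\tfrac{1}{\beta_{i}}\of{1-\tfrac{1}{r_{i}}}+x^{0}_{i}}$ and, by Danskin's theorem together with the chain rule through $\diag(w)$, $\partial h_{w}(r)=\diag(w)\co\cb{x^{*}\mid x^{*}\in -C,\ \of{\diag(w)r}^{\mathsf{T}}x^{*}=h_{w}(r)}$. Hence there is a maximizer $x^{*}\in -C$ with $w_{i}\sqb{\tfrac{1}{\beta_{i}}\of{1-\tfrac{1}{r^{w}_{i}}}+x^{0}_{i}+x^{*}_{i}}=0$ for all $i$; on $I$ this gives $x^{*}_{i}=-x^{0}_{i}-\tfrac{1}{\beta_{i}}\of{1-\tfrac{1}{r^{w}_{i}}}$, i.e.\ $-x^{*}=x^{0}+q$ with $q=\sqb{\tfrac{1}{\beta_{i}}\of{1-\tfrac{1}{r^{w}_{i}}}}_{i=1}^{m}$. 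Since $x^{*}\in -C$ this is exactly $q\in -x^{0}+C$, and the maximizer property of $x^{*}$ (rewriting $-C$ as $C$ and translating by $-x^{0}$) says precisely that $q$ minimizes $y\mapsto\of{\diag(r^{w})w}^{\mathsf{T}}y$ over $-x^{0}+C$; that is, $-x^{0}+C$ is supported at $q$ by the hyperplane with normal direction $\diag(r^{w})w$. Conversely, any $r$ meeting this support condition satisfies $0\in\partial\phi(r)$ and hence minimizes the convex $\phi$, so uniqueness of the minimizer identifies it with $r^{w}$.

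The hard part will be the attainment step together with the bookkeeping for $\partial h_{w}$: one must rule out the infimum escaping to the boundary of $\R^{m}_{++}$ or to infinity (this is exactly where $x^{0}_{i}>-\tfrac{1}{\beta_{i}}$ is used), and one must handle the effective domain $(0^{+}C)^{+}$ of $h_{w}$ and the existence of a maximizer $x^{*}$ — most cleanly by arranging $\diag(w)r^{w}\in\ri(0^{+}C)^{+}$ so that the subdifferential chain rule and Danskin's formula apply. One should also keep track of the coordinates with $w_{i}=0$, on which $\phi$ is inert, so that the asserted uniqueness of $r^{w}$ is read on the active coordinates $i\in I$ and is genuine in every coordinate precisely when $w\in\R^{m}_{++}$.
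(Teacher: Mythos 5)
Your proof is correct where the lemma itself is correct (see the second paragraph below), and its skeleton coincides with the paper's: show that $\phi=f_{w}+h_{w}$ is either identically $+\infty$ or attains its infimum at a unique point, then translate the first-order condition $0\in\nabla f_{w}(r^{w})+\partial h_{w}(r^{w})$ into the stated support property of $-x^{0}+C$, the converse direction following because for a convex function the first-order condition is equivalent to global minimality. Where you genuinely differ is the attainment step. The paper runs it through the recession calculus of \cite{rockafellar}: $f_{w}$ has no directions of recession since it is strictly convex with a minimum (Theorem~27.1(d)), $h_{w}0^{+}\geq 0$ since $h_{w}\geq 0$ has finite infimum, and $(f_{w}+h_{w})0^{+}=f_{w}0^{+}+h_{w}0^{+}$ (Theorem~9.3), whence $f_{w}+h_{w}$ attains its minimum. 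You instead prove coercivity by hand: using $x^{0}_{i}>-\frac{1}{\beta_{i}}$, each summand $\frac{w_{i}}{\beta_{i}}(-1-\log t)+w_{i}\of{\frac{1}{\beta_{i}}+x^{0}_{i}}t$ blows up as $t\downarrow 0$ and as $t\to+\infty$, so $f_{w}$ has compact sublevel sets in the active coordinates and the lower semicontinuous function $\phi\geq f_{w}$ attains its minimum there. The two arguments buy the same thing; yours is more elementary and makes explicit exactly where the hypothesis $x^{0}\in\interior\ell(\dom\ell)$ enters, while the paper's is shorter modulo the citations. Both proofs handle $\partial h_{w}$ (chain rule through $\diag(w)$ plus attainment of the sup defining the support function) at a comparable level of informality.

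Two caveats. First, your claim that $h_{w}(r)<+\infty$ \emph{if and only if} $\diag(w)r\in(0^{+}C)^{+}$ is too strong: finiteness of a support function characterizes membership in the barrier cone of $-C$, which can be a \emph{proper} subset of $(0^{+}C)^{+}$ (only their closures agree; think of $C=\cb{(x,y)\mid y\geq x^{2}}$). This is harmless, because the only dichotomy your argument needs --- $\phi\equiv+\infty$ or $\phi$ proper --- is trivial; simply drop the ``if'' direction. Second, your closing remark on the coordinates with $w_{i}=0$ is not mere bookkeeping: when some $w_{i}=0$, $\phi$ is constant in $r_{i}$, so the minimizer is genuinely non-unique and the uniqueness claim of the lemma fails as stated. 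This is in fact a gap in the paper's own proof, which asserts strict convexity of $f_{w}+h_{w}$ for arbitrary $w\in\R^{m}_{+}\bs\{0\}$; statement and proof are airtight only for $w\in\R^{m}_{++}$, or with uniqueness read on the active coordinates, so you were right to flag this rather than suppress it.
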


\begin{prop}\label{ent-setopt}
Using the notation in Lemma~\ref{bestratio}, the set
\begin{equation}
\Gamma\coloneqq\left\{r^{w}\mid w\in\R^{m}_{+}\bs\{0\},\; f_{w}+h_{w}\text{ is proper}\right\}
\end{equation}
is a solution to the maximization problem in \eqref{entropicmaximization} for every $X\in L_{m}^{\infty}$.
\end{prop}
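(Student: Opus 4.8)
The plan is to peel off the $X$-dependence and reduce both clauses in the definition of a solution to statements about the \emph{fixed} family $\{S(r)\}_{r\in\R^{m}_{++}}$, where I set
\[
H_{w}(r):=\cb{z\in\R^{m}\mid w^{\mathsf{T}}z\geq -(f_{w}(r)+h_{w}(r))},\qquad S(r):=\bigcap_{w\in\R^{m}_{+}\bs\{0\}}H_{w}(r).
\]
The representation of $D^{\ent}_{r}$ obtained just before Lemma~\ref{bestratio} reads $D^{\ent}_{r}(X)=\rho^{\ent}(X)+S(r)$, so translating by the finite vector $\rho^{\ent}(X)$ of \eqref{componentwiseentropic} turns the supremum \eqref{entropicmaximization} and every inclusion $D^{\ent}_{r}(X)\subseteq D^{\ent}_{\bar r}(X)$ into the same statement for $S(r)$ and $S(\bar r)$. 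In particular $\Gamma$ comes out independent of $X$, as the statement claims. It therefore suffices to prove: (i${}'$) $\bigcap_{r\in\R^{m}_{++}}S(r)=\bigcap_{r\in\Gamma}S(r)$; and (ii${}'$) for each $\bar r\in\Gamma$, $S(r)\subseteq S(\bar r)$ forces $r=\bar r$.

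The key bookkeeping is to record lower support values. By Proposition~\ref{setvaluedoce} and the computation of $\delta_{g_{i},r_{i}}$ in the preceding proof, $S(r)=v(r)+\diag(r)C$ with $v(r)=[\tfrac{1}{\beta_{i}}(1-r_{i}+\log r_{i})-r_{i}x^{0}_{i}]_{i=1}^{m}$, whence for $w\in\R^{m}_{+}\bs\{0\}$ the value $\sigma(r,w):=\inf_{z\in S(r)}w^{\mathsf{T}}z$ equals $w^{\mathsf{T}}v(r)+\inf_{x\in C}w^{\mathsf{T}}\diag(r)x=-(f_{w}(r)+h_{w}(r))$, and $S(r)=\{z\mid w^{\mathsf{T}}z\geq\sigma(r,w)\ \forall w\in\R^{m}_{+}\bs\{0\}\}$. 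Intersecting over $r$ and swapping quantifiers yields $\bigcap_{r\in\R^{m}_{++}}S(r)=\{z\mid\forall w:\ w^{\mathsf{T}}z\geq\sup_{r}\sigma(r,w)\}$. By Lemma~\ref{bestratio} the inner supremum equals $\sigma(r^{w},w)$ when $f_{w}+h_{w}$ is proper and is $-\infty$ (an empty constraint) when $f_{w}+h_{w}\equiv+\infty$.

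Given this, (i${}'$) is immediate. The inclusion $\bigcap_{r\in\R^{m}_{++}}S(r)\subseteq\bigcap_{r\in\Gamma}S(r)$ holds since $\Gamma\subseteq\R^{m}_{++}$. Conversely, take $z\in\bigcap_{r\in\Gamma}S(r)$ and fix $w$: the constraint in direction $w$ is vacuous when $f_{w}+h_{w}\equiv+\infty$, while if $f_{w}+h_{w}$ is proper then $r^{w}\in\Gamma$, so $z\in S(r^{w})$ gives $w^{\mathsf{T}}z\geq\sigma(r^{w},w)=\sup_{r}\sigma(r,w)$; hence $z\in\bigcap_{r\in\R^{m}_{++}}S(r)$. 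For (ii${}'$), fix $\bar r=r^{w_{0}}\in\Gamma$. If $S(r)\subseteq S(\bar r)$, reading off direction $w_{0}$ gives $\sigma(r,w_{0})\geq\sigma(\bar r,w_{0})$, i.e. $f_{w_{0}}(r)+h_{w_{0}}(r)\leq f_{w_{0}}(\bar r)+h_{w_{0}}(\bar r)$. By Lemma~\ref{bestratio} the right side is the finite minimum of $f_{w_{0}}+h_{w_{0}}$, attained only at $\bar r$; since $f_{w_{0}}(r)+h_{w_{0}}(r)\geq f_{w_{0}}(r)>-\infty$ the inequality forces equality, so $r$ is also a minimizer and uniqueness gives $r=\bar r$.

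The main obstacle is not computational but conceptual: getting the order right, namely that $S(r)\subseteq S(\bar r)$ corresponds to $\sigma(r,\cdot)\geq\sigma(\bar r,\cdot)$ on $\R^{m}_{+}\bs\{0\}$, and handling the directions $w$ with $f_{w}+h_{w}\equiv+\infty$ correctly in the proof of (i${}'$) — these must be recognized as imposing no constraint on the intersection (their support supremum is $-\infty$). Once the support-function dictionary is set up, part (ii${}'$) reduces cleanly to the uniqueness of the minimizer $r^{w_{0}}$ furnished by Lemma~\ref{bestratio}, and no separate convexity or recession argument is needed beyond what that lemma already provides.
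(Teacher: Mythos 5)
Your proof is correct and takes essentially the same route as the paper's: part (i) is obtained by swapping the intersections over $r$ and $w$ and using Lemma~\ref{bestratio} to replace $\inf_{r\in\R^{m}_{++}}\of{f_{w}(r)+h_{w}(r)}$ by its value at $r^{w}$, and part (ii) reads off the support value in the direction $w_{0}$ defining $\bar{r}=r^{w_{0}}$ and invokes the uniqueness of the minimizer from Lemma~\ref{bestratio}. Your additional bookkeeping --- factoring out $\rho^{\ent}(X)$ to make the $X$-independence of $\Gamma$ explicit, recording the tight support values $\sigma(r,w)=-(f_{w}(r)+h_{w}(r))$ via $S(r)=v(r)+\diag(r)C$, and treating directions with $f_{w}+h_{w}\equiv+\infty$ as vacuous constraints --- is a more careful writeup of steps the paper performs implicitly, not a different argument.
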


Finally, we compute the penalty function of $R^{\ent}$ in terms of the \emph{vector relative entropies}
\begin{equation}\label{relent}
H(\Q\mid \Pr)\coloneqq \sqb{\E^{\Q_{i}}\sqb{\log\frac{d\Q_{i}}{d\Pr}}}_{i=1}^{m}
\end{equation}
of vector probability measures $\Q\in\M_{m}(\Pr)$. Thus, the penalty function for the entropic risk measure is of the form ``negative vector relative entropy plus a nonhomogeneous halfspace'' (except for the trivial case).

\begin{prop}\label{ent-dual}
For every $(\Q,w)\in\M_{m}(\Pr)\times(\R^{m}_{+}\bs\{0\})$, we have $-\alpha_{R^{\ent}}(\Q,w)=\R^{m}$ if $h_{w}(r)=+\infty$ for every $r\in\R^m_{++}$, and 
\begin{equation}
-\alpha_{R^{\ent}}(\Q,w)=-\beta^{-1}\cdot H(\Q \mid \Pr) - \beta^{-1} \cdot \log\sqb{(1 - \beta\cdot C) \cap \R^m_{++}} + G(w)
\end{equation}
if $h_{w}$ is a proper function.
\end{prop}

\subsection{Set-valued average values at risks}

In this section, we assume that the vector loss function $\ell$ of Section~\ref{mainsection} is the \emph{(vector) scaled positive part function} with scaling vector $\alpha\in (0,1]^{m}$, that is, for each $i\in\{1,\ldots,m\}$ and $x\in\R$,
\begin{equation}
\ell_{i}(x)=\frac{x^{+}}{\alpha_{i}},
\end{equation}
which satisfies the conditions in Definition~\ref{loss}. The corresponding vector divergence function $g$ is given by
\begin{equation}
g_{i}(y)=\begin{cases}0 & \text{if }y\in\left[0,\frac{1}{\alpha_{i}}\right],\\ +\infty &\text{else},\end{cases}
\end{equation}
for each $i\in\cb{1,\ldots,m}$ and $y\in\R$.

Note that $0\not\in\interior\ell(\dom\ell)=\R^{m}_{++}$ in this example. Hence, let us fix $x^0 \in \R^m_{++}$ and $C\in\G_m$ with $0$ being a boundary point of $C$. We will apply the definitions and results of Section~\ref{mainsection} to the shifted loss function $\tilde{\ell}(x)=\ell(x)-x^0$. The corresponding shortfall risk measure is given by
\begin{equation}
R_{\tilde{\ell}}(X)=\cb{z\in\R^{m}\mid \E\sqb{(z-X)^{+}}\in \alpha\cdot (x^{0}-C)},
\end{equation}
where the positive part function is applied componentwise.

Note that the set $\dom g$ defined in \eqref{vectordomain} becomes $\bigtimes_{i=1}^{m}[0,\frac{1}{\alpha_{i}}]$. Since $\dom \tilde{\ell} = \R^m$, by Proposition~\ref{setvaluedoce}, $D_{\tilde{\ell},r}$ is a closed convex risk measure (divergence risk measure) if $r\in\bigtimes_{i=1}^m [\alpha_i,+\infty)$ and $D_{\tilde{\ell},r}(X)=\R^m$ for every $X\in L_m^\infty$ if $r\in\R^m_+\bs\bigtimes_{i=1}^m [\alpha_i,+\infty)$. In the former case, the divergence risk measure with relative weight vector $r\in \bigtimes_{i=1}^m [\alpha_i,+\infty)$ is given by
$D_{\tilde{\ell},r}(X)=\delta_{\tilde{\ell},r}(X)+r\cdot C$
for $X\in L_{m}^{\infty}$, where, for each $i\in\{1,\ldots,m\}$,
\begin{equation}
 \delta_{\tilde{\ell}_{i},r_{i}}(X_{i})=\inf_{z_{i}\in\R}\of{-z_{i}+\frac{r_{i}}{\alpha_{i}}\E\sqb{(z_{i}-X_{i})^{+}}}-r_{i}x^{0}_{i}.
\end{equation}
When $r=(1,\ldots,1)^{\mathsf{T}}$ and $C=\R^{m}_{+}$, we obtain the set-valued average value at risk in the sense of \citet[Definition~2.1 for $M=\R^{m}$]{hry:avar}, which is given by
\begin{align}
\Avar_{\alpha}(X)\coloneqq &D_{\tilde{\ell},1}(X)+x^{0}\\
=&\sqb{\inf_{z_{i}\in\R}\of{-z_{i}+\frac{1}{\alpha_{i}}\E\sqb{(z_{i}-X_{i})^{+}}}}_{i=1}^{m}+\R^{m}_{+}.\notag 
\end{align}
Hence, our framework offers the following generalization of the set-valued average value at risk as a convex risk measure:
\begin{align}
\Avar_{\alpha,r}(X)\coloneqq & D_{\tilde{\ell},r}(X)+r\cdot x^{0}\\
=&\sqb{\inf_{z_{i}\in\R}\of{-z_{i}+\frac{r_{i}}{\alpha_{i}}\mathbb{E}\sqb{(z_{i}-X_{i})^{+}}}}_{i=1}^{m}+r\cdot C.\notag 
\end{align}
As in the scalar case, this definition even works for $X\in L_{m}^1$.

\section{Market risk measures}\label{market}

The purpose of this section is to propose a method to incorporate the frictions of the market into the quantification of risk. As the first step of the method, it is assumed that there is a ``pure" risk measure $R$ that represents the attitude of the investor towards the assets of the market. This could be one of the utility-based risk measures introduced in Section~\ref{mainsection}. Since the risk measure $R$ does not take into account the frictions of the market, the second step consists of minimizing risk subject to the trading opportunities of the market. More precisely, we minimize (in the sense of set optimization) the value of $R$ over the set of financial positions that can be reached with the given position by trading in the so-called \emph{convex market model}. The result of the risk minimization, as a function of the given position, is called the \emph{market risk measure} induced by $R$.

In the literature, minimization of scalar risk measures subject to trading constraints are considered in \cite{barrieu}. In the multivariate case, market risk measures are introduced in \cite{hhr:setval} and \cite{hry:avar} for the special case of a conical market model. Here, this notion is considered for an arbitrary convex risk measure with the more general convex market model of \cite{PP10} and the possibility of trading constraints and liquidation into fewer assets. The market is described in Section~\ref{marketmodel}. The dual representation result, Theorem~\ref{liquidationtheorem} in Section~\ref{marketmain}, is one of the main contributions of this paper. Finally, in Section~\ref{marketconditions}, we present sufficient conditions under which Theorem~\ref{liquidationtheorem} can be applied to shortfall and divergence risk measures. 

\subsection{The convex market model with trading constraints}\label{marketmodel}

Consider a financial market with $d\in\{1,2,\ldots\}$ assets. We assume that the market has convex transaction costs or nonlinear illiquidities in finite discrete time. Following \cite{PP10}, we use convex solvency regions to model such frictions. To that end, let $T\in\{1,2,\ldots\}$, $\T=\{0,\ldots,T\}$, and $(\F_{t})_{t\in\T}$ a filtration of $(\O,\F,\Pr)$ augmented by the $\Pr$-null sets of $\F$. The number $T$ denotes the time horizon, and $(\F_{t})_{t\in\T}$ represents the evolution of information over time. We suppose that there is no information at time $0$, that is, every $\F_{0}$-measurable function is deterministic $\Pr$-almost surely; and there is full information at time $T$, that is, $\F_T=\F$. For $p\in \cb{0,1,+\infty}$ and $t\in\T$, we denote by $L_{d}^{p}(\F_t)$ the linear subspace of all $\F_t$-measurable random variables in $L_{d}^{p}$.

Let $t\in\T$. By the $\F_t$-measurability of a set-valued function $D \colon \O\to \P(\R^{d})$, it is meant that the graph $\{(\o,y)\in\O\times\R^{d}\mid y\in D(\o)\}$ is $\F_{t}\otimes\B(\R^{d})$-measurable, where $\B(\R^{d})$ denotes the Borel $\sigma$-algebra on $\R^{d}$. For such function $D$, define the set
$L_{d}^{p}(\F_t,D)\coloneqq$$\{Y\in L_{d}^{p}(\F_t)\mid \Pr\{\o\in\O\mid Y(\o)\in D(\o)\}=1\}$ for $p\in\cb{0,1,+\infty}$.

We recall the convex market model of \cite{PP10} next. For each $t \in\T$, let $\C_{t} \colon \O \to\G_d$ be an $\F_{t}$-measurable function such that $\R^{d}_{+}\subseteq \C_{t}(\o)$ and \mbox{$-\R^d_+\cap\C_t(\o)=\{0\}$} for each $t\in\{0, \ldots,T\}$ and $\o\in\O$. The set $\C_{t}$ is called the \emph{(random) solvency region} at time $t$; see \cite{AT07}, \cite{PP10}. It models the bid and ask prices as a function of the magnitude of a trade, for instance, as in \cite{CJP04}, \cite{CR07}, \cite{RS10}; and thus, directly relates to the shape of the order book. More precisely, $\C_t(\o)$ is the set of all portfolios which can be exchanged into ones with nonnegative components at time $t$ when the outcome is $\o$. Convex solvency regions allow for the modeling of temporary illiquidity effects in the sense that they cover nonlinear illiquidities; however, they assume that agents have no market power, and thus, their trades do not affect the costs of subsequent trades.

\begin{example}\label{conical}
An important special case is the conical market model introduced in \cite{kabanov}. Suppose that $\C_t(\o)$ is a (closed convex) cone for each $t\in\T$ and $\o\in\O$. In this case, the transaction costs are proportional to the size of the orders. 
\end{example}

From a financial point of view, it is possible to have additional constraints on the trading opportunities at intermediate times. For instance, trading may be allowed only up to a (possibly state- and time-dependent) threshold level for the \mbox{assets} (\mbox{Example}~\ref{threshold}), or it may be the case that a certain linear combination of the \mbox{trading} units should not exceed a threshold level (Example~\ref{lincomb}). Such constraints are \mbox{modeled} via convex random sets. Given $t\in\{0,\ldots,T-1\}$, let $\D_{t}:\O\to \P(\R^{d})$ be an $\F_{t}$-measurable function such that $\D_{t}(\o)$ is a closed convex set and $0\in \C_{t}(\o)\cap \D_{t}(\o)$ for every $\o\in\O$. Note that $\D_{t}$ does not necessarily map into $\G_d$, and this is why we prefer to work with $\C_t\cap \D_t$ instead of replacing $\C_t$ by $\C_t \cap \D_t$. For convenience, let us also set $\D_T(\o)=\R^d$ for every $\o\in\O$.

\begin{example}\label{threshold}
For each $t\in\{0,\ldots,T-1\}$, suppose that $\D_{t}=\bar{Y}_t - \R^d_+$, for some $\bar{Y}_{t}\in L_{d}^{0}(\F_{t},\R^{d}_{+})$. In this case, trading in asset $i\in\{1,\ldots,d\}$ at time $t\in\{0,\ldots,T-1\}$ may not exceed the level $(\bar{Y}_{t})_{i}$.
\end{example}

\begin{example}\label{lincomb}
For each $t\in\{0,\ldots,T-1\}$, suppose that $\D_{t}=\{y\in\R^{d}\mid A_{t}^{\mathsf{T}}y\leq B_{t}\}$, for some $A_{t}\in L_{d}^{0}(\F_{t},\R^{d}_{+}\bs\{0\})$ and $B_{t}\in L_{1}^{0}(\mathcal{F}_{t},\R_{+})$. In this case, trading in each asset is unlimited but the linear combination of the trading units with the weight vector $A_{t}$ cannot exceed the level $B_{t}$.
\end{example}

The set of all financial positions that can be obtained by trading in the market starting with the zero position is
\begin{equation}\label{freelyavailable}
\K\coloneqq -\sum_{t=0}^{T}L_{d}^{\infty}(\F_{t},\C_{t}\cap\D_t).
\end{equation}
Hence, an investor with a financial position $Y\in L_{d}^{\infty}$ can ideally reach any element of the set $Y+\K$ by trading in the market. However, it may be the case that the risk of the resulting position is evaluated only through a (small) selection of the $d$ assets, in other words, trading has to be done in such a way that the only possibly nonzero components of the resulting position can be in some selected subset of the $d$ assets. Without loss of generality, suppose that liquidation is made into the first $m\leq d$ of the assets. The idea of liquidation is made precise by the notion of liquidation function introduced in Definition~\ref{liquidfunc} below.
Let us introduce the linear operator $B\colon\R^{m}\to\R^{d}$ defined by
\begin{equation}\label{linop}
Bz=(z_{1},\ldots,z_{m},0,\ldots,0)^{\mathsf{T}}.
\end{equation}
We will use the composition of $B$ with random variables in $L_{m}^{0}$. Given $X\in L_{m}^{0}$, $BX$ denotes the element in $L_{d}^{0}$ defined by $(BX)(\omega)=B(X(\omega))$ for $\omega\in\Omega$. The adjoint $B^{*}:\R^{d}\rightarrow\R^{m}$ of $B$ is given by
\begin{equation}\label{adjop}
B^{*}y=(y_{1},\ldots,y_{m})^{\mathsf{T}}.
\end{equation}
Similarly, $B^{*}$ can be composed with random variables in $L_{d}^{0}$. With a slight abuse of notation, we will also use $B^{*}$ in the context of vector probability measures. Given $\mathbb{Q}\in\mathcal{M}_{d}(\mathbb{P})$, we define $B^{*}\mathbb{Q}=(\mathbb{Q}_{1},\ldots,\mathbb{Q}_{m})^{\mathsf{T}}\in\mathcal{M}_{m}(\mathbb{P})$.

\begin{defn}\label{liquidfunc}
The function $\Lambda_{m} \colon L^{\infty}_{d} \to \P(L^\infty_m)$ defined by
\begin{equation}
\Lambda_{m}(Y) = \cb{X \in L^{\infty}_{m} \mid BX \in Y + \K}
\end{equation}
is called the \emph{liquidation function} associated with $\K$.
\end{defn}

Hence, given $Y\in L_{d}^{\infty}$, the set $\Lambda_{m}(Y)$ consists of all possible resulting positions in $Y+\K$ that are already liquidated into the first $m$ assets.

\subsection{Market risk measures and their dual representations}\label{marketmain}

Let us consider a closed convex risk measure $R \colon L_{m}^{\infty}\to\G_m$ which is used for risk evaluation after liquidating the resulting positions into the first $m$ assets. As all the positions in $\Lambda_m(Y)$ are accessible to the investor with position $Y\in L_d^\infty$, the value of $R$ is to be minimized over the set $\Lambda_{m}(Y)$ as the following definition suggests.

\begin{defn}\label{marketextension}
The function $R^{\market}: L_{d}^{\infty}\to\P(\R^{m})$ defined by
\begin{equation}
R^{\market}(Y)\coloneqq \inf_{(\G_m,\supseteq)}\cb{R(X) \mid X \in \Lambda_{m}(Y)} =\cl\co\bigcup_{X\in\Lambda_{m}(Y)}R(X),
\end{equation}
is called the \emph{market risk measure} induced by $R$.
\end{defn}

\begin{rem}
In the case of the conical market model described in Example~\ref{conical}, when $\D_{t}(\o)=\R^{d}$ for each $\o\in\O$ and $t\in\{0,\ldots,T\}$, and 
no liquidation at $t=T$ is considered \mbox{($m=d$),} Definition~\ref{marketextension} recovers the notion of market-extension (with closed values) given in \citet[Definition~2.8, Remark~2.9]{hry:avar}.
\end{rem}

Recall that a closed convex risk measure $R\colon L_m^\infty\to\G_m$ is defined by the five properties in Proposition~\ref{set-srm-lsc}. For the market risk measure, these properties need to be rewritten with obvious changes as the function is now defined on $L_d^\infty$. (For instance, the translativity of $R^{\market}$ reads as $R^{\market}(Y+Bz)=R^{\market}(Y)-z$ for every $Y\in L_d^\infty$ and $z\in\R^m$.) The next proposition shows that the market risk measure is a closed convex risk measure except for a finiteness condition and weak$^\ast$-closedness.

\begin{prop}\label{markettransition}
The market risk measure $R^{\market}$ is monotone, translative and convex, and it satisfies $R^{\market}(0)\neq \emptyset$. In addition, the convex hull operator can be dropped from Definition~\ref{marketextension}, that is, for $Y\in L_d^{\infty}$,
\begin{equation}\label{co-dropped}
R^{\market}(Y)=\cl\bigcup_{X\in\Lambda_{m}(Y)}R(X).
\end{equation}
\end{prop}

To recover weak$^\ast$-closedness, we define the closed version of $R^{\market}$ via the notion of \emph{closed hull}.

\begin{defn}\label{closedhull}
The \emph{closed hull} $\cl F$ of a function $F \colon L_{d}^{\infty}\to\G_m$ is the pointwise greatest weak$^\ast$-closed function minorizing it, that is, if $F^{\prime} \colon L_{d}^{\infty}\to\G_m$ is a weak$^{*}$-closed function such that $F(Y)\subseteq F^{\prime}(Y)$ for all $Y\in L_{d}^{\infty}$, then we have $(\cl F)(Y)\subseteq F^{\prime}(Y)$ for every $Y\in L_{d}^{\infty}$. The closed hull $\cl R^{\market}$ of $R^{\market}$ is called the \emph{closed market risk measure} induced by $R$.
\end{defn}

One can check that monotonicity, translativity and convexity are preserved under taking the closed hull. Hence, in view of Proposition~\ref{markettransition}, the closed market risk measure induced by a closed convex risk measure $R \colon L_{m}^{\infty}\rightarrow\G_m$ is a closed convex risk measure if $(\cl R^{\market})(0)\neq\R^m$. Theorem~\ref{liquidationtheorem} below gives a dual representation of the closed market risk measure in terms of the penalty function of the original risk measure $R$ under the assumption of finiteness at zero. The special case of no trading constraints in a convex (conical) market model is given in Corollary~\ref{corollary_convex market} (Corollary~\ref{liquidationcorollary}). The set of dual variables to be used in the results below is given by
\begin{equation}
\mathcal{W}_{m,d}\coloneqq\mathcal{M}_{d}(\mathbb{P})\times ((\R^{m}_{+}\bs\{0\})\times\R^{d-m}_{+}).
\end{equation}
We will also make use of the homogeneous halfspaces $G(w)\coloneqq\{y\in\R^{d}\mid w^{\mathsf{T}}y\geq 0\}$ for $w\in\R^{d}_{+}\bs\{0\}$.

\begin{thm}\label{liquidationtheorem}
Suppose that $R\colon L_{m}^{\infty}\to\G_m$ is a closed convex risk measure with penalty function $-\alpha_{R}\colon\M_{m}(\Pr)\times(\R^{m}_{+}\bs\{0\})\to\G_m$, see Proposition~\ref{dualthm}. Assume that $(\cl R^{\market})(0)\neq\R^m$. Then the closed market risk measure $\cl R^{\market}\colon L_{d}^{\infty}\to\G_m$ is also a closed convex risk measure, and it has the following dual representation: For every $Y\in L_{d}^{\infty}$,
\begin{align}
(\cl R^{\market})(Y) =\bigcap_{(\Q,w)\in \mathcal{W}_{m,d}}\sqb{-\alpha_{\cl R^{\market}}(\Q,w)+B^{*}\of{\of{\E^{\Q}\sqb{-Y}+G(w)}\cap B(\R^{m})}},
\end{align}
where $-\alpha_{\cl R^{\market}} \colon \mathcal{W}_{m,d}\to\G_m$ is defined by
\begin{align}\label{alph}
\alpha_{\cl R^{\market}}(\Q,w)=&-\alpha_{R}(B^{*}\Q,B^{*}w)\notag \\
& +\sum_{t=0}^{T}\cl\bigcup_{U^{t}\in L_{d}^{\infty}(\F_{t}, \C_{t}\cap \D_{t})} B^{*}\of{\of{\E^{\Q}\sqb{U^{t}}+G(w)}\cap B(\R^{m})}.
\end{align}
\end{thm}

Recall that the \emph{recession cone} of a nonempty convex set $C\subseteq\R^{d}$ is the convex cone $0^{+}C\coloneqq \{y\in\R^{d}\mid y+C\subseteq C\}$ and the \emph{positive dual cone} of a nonempty convex cone $K\subseteq\R^{d}$ is the convex cone $K^{+}\coloneqq\{y\in\R^{d}\mid\forall k\in K\colon y^{\mathsf{T}}k\geq 0\}$; see \citet[Section~8, p.~61]{rockafellar} and \citet[Section~1.1, p.~7]{zalinescu}, for instance.

\begin{cor}\label{corollary_convex market}
Under the assumptions of Theorem~\ref{liquidationtheorem}, suppose that $\mathcal{D}_{t}(\o)=\R^{d}$ for each $\o\in\O$ and $t\in\T$. Then $-\alpha_{\cl R^{\market}}$ given by \eqref{alph} is concentrated on the set
\begin{equation}
\mathcal{W}_{m,d}^{\convex}\coloneqq\Big\{(\mathbb{Q},w)\in\mathcal{W}_{m,d}\mid \forall t\in\T\colon w\cdot \mathbb{E}\sqb{\frac{d\mathbb{Q}}{d\mathbb{P}}\;\middle\vert\;\mathcal{F}_{t}}\in L_{d}^{1}(\mathcal{F}_{t},(0^{+}\mathcal{C}_{t})^{+}) \Big\},
\end{equation}
where, for each $t\in\T$, $(0^{+}\mathcal{C}_{t})^{+}\colon\Omega\rightarrow\G_d$ is the measurable function defined by $(0^{+}\mathcal{C}_{t})^{+}(\omega)\coloneqq (0^{+}\mathcal{C}_{t}(\omega))^{+}$. 
\end{cor}

In other words, we have $-\alpha_{\cl R^{\market}}(\mathbb{Q},w)=\R^{m}$ for $(\mathbb{Q},w)\in\mathcal{W}_{m,d}\bs\mathcal{W}_{m,d}^{\convex}$ within the setting of the previous result.

\begin{cor}\label{liquidationcorollary}
Under the assumptions of Theorem~\ref{liquidationtheorem} suppose that $\mathcal{D}_{t}(\o)=\R^{d}$ for each $\o\in\O$ and $t\in\T$, and that the market model is conical as in Example~\ref{conical}. Consider the set
\begin{equation}
\mathcal{W}_{m,d}^{\cone}\coloneqq\Big\{(\mathbb{Q},w)\in\mathcal{W}_{m,d}\mid \forall t\in\T\colon w\cdot\mathbb{E}\sqb{\frac{d\mathbb{Q}}{d\mathbb{P}}\;\middle\vert\;\mathcal{F}_{t}}\in L_{d}^{1}(\mathcal{F}_{t},\mathcal{C}_{t}^{+}) \Big\},
\end{equation}
where, for each $t\in\T$, $\mathcal{C}_{t}^{+}\colon\Omega\rightarrow\G_d$ is the measurable function defined by
$\mathcal{C}_{t}^{+}(\omega)\coloneqq(\mathcal{C}_{t}(\omega))^{+}$. Then, \eqref{alph} reduces to
\begin{equation}\label{penalty}
-\alpha_{\cl R^{\market}}(\mathbb{Q},w)=\begin{cases}-\alpha_{R}(B^{*}\mathbb{Q},B^{*}w) & \text{if }(\mathbb{Q},w)\in\mathcal{W}_{m,d}^{\cone},\\ \R^{m} & \text{else}, \end{cases}
\end{equation}
for each $(\mathbb{Q},w)\in\mathcal{W}_{m,d}$; hence, for every $Y\in L_{d}^{\infty}$,
\begin{align}
(\cl R^{\market})(Y)=\bigcap_{(\mathbb{Q},w)\in \mathcal{W}_{m,d}^{\cone}}\sqb{-\alpha_{R}(B^{*}\mathbb{Q},B^{*}w)+B^{*}\left(\left(\mathbb{E}^{\mathbb{Q}}\sqb{-Y}+G(w)\right)\cap B(\R^{m})\right)}.
\end{align}
\end{cor}

The proofs of Theorem~\ref{liquidationtheorem}, Corollary~\ref{corollary_convex market}, Corollary~\ref{liquidationcorollary} above are given in Section~\ref{marketproof}. They rely on the observation that, roughly speaking, the market risk measure is the \emph{(set-valued) infimal convolution} of the original risk measure and the (set-valued) \emph{indicator functions} of the convex sets $L_{d}^{\infty}(\mathcal{F}_{t},\mathcal{C}_{t}\cap \mathcal{D}_{t})$, $t\in\T$. This technical observation is discussed in Section~\ref{marketproof}, where the definitions of these notions are also given.

\subsection{Market risk measures induced by shortfall and divergence risk measures}\label{marketconditions}

In this section, we present sufficient conditions that guarantee the finite-valuedness condition $(\cl R^{\market})(0)\neq \R^m$ for the closed market risk measures induced by shortfall and divergence risk measures. Once this property is established, these closed market risk measures are closed convex risk measures and their dual representations are provided by Theorem~\ref{liquidationtheorem}. For simplicity, we assume that the market model is conical in the sense of Example~\ref{conical}.

\begin{assumption}\label{boundedmarket}
Suppose that the solvency cones of the market model share a common supporting halfspace in the sense that there exists $\bar{w}\in\R^{d}_{+}\bs\{0\}$ such that for $\mathbb{P}$-almost every $\omega\in\Omega$ and every $t\in\T$, $\inf_{y\in \mathcal{C}_{t}(\omega)}\bar{w}^{\mathsf{T}}y>-\infty$, or equivalently,
$\bar{w}\in (\mathcal{C}_{t}(\omega))^{+}$.
\end{assumption}

\begin{rem}
Assumption~\ref{boundedmarket} states the existence of a halfspace $G(\bar{w})=\{z\in\R^{d}\mid \bar{w}^{\mathsf{T}}z\geq 0\}$ for some $\bar{w}\in\R^{d}_{+}$ which satisfies $G(\bar{w})\supseteq \C_{t}(\omega)$ for $\Pr$-almost every $\o\in\O$ and $t\in\T$. In particular, when the solvency cones are constructed from bid-ask prices (see \citealt{kabanov}), this is equivalent to the ask prices having a uniform (in time and outcome) lower bound, or equivalently, the bid prices having a uniform (in time and outcome) upper bound. That is, $\bar{w}_j\leq\pi_{ij}(\o,t)\bar{w}_i$ for every $i,j\in\{1,\ldots,d\}$, every $t\in\T$, and $\Pr$-almost every $\o\in\O$, where $\pi_{ij}(\o,t)$ is the number of units of asset $i$ for which an agent can buy one unit of asset $j$ at time $t$ and state $\o$, and thus, denotes the ask price of asset $j$ in terms of asset $i$.
\end{rem}

\begin{prop}\label{finitedivergencerm}
Suppose that Assumption~\ref{boundedmarket} holds and $\dom\ell=\R^m$. 
\begin{enumerate}[(i)]
\item Let $r\in \R^m_{++}$ with $\frac{1}{r}\in\dom g$. If
\begin{equation}\label{commonsupport}
\inf_{x\in C}\bar{w}^{\mathsf{T}}(r\cdot x)>-\infty,
\end{equation}
then $(\cl D_{\ell,r}^{\market})(0)\neq\R^m$. In particular, $\cl D_{\ell,r}^{\market}$ is a closed convex risk measure with a dual representation provided by Theorem~\ref{liquidationtheorem}.
\item If there exists $r\in \R^m_{++}$ with $\frac{1}{r}\in\dom g$ such that \eqref{commonsupport} holds, then $(\cl R_{\ell}^{\market})(0)\neq\R^m$. In particular, $\cl R_{\ell}^{\market}$ is a closed convex risk measure with a dual representation provided by Theorem~\ref{liquidationtheorem}.
\end{enumerate}
\end{prop}

\section{Proofs and technical remarks}\label{appendix}

\subsection{Proofs of the results in Section~\ref{scalartheory}}\label{scalarproof}

\begin{proof}[Proof of Proposition~\ref{srm-lsc}]
Monotonicity, translativity and convexity are trivial. Let $X\in L^{\infty}$. It holds $\ell(-\esssup{X}-s) \leq \E\sqb{\ell(-X-s)} \leq \ell(-\essinf{X}-s)$ for every $s\in\R$. Note that $\ell$ is strictly increasing on $\ell^{-1}(\interior\ell(\R))\coloneqq\{x\in\R\mid\ell(x)\in\interior\ell(\R)\}=(a,b),$ where $a\coloneqq\inf\{x\in\R\mid\ell(x)> \inf_{y\in\R}{\ell(y)}\}\in\R\cup\{-\infty\}$ and $b\coloneqq\sup\{x\in\R\mid\ell(x)<+\infty\} \in\R\cup\{+\infty\}$. Hence, the inverse $\ell^{-1}$ is well-defined as a function from $\interior\ell(\R)$ to $(a,b)$. It holds $\mathbb{E}\sqb{\ell(-X-s)}\leq 0$ for each $s\geq -\essinf{X}-\ell^{-1}(0)$, and $\E\sqb{\ell(-X-s)}>0$ for each $s<-\esssup{X}-\ell^{-1}(0)$. So $\rho_{\ell}(X)\in\R$. Besides, $\E\sqb{\ell(-X-\rho_{\ell}(X))}\leq 0$ since the restriction of $\ell$ on $\dom\ell$ is a continuous function. To show (weak$^{*}$-)lower semicontinuity, let $(X^{n})_{n\in\mathbb{N}}$ be a bounded sequence in $L^{\infty}$ converging to some $X\in L^{\infty}$ $\Pr$-almost surely. Then, using Fatou's lemma together with the fact that the restriction of $\ell$ on $\dom\ell$ is nondecreasing and continuous, we have
\begin{align}
\E\sqb{\ell\of{-X-\liminf_{n\rightarrow\infty}{\rho_{\ell}(X^{n})}}}&=\E\sqb{\ell\of{\liminf_{n\rightarrow\infty}{(-X^n-\rho_{\ell}(X^n))}}}\\
&\leq\liminf_{n\rightarrow\infty}{\E\sqb{\ell(-X^{n}-\rho_{\ell}(X^{n}))}}\leq 0.\notag
\end{align}
This implies the so-called \emph{Fatou property} of $\rho_{\ell}$, namely, that $\rho_{\ell}(X)\leq\liminf_{n\rightarrow\infty}{\rho_{\ell}(X^{n})}$. By \citet[Theorem 4.33]{fs:sf}, this is equivalent to the lower semicontinuity of $\rho_{\ell}$.
\end{proof}

\begin{proof}[Proof of Proposition~\ref{srm-simple-dual}]
Note that $s\mapsto\E\sqb{\ell(-X-s)}$ is a proper convex function on $\R$. Hence, by Definition~\ref{srm-def}, $\rho_{\ell}(X)$ is the optimal value of a convex minimization problem. The corresponding Lagrangian dual objective function $h$ on $\R_+$ is given by
\begin{equation}
h(\lambda)= \inf_{s\in\R\colon\E\sqb{\ell(-X-s)}<+\infty}\of{s+\lambda\E\sqb{\ell(-X-s)}}.
\end{equation}
Clearly, $h(\lambda)=\delta_{\ell,\lambda}(X)$ if $\lambda>0$ since $\lambda\E\sqb{\ell(-X-s)}=+\infty$ if $\E\sqb{\ell(-X-s)}=+\infty$. On the other hand, note that $\E\sqb{\ell(-X-s)}<+\infty$ if and only if $\Pr\cb{-X-s\in \dom\ell}=1$. It follows that
\begin{equation}
h(0)=\inf\cb{s\in\R\mid \E\sqb{\ell(-X-s)}<+\infty}=-\essinf X - \sup \dom \ell.
\end{equation}
Therefore, the optimal value of the dual problem equals the right hand side of \eqref{srm-dual}. Finally, the two sides of \eqref{srm-dual} are equal since the usual \emph{Slater's condition} holds: There exists $\bar{s}\in\R$ such that $\E\sqb{\ell(-X-\bar{s})}<0$. This is because we have $\E\sqb{\ell(-X-s)}< 0$ for each $s> -\essinf{X}-\ell^{-1}(0)$, where $\ell^{-1}$ is the inverse function on $\interior\ell(\R)$ as in the proof of Proposition~\ref{srm-lsc}.

\end{proof}

\begin{proof}[Proof of Proposition~\ref{conjugation-bijection}]
Let $f$ be a loss function and $f^{*}:\R\to\R\cup\{+\infty\}$ its conjugate function. Note that $\dom f^{*}\subseteq\R_{+}$ since, for each $y<0$, we have
\begin{equation}
f^{*}(y)\geq\sup_{n\in\mathbb{N}}(-ny-f(-n))\geq\sup_{n\in\mathbb{N}}(-ny)-f(0)=+\infty,
\end{equation}
where we use the monotonicity of $f$ for the second inequality. Moreover, $0\in\dom f^{*}$ since $f^{*}(0)=-\inf_{x\in\R}f(x)<+\infty$. Clearly, $f^{*}(y)\geq -f(0)$ for each $y\in\R$. Besides, by \citet[Theorem~23.3]{rockafellar}, the subdifferential $\partial f(0)$ of $f$ at $0$ is nonempty and, by \citet[Theorem~23.5]{rockafellar}, we have $f^{*}(y)=-f(0)$ for every $y\in\partial f(0)$. Hence, $f^{*}$ attains its infimum. Finally, $f^{*}$ is not of the form $y\mapsto +\infty\cdot 1_{\{y<0\}}+(ay+b)\cdot 1_{\{y\geq 0\}}$ for some $a\in\R_{+}\cup\{+\infty\}$ and $b\in\R$ as otherwise we would get $f(x)=(f^{*})^{*}(x)=+\infty\cdot 1_{\{x>a\}}-b\cdot 1_{\{x\leq a\}}$, $x\in\R$, so that $f$ would be identically constant on $\dom f$. Hence, $f^{*}$ is a divergence function. Conversely, let $\varphi$ be a divergence function and $\varphi^{*}:\R\rightarrow\R\cup\{+\infty\}$ its conjugate function. Let $x^{1},x^{2}\in\R$ with $x^{1}\geq x^{2}$. Since $\dom \varphi\subseteq\R_{+}$, we have $x^{1}y-\varphi(y)\geq x^{2}y-\varphi(y)$ for each $y\in\dom \varphi$ so that $\varphi^{*}(x^{1})\geq \varphi^{*}(x^{2})$. Hence, $\varphi^{*}$ is nondecreasing. Moreover, $\inf_{x\in\R} \varphi^{*}(x)=-\varphi(0)>-\infty$ since $\varphi=(\varphi^{*})^{*}$ and $0\in\dom\varphi$. Clearly, $\varphi^{*}(0)=-\inf_{y\in\R}\varphi(y)\in\R$ so that $0\in\dom \varphi^{*}$. Finally, $\varphi^{*}$ is not identically constant on $\dom \varphi^{*}$ as otherwise $\varphi=(\varphi^{*})^{*}$ would fail to satisfy property (iii) in Definition~\ref{divergence}. Hence, $\varphi^{*}$ is a loss function.
\end{proof}

\begin{proof}[Proof of Theorem~\ref{oce-duality}]
If $\lambda=0$, then $I_{g,\lambda}(\Q\mid \Pr)=\sup\dom\ell$ for every $\Q\in\M(\Pr)$ and we have
\begin{equation}
\delta_{\ell,0}(X)=-\essinf X - \sup\dom\ell=\sup_{\Q\in\M(\Pr)}\E^\Q\sqb{-X}-\sup\dom\ell
\end{equation}
by the dual representation of the worst-case risk measure $X\mapsto - \essinf X$; see \citet[Example~4.39]{fs:sf}, for instance. Hence, $\eqref{oce-primal-dual}$ holds in the case $\lambda=0$. Moreover, by Remark~\ref{divergenceindex}, $\sup\dom\ell<+\infty$ if and only if $1\in \dom g_0$. Hence, $\delta_{\ell,0}$ is a lower semicontinuous convex risk measure if $1\in\dom g_0$, and $\delta_{\ell,0}(X)=-\infty$ for every $X\in L^\infty$ otherwise.

Suppose $\lambda>0$. Note that the right hand side of \eqref{oce-primal-dual} can be rewritten as a maximization problem on the space $L^{1}_+$ of integrable nonnegative real-valued random variables on $(\O,\mathcal{F},\Pr)$ (identified up to almost sure equality):
\begin{align}
&\sup_{\Q\in\M(\Pr)} \of{\E^{\Q}\sqb{-X}-I_{g,\lambda}(\Q\mid\Pr)} \notag \\
&=
\sup_{V\in L^{1}_{+}} \cb{\E\sqb{-XV}-\lambda\E\sqb{g\of{\frac{1}{\lambda}V}} \mid\E\sqb{V}=1}.
\end{align}
The optimal value of the corresponding Lagrangian dual problem is computed as
\begin{align}
q_{X}&\coloneqq\inf_{s\in\R}\sup_{V\in L^{1}_{+}}\of{\E\sqb{-XV}-\lambda\E\sqb{g\of{\frac{1}{\lambda}V}} + s(1-\E\sqb{V})} \\
&=\inf_{s\in\R}\of{s+\sup_{V\in L^{1}_{+}}\E\sqb{(-X-s)V-\lambda g\of{\frac{1}{\lambda}V}}}\notag \\
&=\inf_{s\in\R}\of{s+\E\sqb{\sup_{z\in\R_{+}}\of{(-X-s)z - \lambda g\of{\frac{1}{\lambda}z}}}} \notag \\
&=\inf_{s\in\R}(s+\E\sqb{g_{\lambda}^{*}(-X-s)}),\notag
\end{align}
where the third equality is due to \citet[Theorem~14.60]{rockafellar2}, and $g_{\lambda}^{*}$ is the conjugate of the divergence function $g_{\lambda}$; see Remark~\ref{divergenceindex}. Hence, $g_{\lambda}^{*}=\lambda\ell$ and $q_{X}$ equals the left hand side of \eqref{oce-primal-dual}. Finally, to conclude \eqref{oce-primal-dual}, we consider the following cases:

\begin{enumerate}[(i)]
\item Suppose that $1\in\interior\dom g_{\lambda}$, that is, $\frac{1}{\lambda}<\beta$. (Recall that $\interior\dom g_\lambda=(0,\lambda\beta)$, see Definition~\ref{divergencedefn} \emph{et seq.}) Then the following constraint qualification holds, for instance, with $\bar{V}\equiv 1$:
\begin{equation}\label{CQ}
\exists \bar{V}\in L^{1}_{+}:\text{  }\E\sqb{\bar{V}}=1,\text{ }\bar{V}\in\interior\dom g_{\lambda}\;\;\Pr\text{-almost surely}.
\end{equation}
By \citet[Corollary~4.8]{bl:pfcp}, \eqref{CQ} suffices to conclude \eqref{oce-primal-dual}. Note that we have
\begin{align}
-\E\sqb{X}-\lambda g\of{\frac{1}{\lambda}}&\leq\sup_{V\in L^{1}_{+}}\cb{\E\sqb{-XV}-\lambda\E\sqb{g\of{\frac{1}{\lambda}V}}\mid \E\sqb{V}=1} \\
&\leq -\essinf{X}-\lambda\inf_{x\in\R}{g(x)}\notag
\end{align}
so that both sides of \eqref{oce-primal-dual} are in $\R$.

\item Suppose that $\lambda\beta=1$ and $\dom g_\lambda=[0,\lambda\beta]=[0,1]$, that is, $\dom g=[0,\beta]=[0,\frac{1}{\lambda}]$. In this case, the only $V\in L^{1}_{+}$ with $\E\sqb{V}=1$ and $\Pr\cb{V\in\dom g_{\lambda}}=1$ is $V\equiv 1$, and hence, the right hand side of \eqref{oce-primal-dual} gives $-\E\sqb{X}-\lambda g(\frac{1}{\lambda})\in\R$. Note that \eqref{CQ} fails to hold here. Using \eqref{oce-primal-dual} for the previous case, we have
\begin{align}
\inf_{s\in\R}(s+\lambda\E\sqb{\ell(-X-s)}) &=\lim_{\varepsilon\downarrow 0}\inf_{s\in\R}(s+(\lambda+\varepsilon)\E\sqb{\ell(-X-s)})\\
&=\lim_{\varepsilon\downarrow 0} \sup_{\Q\in\M(\Pr)}\of{\E^{\Q}\sqb{-X}-(\lambda+\varepsilon)\E\sqb{g\of{\frac{1}{\lambda+\varepsilon}\frac{d\Q}{d\Pr}}}},\notag
\end{align}
where the first equality follows since the proper, concave, upper semicontinuous function
\begin{equation}
\R\ni \gamma\mapsto\inf_{s\in\R}(s+ \gamma\E{\ell(-X-s)})\in\R\cup\{-\infty\}\label{indexfunction}
\end{equation}
is right-continuous at $\gamma=\lambda$. Finally, we have
\begin{align}
&\lim_{\varepsilon\downarrow 0} \sup_{\Q\in\M(\Pr)}\of{\E^{\Q}\sqb{-X}-(\lambda+\varepsilon)\E\sqb{g\of{\frac{1}{\lambda+\varepsilon}\frac{d\Q}{d\Pr}}}} \\
&=\lim_{\varepsilon\downarrow 0} \sup_{\Q\in\M(\Pr)}\of{\E^{\Q}\sqb{-X}-(\lambda+\varepsilon)\E\sqb{g\of{\frac{1}{\lambda+\varepsilon}\frac{d\Q}{d\Pr}}-g(0)}} \notag \\
&\quad -\lim_{\varepsilon\downarrow 0}(\lambda+\varepsilon)g(0)\label{twosides}\\
&=\inf_{\gamma\in[\lambda,\lambda+\varepsilon^\prime]} \sup_{\Q\in\M(\Pr)}\of{\E^{\Q}\sqb{-X}-\gamma\E\sqb{g\of{\frac{1}{\gamma}\frac{d\Q}{d\Pr}}-g(0)}} - \lambda g(0)\notag\\
&=\sup_{\Q\in\M(\Pr)}\inf_{\gamma\in[\lambda,\lambda+\varepsilon^\prime]}\of{\E^{\Q}\sqb{-X}-\gamma\E\sqb{g\of{\frac{1}{\gamma}\frac{d\Q}{d\Pr}}-g(0)}} - \lambda g(0)\notag \\
&=\sup_{\Q\in\M(\Pr)}\of{\E^{\Q}\sqb{-X}-\lambda\E\sqb{g\of{\frac{1}{\lambda}\frac{d\Q}{d\Pr}}}}\notag \\
&=-\E\sqb{X}-\lambda g\of{\frac{1}{\lambda}}\in\R,\notag
\end{align}
where $\varepsilon^{\prime}>0$ is some fixed number. Here, the second equality follows since the function $\gamma\mapsto \gamma (g(\frac{y}{\gamma})-g(0))$ is a nonincreasing function on $\R_{++}$ for each $y\in\R$; see Remark~\ref{divergenceindex}. The third equality is due to a classical minimax theorem and it uses the compactness of the interval $[\lambda,\lambda+\varepsilon^\prime]$, see \citet[Corollary~3.3]{sion}. The fourth equality follows by monotone convergence theorem and the monotonicity of the function $\gamma\mapsto \gamma (g(\frac{y}{\gamma})-g(0))$ on $\R_{++}$. The last equality is already discussed above. Finally, the first equality follows since the two limits in \eqref{twosides} are shown to be finite by the succeeding equalities. Hence, we obtain \eqref{oce-primal-dual}.

\item Suppose $1\notin\dom g_\lambda$, that is, either $\dom g_\lambda=[0,\lambda\beta)$ and $\lambda\beta\geq 1$, or, $\dom g_\lambda=[0,\lambda\beta]$ and $\lambda\beta>1$. In this case, there is no $Y\in L^{1}_{+}$ with $\E\sqb{Y}=1$ and $\Pr\cb{Y\in\dom g_{\lambda}}=1$. Hence, the right hand side of \eqref{oce-primal-dual} gives $-\infty$. On the other hand, we have
\begin{align}\label{boundednessinequality}
\inf_{s\in\R}(s+\lambda\E\sqb{\ell(-X-s)}) &\leq\inf_{s\in\R}(s+\lambda\ell(-\essinf{X}-s))\\ &=-\essinf{X}-\sup_{s\in\R}(s-\lambda\ell(s))\notag \\
& = -\essinf{X}-\lambda g\of{\frac{1}{\lambda}}=-\infty.\notag 
\end{align}
\end{enumerate}
Hence, \eqref{oce-primal-dual} holds. In the first two cases where $1\in\dom g_\lambda$, we observe that $\delta_{\ell,\lambda}(0)\in\R$. Moreover, \eqref{divergencerisk} guarantees monotonicity, translativity, convexity and lower semicontinuity directly, which makes $\delta_{\ell,\lambda}$ a lower semicontinuous convex risk measure. In the last case where $1\notin\dom g_\lambda$, $\delta_{\ell,\lambda}(X)= -\infty$ for every $X\in L^\infty$.
\end{proof}

\begin{proof}[Proof of Proposition~\ref{divergence-minimal}]
Let $\Q\in\M(\Pr)$ and $\lambda\in\R_+$ with $1\in \dom g_\lambda$. If $\lambda=0$, then it follows from Definition~\ref{divergence} and the proof of Theorem~\ref{oce-duality} that $\alpha_{\delta_{\ell,0}}(\Q)=\sup\dom\ell=I_{g,0}(\Q\mid\Pr)$. Suppose $\lambda>0$. Using \eqref{divergencerisk} and the definition of penalty function in \eqref{dual-srm},
\begin{align}
\alpha_{\delta_{\ell,\lambda}}(\Q)&=\sup_{s\in\R}\of{-s+\sup_{X\in L^{\infty}}\E\sqb{-\frac{d\Q}{d\Pr}X-\lambda\ell(-X-s)}}\\
&=\sup_{s\in\R}\of{-s+\E\sqb{\sup_{x\in \R}\of{-\frac{d\Q}{d\Pr}x-\lambda\ell(-x-s)}}}\notag\\
&=\sup_{s\in\R}\of{-s+\E\sqb{\frac{d\Q}{d\Pr}s+g_{\lambda}\of{\frac{d\Q}{d\Pr}}}}\notag\\
&=I_{g,\lambda}(\Q\mid\Pr),\notag 
\end{align}
where the second equality follows from \citet[Theorem~14.60]{rockafellar2} and the third equality follows from Remark~\ref{divergenceindex}. For the penalty function of $\rho_{\ell}$, note that
\begin{align}
\alpha_{\rho_{\ell}}(\Q)&=\sup_{X\in L^{\infty}}\of{\E^{\Q}\sqb{-X}-\inf_{s\in\R}\of{s+I_{(-\infty,0]} \of{\E\sqb{\ell(-X-s)}}}}\\
&=
\sup_{X\in L^{\infty}}\of{\E^{\Q}\sqb{X}-I_{(-\infty,0]}(\E\sqb{\ell(X)})}\notag \\
&=\sup_{X\in L^{\infty}}\cb{\E^{\Q}\sqb{X}\mid \E\sqb{\ell(X)}\leq 0}.\notag 
\end{align}
For the last maximization problem, the corresponding Lagrangian dual objective function $h$ on $\R_+$ is given by
\begin{equation}
h(\lambda)=\sup_{X\in L^{\infty}\colon\E\sqb{\ell(X)}<+\infty}\of{\E^{\Q}\sqb{X}-\lambda\E\sqb{\ell(X)}}.
\end{equation}
Note that $\E\sqb{\ell(X)}<+\infty$ if and only if $\Pr\cb{X\in\dom\ell}=1$. If $\lambda=0$, then
\begin{equation}
h(0)=\sup_{X\in L^{\infty}\colon\E\sqb{\ell(X)}<+\infty}\E^{\Q}\sqb{X}=\sup\dom\ell=I_{g,0}\of{\Q\mid\Pr}.
\end{equation}
On the other hand, if $\lambda>0$, then
\begin{align}
h(\lambda)&=\sup_{X\in L^{\infty}\colon\E\sqb{\ell(X)}<+\infty}\of{\E^{\Q}\sqb{X}-\lambda\E\sqb{\ell(X)}}\\
&=\E\sqb{\sup_{x\in \R}\of{\frac{d\Q}{d\Pr}x-\lambda\ell(x)}}=I_{g,\lambda}\of{\Q\mid\Pr},\notag 
\end{align}
where we use \citet[Theorem~14.60]{rockafellar2} for the second equality and Remark~\ref{divergenceindex} for the third equality. Hence, the optimal value of the dual problem is given by
\begin{equation}
q(\Q)\coloneqq\inf_{\lambda\in\R_+}h(\lambda)=\inf_{\lambda\in\R_+}I_{g,\lambda}\of{\Q\mid\Pr}.
\end{equation}
 Note that Slater's condition holds, that is, there exists $\bar{X}\in L^{\infty}$ such that $\E\sqb{\ell(\bar{X})}<0$; take, for example, $\bar{X}\equiv\ell^{-1}(0)-1$, where $\ell^{-1}$ is the inverse function on $\interior\ell(\R)$ as in the proof of Proposition~\ref{srm-lsc}. Therefore, $\alpha_{\rho_{\ell}}(\Q)=q(\Q)$. Note that $I_{g,\lambda}(\Q\mid\Pr)=+\infty$ for every $\lambda\in\R_+$ with $1\notin \dom g_{\lambda}$, see case (iii) in the proof of Theorem~\ref{oce-duality}. Hence, we also have $q(\Q)=\inf_{\lambda\in\R_+\colon1\in \dom g_\lambda}\alpha_{\delta_{\ell,\lambda}}(\Q)$.
\end{proof}

\subsection{A remark about the scalar loss functions}\label{technical}

In \citet[Theorem~10]{fs:srm} and \citet[Theorem~4.115]{fs:sf}, the second part of Proposition~\ref{divergence-minimal} is proved with the additional assumption that $\ell$ maps into $\R$. This assumption implies that the $\ell$-shortfall risk measure is continuous from below and the first supremum in \eqref{dual-srm} is attained (\citealt[Proposition~4.113]{fs:sf}). Besides, the same assumption implies the so-called \emph{superlinear growth condition} on $g$, namely, that $\lim_{y\rightarrow\infty}\frac{g(y)}{y}=+\infty$ (\citealt[Lemma~11]{fs:srm}). The analytic proof for Proposition~\ref{divergence-minimal} in \cite{fs:srm} makes use of this property instead of the dual relationship with divergence risk measures. Using this proposition and assuming that $1\in\dom g$, Theorem~\ref{oce-duality} is proved for $\lambda=1$ (\citealt[Theorem~4.122]{fs:sf}), in which case $\delta_{\ell,1}$ is guaranteed to be a risk measure (it has finite values). In our treatment, while $1$ may not be in $\dom g$, there exists some $\bar{\lambda}>0$ with $1\in \dom g_{\bar{\lambda}}$ and hence $\delta_{\ell,\bar{\lambda}}$ is a risk measure.

In \cite{bt:oce}, on the other hand, the divergence function $g$ is of central importance: In addition to the assumptions here, it is assumed in \cite{bt:oce} that $g$ attains its infimum at $1$ with value $0$, which is equivalent to assuming that $\ell(0)=0$ and $1\in\partial\ell(0)$. These assumptions make $g$ a natural divergence function in the sense that the function $\Q\mapsto\E[g(\frac{d\Q}{d\Pr})]$ on $\M(\Pr)$ has nonnegative values and takes the value $0$ if $\Q=\Pr$; $\E[g(\frac{d\Q}{d\Pr})]$ can be interpreted as the distance between some ``subjective'' measure $\Q\in\M(\Pr)$ and the physical measure $\Pr$. On the other hand, the additional assumptions on the loss function $\ell$ may be considered as restrictive. Here, we take $\ell$ as the central object by dropping these assumptions and use the convex duality approach as in \cite{bt:oce}. Note that Theorem~\ref{oce-duality} (\citealt[Theorem~4.2]{bt:oce}) and the first part of Proposition~\ref{divergence-minimal} (\citealt[Theorem~4.4]{bt:oce}) are proved in \cite{bt:oce} for the case $\lambda=1$. Here, we generalize this proof, basically, by considering the cases where the constraint qualification \eqref{CQ}, which is also used in the proof of \citet[Theorem~4.2]{bt:oce}, fails to hold.

\subsection{Lagrange duality for set optimization: a quick review}\label{lagrange-app}

The proofs of Theorem~\ref{srmdrmconnection} and Proposition~\ref{set-srm-penaltyname} rely on the application of the recent Lagrange duality in \cite{hl:lagrange}. We quickly review the definition of the dual problem here. Let $\mathcal{X}$ be a locally convex topological linear space. Consider a set minimization problem of the form \eqref{setoptimization}, where $\Phi\colon\mathcal{X}\to\G_m$ is an arbitrary objective function and $\Psi\colon\mathcal{X}\to\G_m$ is an arbitrary constraint function. The optimal value of this problem is $p\coloneqq \inf_{(\G_m,\supseteq)}\cb{\Phi(x)\mid 0\in\Psi(x), x\in\mathcal{X}}$.

The halfspace-valued functions $S_{\lambda, v} \colon \R^m\to \G_m$ for $\lambda \in \R^m$, $v \in \R^m_+\bs\{0\}$ defined by
\begin{equation}
S_{\lambda,v}(z) = \cb{\eta \in\R^m \mid v^{\mathsf{T}}\eta \geq {\lambda}^{\mathsf{T}}z}
\end{equation}
will be used as set-valued substitutes for the (continuous) linear functionals of the scalar duality theory as in \cite{andreasduality}, \cite{hl:lagrange}. Here, there are two types of dual variables: The variable $\lambda \in \R^{m}$ is the usual vector of \emph{Lagrange multipliers} which is used to scalarize the values of $\Psi$ whereas the variable $v \in \R^{m}_{+}$ is the \emph{weight vector} which is used to scalarize the values of $\Phi$. The set-valued Lagrangian $L \colon \mathcal{X} \times \R^m \times (\R^m_+\bs\{0\}) \to \G_m$ and the objective function $H \colon \R^m \times (\R^m_+\bs\{0\}) \to \G_m$ of the dual problem for \eqref{setoptimization} are defined by
\begin{align}\label{lagrange}
&L(x, \lambda, v) =\cl\of{\Phi(x)+\inf_{(\G_m, \supseteq)}\cb{S_{\lambda, v}(z) \mid z \in \Psi(x)}},\\ 
&H(\lambda, v) =\inf_{(\G_m,\supseteq)} \{L(x, \lambda, v) \mid x \in \mathcal{X}\}.\notag 
\end{align}
The optimal value $q$ of the dual problem is the supremum of the dual objective function over the dual variables:
\begin{equation}\label{dualproblem}q\coloneqq\sup_{(\G_m, \supseteq)}\{H(\lambda, v) \mid \lambda\in\R^m, v \in \R^m_+\bs\{0\}\}.
\end{equation}
\begin{prop}\label{strong}
(\citealt[Theorem~6.1]{hl:lagrange}) Assume that $\Phi$ and $\Psi$ are convex functions and $p\neq\R^m$. Strong duality holds, that is, $p=q$ if the following set-valued version of Slater's condition holds: There exists $\bar{x}\in\mathcal{X}$ such that $\Phi(\bar{x})\neq\emptyset$ and $\Psi(\bar{x})\cap -\R^m_{++} \neq\emptyset$.
\end{prop}

\subsection{Proofs of the results in Section~\ref{mainsection}}\label{setvaluedproof}

\begin{proof}[Proof of Proposition~\ref{set-srm-lsc}]
Monotonicity, translativity and convexity are trivial. To show finiteness at $0$, using the proof of Proposition~\ref{srm-lsc}, we can find $z^{1}\in\R^{m}$ with $\ell(-z^{1})\in -\R_{+}^{m}$ and $z^{2}\in\R^{m}$ with $\ell(-z^{2})\in \R^m_{++}$. By the properties of $C$, it follows that $R_{\ell}(0)\not\in\{\emptyset,\R^{m}\}$. To show weak$^{\ast}$-closedness, let $(X^{n})_{n\in\mathbb{N}}$ be a bounded sequence in $L_{m}^{\infty}$ converging to some $X \in L_{m}^{\infty}$ $\Pr$-almost surely. Let $z \in \R^{m}$ and suppose that there exists $z^{n}\in R_{\ell}(X^{n})$, for each $n\in\mathbb{N}$, such that $(z^{n})_{n\in\mathbb{N}}$ converges to $z$. Using dominated convergence theorem, the closedness of $-C$, and the fact that the restriction of $\ell$ on $\dom\ell \coloneqq \{x \in \R^m \mid \ell(x) \in \R^{m}\} = \bigtimes_{i=1}^{m}\dom\ell_{i}\subseteq\R^{m}$ is continuous, we have
\begin{equation}
\E\sqb{\ell(-X - z)} =\E\sqb{\ell\of{\lim_{n\rightarrow\infty}{(-X^{n}-z^{n})}}} =\lim_{n\rightarrow\infty}{\E\sqb{\ell(-X^{n}-z^{n})}}\in -C,
\end{equation}
that is, $z \in R_{\ell}(X)$. This shows the so-called \emph{Fatou property} of $R_{\ell}$, namely, that
\begin{equation}
\liminf_{n\rightarrow\infty}R_{\ell}(X^{n}) \negthinspace\coloneqq\negthinspace\cb{z \in \R^m \negthinspace\mid \negthinspace\forall n \in \mathbb{N}\;\exists z^n \in R_{\ell}(X^n) \colon \negthinspace\lim_{n\rightarrow\infty}z^{n}=z} \negthinspace\subseteq\negthinspace R_{\ell}(X).
\end{equation}
By \citet[Theorem 6.2]{hh:duality}, this is equivalent to the weak$^{\ast}$-closedness of $R_{\ell}$.
\end{proof}

For the proof of Theorem~\ref{srmdrmconnection}, we will need the following lemmata.

\begin{lem}\label{set-srm-simple-dual}
For every $X\in L_{m}^{\infty}$,
\begin{equation}\label{set-srm-dual}
R_{\ell}(X)=\bigcap_{\substack{\lambda\in\R^m_+,\\v \in \R^m_+\bs\{0\}}}\cb{\eta \in \R^m \mid v^{\mathsf{T}}\eta \geq \inf_{\substack{z \in \R^m\colon\\\E\sqb{\ell(-X-z)}\in\R^m}}f_{\lambda,v}(z)+\inf_{x \in C}{{\lambda}^{\mathsf{T}}x}},
\end{equation}
where
\begin{equation}
f_{\lambda,v}(z)\coloneqq v^{\mathsf{T}}z+{\lambda}^{\mathsf{T}}\E\sqb{\ell(-X-z)}.
\end{equation}
\end{lem}

\begin{proof}
Let $X \in L^\infty_m$. Using \eqref{lagrange} and \eqref{infsup}, the Lagrangian for the problem \eqref{set-primal-shortfall} is computed as
\begin{align}
L(z,\lambda, v) & =\cl\of{z+\R_+^m+\cl\bigcup_{x \in \of{\E\sqb{\ell(-X-z)}+C} \cap \R^m}S_{\lambda, v}(x)}\\
& =\begin{cases} \cb{\eta \in \R^m \mid  v^{\mathsf{T}}\eta \geq f_{\lambda,v}(z)+\inf_{x \in C}{{\lambda}^{\mathsf{T}}x}}&\text{if }\E\sqb{\ell(-X-z)} \in \R^m,\\ \emptyset & \text{if }\E\sqb{\ell(-X-z)} \notin \R^m\end{cases}\notag 
\end{align}
for $z\in\R^m, \lambda\in\R^m, v\in\R^m_+\bs\cb{0}$. Hence, the dual objective function is given by
\begin{equation}\label{set-dual-halfspace}
H(\lambda, v) 
\negthinspace=\negthinspace \cb{\eta \in \R^m \negthinspace\mid\negthinspace  v^{\mathsf{T}}\eta \geq \inf_{z \in \R^m\colon \E\sqb{\ell(-X-z)} \in \R^m}f_{\lambda,v}(z) \negthinspace+\negthinspace\inf_{x \in C}{{\lambda}^{\mathsf{T}}x}}\negthinspace
\end{equation}
for $\lambda \in \R^m$, $v\in\R^m_+\bs\cb{0}$. Suppose $\lambda \not\in \R_+^m$. Since $C+\R_{+}^{m}\subseteq C$, there exists $\bar{x}\in C$ such that, for every $n \in \N$, we have $n\bar{x}\in C$ and $\lambda^{\mathsf{T}}\bar{x}<0$. Hence, $\inf_{x \in C}\lambda^{\mathsf{T}}x = -\infty$ and $H(\lambda, v) = \R^m$ for every $v \in \R^m_+\bs\{0\}$. Therefore, by \eqref{dualproblem} and \eqref{infsup}, the optimal value of the dual problem is given by the right hand side of \eqref{set-srm-dual}. Finally, the two sides of \eqref{set-srm-dual} are equal by Proposition~\ref{strong} since \emph{Slater's condition} holds: There exists $\bar{z} \in \R^m$ such that $(\E\sqb{\ell(-X-\bar{z})} + C)\cap -\R^m_{++} \neq \emptyset$. This follows as for the scalar version, see the proof of Proposition~\ref{srm-simple-dual}.
\end{proof}

\begin{lem}\label{changeofvariable}
Set $w^{\mathsf{T}}(-\infty) = -\infty$ whenever $w \in \R^{m}_{+}\bs\{0\}$. Then, for every $X\in L_{m}^{\infty}$,
\begin{equation}
R_{\ell}(X)=\bigcap_{r\in\R^m_+, w\in\R^{m}_{+}\bs\{0\}}\cb{z\in\R^{m}\mid w^{\mathsf{T}}z\geq w^{\mathsf{T}}\delta_{\ell,r}(X)+\inf_{x \in C}{{w}^{\mathsf{T}}(r\cdot x)}}.
\end{equation}
\end{lem}

\begin{proof}
With \eqref{set-dual-halfspace} in view, for $r, w \in \R^m_+\bs\cb{0}$, we define
\begin{equation}\label{Mhalfspace}
M(r, w) \coloneqq \cb{\eta \in \R^m \mid w^{\mathsf{T}}\eta \geq w^{\mathsf{T}}\delta_{\ell,r}(X)+\inf_{x\in C}{w^{\mathsf{T}}(r\cdot x)}},
\end{equation}
and we will show
\begin{equation}\label{H=M}
\bigcap_{\substack{\lambda\in\R^m_+,\\v \in \R^m_+\bs\cb{0}}}H(\lambda, v) = \bigcap_{\substack{r\in\R^m_+,\\ w \in \R^m_+\bs\cb{0}}}M(r, w).
\end{equation}
First, if $r\in\R^m_+, w \in \R^m_+\bs\cb{0}$, then we define $\lambda_i = r_iw_i$ and $v_i = w_i$ for $i \in \cb{1, \ldots, m}$. Then,
$\lambda\in\R^m_+, v  \in \R^m_+\bs\cb{0}$ as well as $H(\lambda, v) = M(r, w)$; see \eqref{vectordrm} and \eqref{EqDualDiv1}. This means that the intersection on the left hand side runs over at least as many sets as the one on the right hand side; hence, ``$\subseteq$" holds true. Conversely, if $\lambda\in\R^m_+, v \in \R^m_+\bs\cb{0}$, then we define, for each $n \in \mathbb{N}$ and $i \in \cb{1, \ldots, m}$,
\begin{equation}
(r_i^n,w_i^n)\coloneqq\begin{cases}\of{\frac{\lambda_i}{v_i},v_i }&\text{if } v_i>0,\\ \of{1,v_i }&\text{if }v_i =0, \lambda_i = 0,\\ \of{n \lambda_i,\frac{1}{n} }&\text{if } v_i = 0, \lambda_i > 0.\end{cases}
\end{equation}
Then $r^n\in\R^m_+, w^n \in \R^m_+\bs\cb{0}$ and $\lambda_i = r^n_iw^n_i$. Let $\eta$ be a point in the right hand side of \eqref{H=M}. If there is no $i \in \cb{1, \ldots, m}$ satisfying $v_i = 0$ and $\lambda_i > 0$, then $v = w^n$ and $H(\lambda, v) = M(r^n, w^n)$ for every $n\in\mathbb{N}$; hence, $\eta \in H(\lambda, v)$. Next, assume there is some $j \in \cb{1, \ldots, m}$ with $v_j = 0$, $\lambda_j > 0$. Since $\eta \in M(r^n, w^n)$ for every $n \in\mathbb{N}$, it follows
\begin{align}\label{EqMIneqality}
&\sum_{\substack{i \colon v_i > 0 \\ i \colon v_i=\lambda_i=0}}\negthickspace\negthickspace v_{i}\eta_{i} +
	\sum_{i \colon v_i = 0, \, \lambda_i > 0}\negthickspace \frac{\eta_i}{n} \notag \\
&\quad \geq 
 \negthickspace\sum_{\substack{i \colon v_i > 0 \\ i \colon v_i=\lambda_i=0}}\negthickspace\negthickspace 
		\inf_{z_{i} \in \R\colon\E\sqb{\ell(-X_i-z_i)}<+\infty}\of{v_{i}z_{i} + \lambda_{i}\E\sqb{\ell_{i}(-X_i - z_i)}} \notag  \\
	& \quad \quad \quad + \negthickspace\sum_{i \colon v_i = 0, \, \lambda_i > 0}\negthickspace\frac{1}{n}\inf_{z_i \in \R}\of{z_i + 
		n\lambda_i\E\sqb{\ell_i(-X_i-z_i)}} + \inf_{x \in  C}{{\lambda}^{\mathsf{T}}x}.
\end{align}
If $j\in\cb{1,\ldots,m}$ such that $v_i=0,\lambda_j > 0$, then we obtain, for each $n \in \mathbb{N}$,
\begin{align}
-\esssup{X_{j}} - n\lambda_jg_j\of{\frac{1}{n\lambda_j}}  &\leq \inf_{z_j \in \R}\of{z_j + n\lambda_j\E\sqb{\ell_j(-X_j - z_j)}}\\
&\leq -\essinf{X_{j}} - n\lambda_jg_j\of{\frac{1}{n\lambda_j}}.\notag 
\end{align}
This can be checked by a similar calculation to the one in \eqref{boundednessinequality}. Since $g_j$ is convex and lower semicontinuous, the restriction of $g_j$ to $\cl\dom g_j$ is a continuous function, see \citet[Proposition~2.1.6]{zalinescu}, so that
\begin{equation}
\lim_{n \to \infty}\frac{1}{n}\inf_{z_j\in\R}\of{z_j + n\lambda_j\E\sqb{\ell_j(-X_j - z_j}} = 
	-\lim_{n \to \infty}\lambda_jg_j\of{\frac{1}{n\lambda_j}} = -\lambda_jg_j(0).
\end{equation}
On the other hand,
\begin{align}\label{EqGofZero}
\inf_{z_j \in \R}\of{v_jz_j+ \lambda_i\E\sqb{\ell_i(-X_j - z_j)}} &=\lambda_j\inf_{z_j \in \R} \E\sqb{\ell_j(-X_j - z_j)} \\
&= \lambda_j\inf_{y \in \R} \ell_j(y) = -\lambda_j g_j(0)\notag 
\end{align}
since $\ell_j$ is nondecreasing and $X_{j} \in L^\infty$. Taking the limit in \eqref{EqMIneqality} as $n \to \infty$, we finally obtain
\begin{align}
v^{\mathsf{T}}\eta & \geq \sum_{i=1}^m\inf_{z_i \in \R\colon\E\sqb{\ell_i(-X_i-z_i)}<+\infty}(v_iz_i + \lambda_i\E\sqb{\ell_i(-X_i-z_i)})+\negthickspace\inf_{x \in C}\lambda^{\mathsf{T}}x,
\end{align}
that is, $\eta\in H(\lambda, v)$. Hence, \eqref{H=M} follows.
\end{proof}

\begin{proof}[Proof of Theorem~\ref{srmdrmconnection}]
Let $r\in\R^m_+$ and define
\begin{equation}
g_{w,r}(z)\coloneqq w^{\mathsf{T}}\of{-z+ r\cdot\E\sqb{\ell(-X+z)}}
\end{equation}
for each $w\in\R^m_+\bs\cb{0}$ and $z\in\R^m$ with $\E\sqb{\ell(-X-z)}\in\R^m$. Note that
\begin{align}
D_{\ell,r}(X)&=\bigcap_{w\in\R^m_+\bs\cb{0}}\cb{ \eta\in\R^m\mid   w^{\mathsf{T}}\eta\geq  \inf_{\substack{z\in\R^m\colon\\\E\sqb{\ell(-X-z)}\in\R^m}} g_{w,r}(z)+\inf_{x\in C}w^{\mathsf{T}}(r\cdot x)}\notag\\
&=\bigcap_{w\in\R^m_+\bs\cb{0}}\cb{\eta\in\R^{m}\mid w^{\mathsf{T}}\eta\geq w^{\mathsf{T}}\delta_{\ell,r}(X)+\inf_{x \in C}{{w}^{\mathsf{T}}(r\cdot x)}},
\end{align}
which follows from Definition~\ref{div-defn}, \eqref{vectordrm}, \eqref{EqDualDiv1}, and the fact that a closed convex set is the intersection of all of its supporting halfspaces; see \citet[(5.2)]{hl:lagrange}. By Theorem~\ref{oce-duality}, we have $\delta_{\ell,r}(X)\in\R^{m}$ if and only if $1\in\dom g_r$. Hence, $D_{\ell,r}(X)=\R^m$ if and only if $1\notin \dom g_r$. The result follows directly from Lemma~\ref{changeofvariable}.
\end{proof}

\begin{proof}[Proof of Proposition~\ref{setvaluedoce}]
If $1\in\dom g_r$, then $\delta_{\ell,r}(X)\in\R^m$ and the computation in the proof of Theorem~\ref{srmdrmconnection} can be concluded as
\begin{align}
D_{\ell,r}(X)&=\delta_{\ell,r}(X)+\bigcap_{w\in\R^{m}_{+}\bs\{0\}}\left\{z\in\R^{m}\mid w^{\mathsf{T}}z\geq \inf_{x\in C}{w}^{\mathsf{T}}(r\cdot x)\right\}\\
&=\delta_{\ell,r}(X)+r\cdot C.\notag 
\end{align}
With this representation, it is easy to check that $D_{\ell,r}$ is a closed convex risk measure since $\delta_{\ell_i, r_i}$ is a lower semicontinuous convex scalar risk measure for each $i\in\cb{1,\ldots,m}$. If $1\notin \dom g_r$, then $\delta_{\ell,r}(X)=-\infty$ and hence $D_{\ell,r}(X)=\R^m$ due to the convention $w^{\mathsf{T}}(-\infty)=-\infty$ in Lemma~\ref{changeofvariable}.
\end{proof}

\begin{proof}[Proof of Proposition~\ref{set-drm-penalty}]
Let $w \in \R^m_+\bs\{0\}$ and $M(r,w)$ as in \eqref{Mhalfspace}. For the moment, let us denote by $-\alpha$ the function defined by the right hand side of \eqref{setdiver}. Using the dual representation of scalar divergence risk measures provided by Theorem~\ref{oce-duality}, we have
\begin{align}
M(r,w)
&\negthinspace =\negthinspace
 \cb{z\in\R^m \mid w^{\mathsf{T}} z \negthinspace\geq\negthinspace \sum_{i=1}^m\sup_{\Q_i \in \M(\Pr)}
	w_i(\E^{\Q_i}\sqb{-X_i} - I_{g_i,r_i}(\Q_i \mid \Pr))
	+ \inf_{x\in C}  w^{\mathsf{T}}(r\cdot x)}\notag \\
& = \bigcap_{\Q\in\M_m(\Pr)}\cb{z \in \R^m \mid w^{\mathsf{T}}z\geq w^{\mathsf{T}}(\E^{\Q}\sqb{-	X}-I_{g,r}(\Q\mid \Pr)) + \inf_{x \in C}  w^{\mathsf{T}}(r\cdot x)}\notag\\
& = \bigcap_{\Q \in\M_m\of{\Pr}}\of{-\alpha(\Q,w)+\E^{\Q}\sqb{-X}}.
\end{align}
Hence, 
\begin{align}
D_{\ell,r}(X) &= \bigcap_{w \in \R^m_+\bs\{0\}}M(r,w)\\ 
&= \bigcap_{(\Q,w) \in \M_m(\Pr)\times(\R^m_+\bs\{0\})}\of{-\alpha(\mathbb{Q},w)+\E^{\Q}\sqb{-X}}.\notag 
\end{align}
Finally, we show that $-\alpha=-\alpha_{D_{\ell,r}}$. Using \eqref{minpen} for $\Q\in\M_{m}(\Pr)$, $w\in\R^{m}_{+}\bs\{0\}$, we obtain
\begin{align}
-\alpha_{D_{\ell,r}}(\Q,w)
&=\cl\bigcup_{X\in L_{m}^{\infty}}\of{\E^{\Q}\sqb{X}+\delta_{\ell,r}(X)+r\cdot C+G(w)}\\
&=\cb{z\in\R^{m}\mid w^{\mathsf{T}}z\geq \inf_{X\in L_{m}^{\infty}}w^{\mathsf{T}}\of{\E^{\Q}\sqb{X}+\delta_{\ell,r}(X)}+\inf_{x\in C}w^{\mathsf{T}}(r\cdot x)}\notag\\
&=\left\{z\in\R^{m}\mid w^{\mathsf{T}}z\geq -\sum_{i=1}^{m}w_{i}I_{g_{i},r_{i}}(\Q_{i}\mid\Pr)+\inf_{x\in C}w^{\mathsf{T}}(r\cdot x)\right\}\notag\\
&=-\alpha(\Q,w),\notag 
\end{align}
where the third equality follows from the analogous scalar result established in Proposition~\ref{divergence-minimal}. Hence, $-\alpha_{D_{\ell,r}}=-\alpha$ and \eqref{setdiver} holds.
\end{proof}

\begin{proof}[Proof of Proposition~\ref{set-srm-penaltyname}]
Using \eqref{minpen} for $\Q\in \M_m(\Pr)$, $w \in\R^m_{++}$, we obtain
\begin{align}
&-\alpha_{R_{\ell}}(\Q,w)\notag \\
& = \cl\bigcup_{X \in L_{m}^\infty}\of{\negthinspace\E^{\Q}\sqb{X} + G(w)+ 
	\cl\bigcup_{\substack{z \in \R^m\colon\\ \E\sqb{\ell(-X-z)}\in\R^m}}\negthinspace\negthinspace\cb{z + \R^m_+ \mid 0 \in \E\sqb{\ell(-X-z)}+C}\negthinspace\negthinspace}\notag\\
&=\cl\bigcup_{z \in \R^m} \bigcup_{\substack{X\in L_m^\infty\colon\\\E\sqb{\ell(-X-z)}\in\R^m }}\cb{z +\E^{\Q}\sqb{X}+G(w)
	\mid 0 \in \E\sqb{\ell(-X-z)} + C} \notag\\
& = \cl\bigcup_{X\in L_m^\infty}\cb{\E^{\Q}\sqb{-X}+G(w) \mid 0\in\E\sqb{\ell(X)}+C}\notag \\
&= \inf_{(\G_m, \supseteq)}\cb{\E^{\Q}\sqb{-X}+G(w) \mid 0 \in \E\sqb{\ell(X)}+C, X \in L_m^\infty},
\end{align}
where $\E\sqb{\ell(X)}+C$ is understood to be $\emptyset$ whenever $\E\sqb{\ell(X)}=+\infty$. Next, we compute the optimal value of the dual problem for this convex set-valued minimization problem. By \eqref{lagrange}, for $X \in L_m^\infty$, $\lambda \in \R^m_+$, $v \in \R^m_+\bs\{0\}$, we have $L(X,\lambda,v)=\R^m$ if $v \not\in \cb{sw \mid s > 0}$. Moreover, if $v = sw$ for some $s > 0$, then
\begin{align}
&L(X,\lambda, v) \notag \\
& = \E^{\Q}\sqb{-X}+G(sw)+\cb{z \in \R^m \mid sw^{\mathsf{T}}z \geq \lambda^{\mathsf{T}}\mathbb{E}\sqb{\ell(X)}+\inf_{x\in C}\lambda^{\mathsf{T}}x}\notag \\
& = \cb{z \in \R^m \mid sw^{\mathsf{T}}z\geq sw^{\mathsf{T}}\E^{\Q}\sqb{-X}+\lambda^{\mathsf{T}}\E\sqb{\ell(X)}+\inf_{x\in C}\lambda^{\mathsf{T}}x}
\end{align}
whenever $\E\sqb{\ell(X)}\in\R^m$ and $L(X,\lambda,v)=\emptyset$ otherwise. Observe $G(sw) = G(w)$ for every $s > 0$. Hence,
\begin{align}
H(\lambda, sw) &=\negthinspace \cb{z \in \R^m \negthinspace\mid\negthinspace w^{\mathsf{T}}z \negthinspace\geq\negthinspace \inf_{\substack{X \in L_m^{\infty}\colon\\ \E\sqb{\ell(X)}\in\R^m}}\negthinspace\of{w^{\mathsf{T}}\E^{\Q}\sqb{-X}\negthinspace+\negthinspace\frac{1}{s}			\lambda^{\mathsf{T}}\E\sqb{\ell(X)}}\negthinspace+\negthinspace\inf_{x \in C}\frac{1}{s}\lambda^{\mathsf{T}}x}\negthinspace\notag \\
&=\negthinspace H\negthinspace\of{\frac{\lambda}{s}, w}
\end{align}
for $\lambda \in \R^m_+$, $s > 0$. The optimal value of the dual problem is
\begin{equation}
\sup\cb{H\of{\frac{\lambda}{s}, w} \mid s > 0, \; \lambda \in \R^m_+} = \sup\cb{H\of{\lambda, w} \mid \lambda \in \R^m_+}.
\end{equation}
Since $w_i > 0$ for every $i \in \cb{1, \ldots, m}$ by assumption, we have
\begin{align}
H(\lambda,w)=\Big\{z\in\R^{m} \mid w^{\mathsf{T}}z\geq &\inf_{X\in L_{m}^{\infty}}\of{w^{\mathsf{T}}\E^{\Q}\sqb{-X}+w^{\mathsf{T}}\of{r\cdot\E\sqb{\ell(X)}}}\notag\\
&+\inf_{x\in C}w^{\mathsf{T}}(r\cdot x)\Big\},
\end{align}
where $r_i \coloneqq \frac{\lambda_i}{w_i}$, $i\in\cb{1, \ldots, m}$.
Note that
\begin{align}
&\inf_{X\in L_{m}^{\infty}}\of{w^{\mathsf{T}}\E^{\Q}\sqb{-X}+w^{\mathsf{T}}\of{r\cdot\E\sqb{\ell(X)}}}\notag\\
&=\sum_{i=1}^{m}w_{i}\inf_{X_{i}\in L^{\infty}}\of{w_{i}\E\sqb{-\frac{d\Q_{i}}{d\Pr}X_{i}}+r_{i}\E\sqb{\ell_{i}(X_{i})}}\notag\\
&=\sum_{i=1}^{m}w_{i}\E\sqb{\inf_{x_{i}\in \R}\of{-\frac{d\Q_{i}}{d\Pr}x_{i}+r_{i}\ell_{i}(x_{i})}}\notag\\
&=w^{\mathsf{T}}I_{g,r}(\Q\mid\Pr).
\end{align}
Therefore, the optimal value of the dual problem equals the middle term in \eqref{set-srm-penalty}. Note that Slater's condition holds, that is, there exists $\bar{X}\in L_{m}^{\infty}$ such that $(\E\sqb{\ell(\bar{X})}+C)\cap -\R^{m}_{++}\neq\emptyset$. This is immediate from the scalar version as in the proof of Proposition~\ref{divergence-minimal}. Hence, the first equality in \eqref{set-srm-penalty} holds by \citet[Theorem 6.6]{hl:lagrange}. Since $I_{g,r}(\mathbb{Q}\mid\mathbb{P})\not\in\R^{m}$ if $1\notin \dom g_r$, we also have the second equality in \eqref{set-srm-penalty}.
\end{proof}

\subsection{Proofs of the results in Section~\ref{examples}}\label{exampleproof}

\begin{proof}[Proof of Proposition~\ref{pointplusset}]
Using the definitions, we have
\begin{align}
R^{\ent}(X)
&=\cb{\negthinspace z\in\R^{m}\mid \exists c\in C\;\forall i\in\{1,\ldots,m\}\colon  \frac{\E\sqb{e^{\beta_{i}(-X_{i}-z_{i})}}-1}{\beta_{i}}=-c_{i}\negthinspace}\\
&=\cb{z\in\R^{m}\negthinspace\mid \negthinspace\exists c\in C\;\forall i\in\{1,\ldots,m\}\colon\negthinspace z_{i}=\frac{1}{\beta_{i}}\log\frac{\E\sqb{e^{-\beta_{i}X_{i}}}}{1-\beta_{i}c_{i}},\;1>\beta_{i}c_{i}}\notag\\
&=\rho^{\ent}(X)+C^{\ent}.\notag 
\end{align}
\end{proof}

\begin{proof}[Proof of Proposition~\ref{ent-drm}]
For each $i\in\{1,\ldots,m\}$, note that
\begin{align}
\delta_{\ell_{i},r_{i}}(X_{i})&=\inf_{z_{i}\in\R}\of{z_{i}+r_{i}\E\sqb{\ell_{i}(-X_{i}-z_{i})}}\\
&=\frac{1}{\beta_{i}}\log\E\sqb{e^{-\beta_{i}X_{i}}}+\frac{1}{\beta_{i}}(1-r_{i}+\log r_{i})\in\R.\notag 
\end{align}
The result follows from Proposition~\ref{setvaluedoce}.
\end{proof}

\begin{proof}[Proof of Lemma~\ref{bestratio}]
First, we extend $f_{w}$ and $h_{w}$ from $\R^{m}_{++}$ to $\R^{m}$ with their original definitions so that we have $\inf_{r\in\R^{m}_{++}}\of{f_{w}(r)+h_{w}(r)} = \inf_{r\in\R^{m}}\of{f_{w}(r)+h_{w}(r)}$. Note that $f_{w}$ is a proper, strictly convex, continuous function and has a unique minimum point. Hence, by \citet[Theorem 27.1(d)]{rockafellar}, $f_{w}$ has no directions of recession, that is, the recession function $f_{w}0^{+}$ of $f_{w}$ always takes strictly positive values; see \citet[p.~66 and p.~69]{rockafellar} for definitions. Besides, $h_{w}$ is a proper, convex, lower semicontinuous function. If $h_{w}\equiv+\infty$, then the infimum of $f_{w}+h_{w}$ is $+\infty$. Suppose that $h_{w}$ is a proper function. Since $0$ is a boundary point of $-C$, $h_{w}$ always takes nonnegative values. Hence, the infimum of $h_{w}$ is finite. By \citet[Theorem~27.1(a), (i)]{rockafellar}, this implies that the recession function $h_{w}0^{+}$ of $h_{w}$ always takes nonnegative values. Therefore, $f_{w}+h_{w}$ has no directions of recession since $(f_{w}+h_{w})0^{+}=f_{w}0^{+}+h_{w}0^{+}$ by \citet[Theorem~9.3]{rockafellar}. Hence, by \citet[Theorem~27.1(b), (d)]{rockafellar} and the strict convexity of $f_{w}+h_{w}$, this function has a unique minimum point $r^{w}\in\R^{m}_{++}$ which is determined by the first order condition
\begin{align}
0\in&\partial (f_{w}+h_{w})(r^{w})\\
&=\sqb{\frac{w_{i}}{\beta_{i}}-\frac{w_{i}}{\beta_{i}r^{w}_{i}}}_{i=1}^{m}+\cb{w\cdot\bar{x}\mid\bar{x}\in -C,\;\sup_{x\in -C}w^{\mathsf{T}}(r^{w}\cdot x) =w^{\mathsf{T}}(r^{w}\cdot\bar{x})},\notag 
\end{align}
that is,
\begin{equation}\label{exchangeform}
\sqb{\frac{1}{\beta_{i}}\of{1-\frac{1}{r^{w}_{i}}}}_{i=1}^{m}\in C,\quad \inf_{x\in C}w^{\mathsf{T}}(r^{w}\cdot x) =\sum_{i=1}^{m}\frac{w_{i}r^{w}_{i}}{\beta_{i}}\of{1-\frac{1}{r^{w}_{i}}},
\end{equation}
which is the claimed property of $r^{w}$.
\end{proof}

\begin{proof}[Proof of Proposition~\ref{ent-setopt}]
By Lemma~\ref{bestratio} , it is clear that, for each $w \in \R^{m}_{+}\bs\{0\}$, we have $\inf_{r\in\R^{m}_{++}}\of{f_{w}(r)+h_{w}(r)} = \inf_{r\in \Gamma}\of{f_{w}(r)+h_{w}(r)}=f_{w}(r^{w})+h_{w}(r^{w})$. Hence,
\begin{align}
R^{\ent}(X)
&=\rho^{\ent}(X)+\bigcap_{\substack{w\in\R^{m}_{+}\bs\{0\},\\ r\in\R^{m}_{++}}}\cb{z\in\R^{m}\mid w^{\mathsf{T}}z\geq -(f_{w}(r)+h_{w}(r))}\\
&=\rho^{\ent}(X)+\bigcap_{w\in\R^{m}_{+}\bs\{0\}}\cb{z\in\R^{m}\mid w^{\mathsf{T}}z\geq -\inf_{r\in\Gamma}\of{f_{w}(r)+h_{w}(r)}}\notag\\
&=\bigcap_{r\in\Gamma}D_{r}^{\ent}(X).\notag
\end{align}
Let $w\in\R^{m}_{+}\bs\{0\}$ such that $f_{w}+h_{w}$ is proper and let $r\in\R^{m}_{++}$. Suppose that $D_{r}^{\ent}(X)\subseteq D_{r^{w}}^{\ent}(X)$. Then, $-(f_{w}(r)+h_{w}(r))=\inf_{z\in D_{r}^{\ent}(X)}w^{\mathsf{T}}z\geq\inf_{z\in D_{r^{w}}^{\ent}(X)}w^{\mathsf{T}}z=-(f_{w}(r^{w})+h_{w}(r^{w}))$, that is, $f_{w}(r)+h_{w}(r)\leq f_{w}(r^{w})+h_{w}(r^{w})$. By Lemma~\ref{bestratio}, this implies that $r=r^{w}$.
\end{proof}

\begin{proof}[Proof of Proposition~\ref{ent-dual}]
Proposition~\ref{set-srm-penaltyname} and Lemma~\ref{bestratio} give
\begin{align}\label{entropicpenalty}
&-\alpha_{R_{\ell}}(\Q,w)\\
&
=\bigcap_{r\in 1/\dom g}\negthinspace\cb{z\in\R^{m}\mid w^{\mathsf{T}}z\geq  -w^{\mathsf{T}}I_{g,r}(\Q\mid\Pr)\negthinspace+\negthinspace\inf_{x\in C}{w^{\mathsf{T}}(r\cdot x)}}\notag \\
&=\negthinspace-\beta^{-1}\negthinspace\cdot\negthinspace H(\Q\mid\Pr)\negthinspace+\negthinspace\bigcap_{r\in \R^{m}_{++}}\left\{\negthinspace z\in\R^{m}\negthinspace\mid\negthinspace w^{\mathsf{T}}z\geq  \sum_{i=1}^{m}\frac{w_{i}}{\beta_{i}}(1-r_{i}+\log r_{i})\negthinspace+\negthinspace\inf_{x\in C}{w^{\mathsf{T}}(r\negthinspace\cdot\negthinspace x)}\negthinspace\right\}\notag\\
&=-\beta^{-1}\negthinspace\cdot\negthinspace H(\Q\mid\Pr)+\bigcap_{r\in \R^{m}_{++}}\cb{z\in\R^{m}\mid w^{\mathsf{T}}z\geq  -(f_{w}(r)+h_{w}(r))}\notag\\
&=-\beta^{-1}\negthinspace\cdot\negthinspace  H(\Q\mid\Pr)+\cb{z\in\R^{m}\mid w^{\mathsf{T}}z\geq  -(f_{w}(r^{w})+h_{w}(r^{w}))}\notag\\
&=-\beta^{-1}\negthinspace\cdot\negthinspace  H(\Q\mid\Pr)\negthinspace+\negthinspace\left\{z\in\R^{m}\mid w^{\mathsf{T}}z\geq  \sum_{i=1}^{m}\frac{w_{i}}{\beta_{i}}(1-r^{w}_{i}+\log r^{w}_{i})\negthinspace+\negthinspace\inf_{x\in C}{w^{\mathsf{T}}(r^{w}\cdot x)}\negthinspace\right\},\notag
\end{align}
assuming that $h_{w}$ is not identically $+\infty$ (otherwise $-\alpha_{R_{\ell}}(\mathbb{Q},w)=\R^m$). The passage from the last line to the claimed formula is by \eqref{exchangeform}.
\end{proof}

\subsection{Proofs of the results in Section~\ref{market}}\label{marketproof}

\begin{proof}[Proof of Proposition~\ref{markettransition}]
Clearly, $R^{\market}(0)\neq\emptyset$ since $0\in \Lambda_m(0)$ and $R(0)\neq \emptyset$. We prove the monotonicity and translativity of the function $Y\mapsto \tilde{R}(Y)\coloneqq \bigcup_{X\in\Lambda_m(Y)}R(X)$ first. For monotonicity, consider $Y^1, Y^2 \in L_d^\infty$ with $Y^1 \leq Y^2$. Let $X\in \Lambda_m(Y^1)$. With $\tilde{Y}\coloneqq Y^2-Y^1 \in L_{d,+}^\infty$, it holds
\begin{align}
BX  \in Y^1 +\K &= Y^2 -\tilde{Y}+ \K \\
&=Y^2 -\sum_{t=0}^{T-1}L_{d}^{\infty}(\F_{t},\C_{t}\cap\D_t)- \of{\tilde{Y} + L_d^\infty(\F_T, \C_T)} \notag \\
& \subseteq Y^2 -\sum_{t=0}^{T-1}L_{d}^{\infty}(\F_{t},\C_{t}\cap\D_t)- \of{L_{d,+}^\infty + L_d^\infty(\F_T, \C_T)}\notag\\
& \subseteq Y^2 + \K,\notag
\end{align}
where the last inclusion holds since
$L_{d,+}^\infty + L_d^\infty (\F_T, \C_T) = L_d^\infty (\F_T, \R^d_+)+L_d^\infty(\F_T, \C_T)=L_d^\infty (\F_T, \C_T)$
due to \mbox{$\C_T(\o)\in \G_d$} for every $\o\in\O$. Hence, $X\in \Lambda_m (Y^2)$. Therefore, $\Lambda_m (Y^1)\subseteq \Lambda_m (Y^2)$, which implies $\tilde{R}(Y^1)\subseteq \tilde{R}(Y^2)$. To prove translativity, let $Y\in L_d^\infty$, $z\in\R^m$. For every $X\in L_m^\infty$, it holds
\begin{align}
X\in \Lambda_m (Y+Bz) &\; \Leftrightarrow \; BX \in Y+Bz+\K\\
&\;\Leftrightarrow \; B(X-z) \in Y+\K\notag\\
&\;\Leftrightarrow \; X-z \in \Lambda_m (Y).\notag 
\end{align}
Hence,
\begin{align}
\tilde{R}(Y+Bz) &= \bigcup_{X\in\Lambda_m(Y+Bz)}R(X)
= \bigcup_{X-z\in\Lambda_m(Y)}R(X)\\
&= \bigcup_{X\in\Lambda_m(Y)}R(X+z)
=\tilde{R}(Y)-z,\notag 
\end{align}
from which translativity follows. It is easy to check that the last two properties are preserved under the closure and convex hull operators. Hence, $R^{\market}$ is monotone and translative. It is also easy to check that $\tilde{R}$ and $R^{\market}$ are convex since $R$ is convex. Finally, since $\tilde{R}$ has convex values and this property is preserved under the closure operator, \eqref{co-dropped} follows.
\end{proof}

As a preparation for the proof of Theorem~\ref{liquidationtheorem}, we establish a link between the notions of market risk measure and set-valued infimal convolution. We begin by introducing two key concepts from (complete lattice-based) set-valued convex analysis, the reader is referred to \cite{andreasduality} for details.

\begin{defn}\label{indicator}
(\citealt[Example~1]{andreasduality}) Let $\mathcal{Y}\subseteq L_{d}^{\infty}$. The \emph{indicator function} of the set $\mathcal{Y}$ is the function $\mathcal{I}^{m}_{\mathcal{Y}} \colon L_{d}^{\infty}\to\G_m$ defined by
\begin{equation}
\mathcal{I}^{m}_{\mathcal{Y}}(Y)=\begin{cases}\R^{m}_{+}& \text{if }Y\in\mathcal{Y},\\ \emptyset & \text{else}.\end{cases}
\end{equation}
\end{defn}

\begin{defn}\label{infimalconvolution}
(\citealt[Section~4.4(C)]{andreasduality}) Let $N\geq 1$ be an integer. For each $n\in\{1, \ldots, N\}$, let $F^{n} \colon L_{d}^{\infty}\to\G_m$ be a function. The function $ \Box_{n=1}^{N}F^{n} \colon L_{d}^{\infty}\to\G_m$ defined by
\begin{equation}
(\Box_{n=1}^{N}F^{n})(Y)=\cl\co\bigcup_{Y^{1}, \ldots,Y^{N}\in L_{d}^{\infty}}\cb{\sum_{n=1}^{N}F^{n}(Y^{n})\mid Y^{1}+\ldots+Y^{N}=Y}.
\end{equation}
is called the \emph{infimal convolution} of $F^{1}, \ldots, F^{N}$.
\end{defn}

Recall the linear operator $B \colon \R^{m}\rightarrow\R^{d}$ defined by \eqref{linop}: $Bx=(x_{1}, \ldots, x_{m},0, \ldots, 0)^{\mathsf{T}}$ for $x\in\R^{m}$. Its adjoint $B^{*} \colon \R^{d}\rightarrow\R^{m}$ is defined by \eqref{adjop}: $B^{*}y=(y_{1}, \ldots ,y_{m})^{\mathsf{T}}$ for $y\in\R^{d}$.

The next lemma shows that the market risk measure is basically the infimal convolution of the original risk measure and the indicator function of the negative of the set $\mathcal{K}$ of all freely available portfolios defined by \eqref{freelyavailable}.

\begin{lem}\label{infconvmarket}
Let $R \colon L_{m}^{\infty}\to\G_m$ be a closed convex risk measure and define $\tilde{R} \colon L_{d}^{\infty}\to\G_m$ by 
\begin{equation}
\tilde{R}(Y)=\begin{cases}R(B^{*}Y)&\text{ if }Y\in B(L_{m}^{\infty}),\\\emptyset&\text{ else}.\end{cases}
\end{equation}
Then, for each $Y\in L_{d}^{\infty}$,
\begin{align}\label{closedvalues}
R^{\market}(Y)&=(\tilde{R}\;\Box\;\mathcal{I}^{m}_{-\mathcal{K}})(Y)\\
&=(\tilde{R}\;\Box\;\mathcal{I}^{m}_{L_{d}^{\infty}(\mathcal{F}_{0},\mathcal{C}_{0}\cap\mathcal{D}_{0})}\;\Box\;\ldots\;\Box\;\mathcal{I}^{m}_{L_{d}^{\infty}(\mathcal{F}_{T},\mathcal{C}_{T}\cap\mathcal{D}_{T})})(Y).\notag 
\end{align}
\end{lem}

\begin{proof}
For each $Y\in L_{d}^{\infty}$, we have
\begin{align}
&R^{\market}(Y)=\cl\bigcup_{\{X\in L_{m}^{\infty}\mid BX\in Y+\mathcal{K}\}}R(X)=\cl\bigcup_{U\in Y+\mathcal{K}}\tilde{R}(U) \\
&= \cl \bigcup_{U,U^{\prime} \in L_{d}^{\infty}}\{\tilde{R}(U)+\mathcal{I}^{m}_{-\mathcal{K}}(U^{\prime})\mid U+U^{\prime}=Y\}\notag\\
&=\cl \bigcup_{U,U^{0},\ldots,U^{T}\in L_{d}^{\infty}}\left\{\tilde{R}(U)+\sum_{t=0}^{T}\mathcal{I}^{m}_{L_{d}^{\infty}(\mathcal{F}_{t},\mathcal{C}_{t}\cap\mathcal{D}_{t})}(U^{t})\mid U+U^{0}+\ldots+U^{T}=Y\right\}.\notag 
\end{align}
Since each of the functions in the infimal convolution is convex, we can omit the convex hull operator in Definition~\ref{infimalconvolution}; and the result follows.
\end{proof}

By Lemma~\ref{infconvmarket}, the market risk measure can be formulated as an infimal convolution. As in the scalar theory, the Legendre-Fenchel conjugate of the infimal convolution of finitely many convex functions is the sum of the Legendre-Fenchel conjugates of these convex functions; see \citet[Lemma~2]{andreasduality}. The application of this result is the main step of the proof of Theorem~\ref{liquidationtheorem} below. For completeness, we begin with the definition of conjugate for set-valued functions.

\begin{defn}\label{conjugate}
(\citealt[Definition~5]{andreasduality}) Let $F:L_{d}^{\infty}\to\G_m$ be a function. The \emph{(Fenchel) conjugate} of $F$ is the function $-F^{*}:L_{d}^{1}\times (\R^{m}_{+}\bs\{0\})$ defined by
\begin{equation}
-F^{*}(V,v)=\cl\bigcup_{Y\in L_{d}^{\infty}}\of{F(Y)+\cb{z\in \R^{m}\mid v^{\mathsf{T}}z\geq \E\sqb{-V^{\mathsf{T}}Y}}}.
\end{equation}
\end{defn}

\begin{proof}[Proof of Theorem~\ref{liquidationtheorem}]
Since $\cl R^{\market}$ has closed values, Lemma~\ref{infconvmarket} implies that, for each $Y\in L_{d}^{\infty}$,
\begin{equation}
 (\tilde{R}\;\Box\;\mathcal{I}^{m}_{L_{d}^{\infty}(\mathcal{F}_{0},\mathcal{C}_{0}\cap\mathcal{D}_{0})} \;\Box\ldots\Box\;\mathcal{I}^{m}_{L_{d}^{\infty}(\mathcal{F}_{T},\mathcal{C}_{T}\cap\mathcal{D}_{T})})(Y)\negthinspace =\negthinspace R^{\market}(Y)\subseteq (\cl R^{\market})(Y).
\end{equation}
By \citet[Remark~6, Lemma~2]{andreasduality}, $R^{\market}$ and $\cl R^{\market}$ have the same conjugate on $L_{d}^{1}\times(\R^{m}_{+}\bs\{0\})$ given~by
\begin{align}\label{conjugation}
&-\left(\tilde{R}\;\Box\;\mathcal{I}^{m}_{L_{d}^{\infty}(\mathcal{F}_{0},\mathcal{C}_{0}\cap\mathcal{D}_{0})}\;\Box\;\ldots\;\Box\;\mathcal{I}^{m}_{L_{d}^{\infty}(\mathcal{F}_{T}, \mathcal{C}_{T}\cap\mathcal{D}_{T})}\right)^{*}\notag\\
&=-\tilde{R}^{*}+\sum_{t=0}^{T}-(\mathcal{I}^{m}_{L_{d}^{\infty}(\mathcal{F}_{t}, \mathcal{C}_{t}\cap\mathcal{D}_{t})})^{*}.
\end{align}
Note that this is the set-valued version of the rule ``the conjugate of the infimal convolution of finitely many convex functions is the sum of their conjugates." Let $(V,v)\in L_{d}^{1}\times (\R^{m}_{+}\bs\{0\})$. By \citet[Proposition~6.7]{hhr:setval} on the conjugate of a risk measure, for every $(V,v)\in L_{d}^{1}\times(\R^{m}_{+}\bs\{0\})$, we have $-(\cl(R^{\market}(\cdot)))^{*}(V,v)=\R^{m}$ unless we have $V\in -L_{d,+}^{1}$ and $v=\E\sqb{-B^{*}V}$.

Next, we pass from $L_{d}^{1}\times(\R^{m}_{+}\bs\{0\})$ to $\mathcal{W}_{m,d}=\M_{d}(\Pr)\times((\R^{m}_{+}\bs\{0\})\times\R^{d-m}_{+}) $ using the ``change of variables formula" \cite[Lemma~3.4]{hhr:setval}. One obtains that for every $V\in -L_{d,+}^{1}$ with $v=\E\sqb{-B^{*}V}$, there exists $(\Q,w)\in\mathcal{W}_{m,d}$ such that, for every $Y\in L_{d}^{\infty}$,
\begin{equation}\label{changeofvar}
\cb{z\in\R^{m}\mid v^{\mathsf{T}}z\geq \mathbb{E}\sqb{(-V)^{\mathsf{T}}Y}}=B^{*}\left(\left(\mathbb{E}^{\mathbb{Q}}[Y]+G(w)\right)\cap B(\R^{m})\right),
\end{equation}
and conversely, every $(\mathbb{Q},w)\in\mathcal{W}_{m,d}$ can be obtained by some $V\in -L_{d,+}^{1}$ with $v=\mathbb{E}[-B^{*}V]$ such that \eqref{changeofvar} holds for every $Y\in L_{d}^{\infty}$. Note that $B^{*}(\R^{d})=\R^{m}\times\{0\in\R^{d-m}\}$. For such corresponding pairs $(V,v)$ and $(\mathbb{Q},w)$, using \eqref{changeofvar}, we first observe that
\begin{align}
-\tilde{R}^{*}(V,v)&=\cl\bigcup_{Y\in L_{d}^{\infty}}\of{\tilde{R}(Y)+\cb{z\in \R^{m}\mid v^{\mathsf{T}}z\geq \E\sqb{-V^{\mathsf{T}}Y}}}\\
&=\cl\bigcup_{Y\in L_{d}^{\infty}}\of{\tilde{R}(Y)+B^{*}\of{\of{\E^{\Q}[Y]+G(w)}\cap B(\R^{m})}}\notag \\
&=\cl\bigcup_{Y\in B(L_{m}^{\infty})}\of{R(B^{*}Y)+\E^{{B^{*}\Q}}[B^{*}Y]+G(B^{*}w)}\notag\\
&=\cl\bigcup_{X\in L_{m}^{\infty}}\of{R(X)+\E^{{B^{*}\Q}}[X]+G(B^{*}w)}=-\alpha_{R}(B^{*}\Q,B^{*}w).\notag 
\end{align}
Next, let $t\in\T$. For the same pairs $(V,v)$ and $(\Q,w)$, by Definitions~\ref{indicator},~\ref{conjugate}, we have
\begin{align}
\negthinspace-\negthinspace \of{\mathcal{I}^{m}_{L_{d}^{\infty}(\mathcal{F}_{t}, \mathcal{C}_{t}\cap\mathcal{D}_{t})}}^{*}\negthinspace (V,v)\negthinspace &=\negthinspace\cl\bigcup_{U^{t}\in L_{d}^{\infty}(\mathcal{F}_{t}, \mathcal{C}_{t}\cap\mathcal{D}_{t})}\cb{\negthinspace z\in\R^{m}\negthinspace \mid \negthinspace v^{\mathsf{T}}z\negthinspace \geq\negthinspace \E\sqb{(-V)^{\mathsf{T}}U^{t}}\negthinspace}\\
&=\cl\bigcup_{U^{t}\in L_{d}^{\infty}(\mathcal{F}_{t}, \mathcal{C}_{t}\cap\mathcal{D}_{t})}B^{*}\of{\of{\E^{\Q}[U^{t}]+G(w)}\cap B(\R^{m})}.\notag 
\end{align}

Finally, note that $\cl R^{\market}$ is a closed convex set-valued function that is finite at zero by assumption. Hence, by biconjugation for set-valued functions, see \cite[Theorem~2]{andreasduality}, we have
\begin{equation}
(\cl R^{\market})(Y)\negthinspace=\negthinspace\bigcap_{\substack{V\in -L_{d,+}^{1},\\ v=\E\sqb{-B^{*}V}}}\sqb{\negthinspace-(\cl R^{\market})^{*}(V,v)\negthinspace+\negthinspace \cb{z\in\R^{m}\mid v^{\mathsf{T}}z\geq \E\sqb{V^{\mathsf{T}}Y}}\negthinspace},
\end{equation}
for every $Y\in L_{d}^{\infty}$, and the above calculations allow for a passage to vector probability measures:
\begin{equation}
\negthinspace(\cl\negthinspace R^{\market})(Y)\negthinspace=\negthinspace\bigcap_{(\Q,w)\in \mathcal{W}_{m,d}\negthinspace}\negthinspace\sqb{\negthinspace-\alpha_{\cl R^{\market}}(\Q,w)\negthinspace+\negthinspace B^{*}\negthinspace\of{(\E^{\Q}[-Y]\negthinspace+\negthinspace G(w))\negthinspace\cap\negthinspace B(\R^{m})}\negthinspace}\negthinspace,\negthinspace
\end{equation}
where, for $(\Q,w)\in\mathcal{W}_{m,d}$,
\begin{align}
-\alpha_{\cl R^{\market}}(\Q,w)\negthinspace=\negthinspace&-\alpha_{R}(B^{*}\Q,B^{*}w)\\
&+\sum_{t=0}^{T}\cl\bigcup_{U^{t}\in L_{d}^{\infty}(\mathcal{F}_{t}, \mathcal{C}_{t}\cap \mathcal{D}_{t})}B^{*}\of{\of{\E^{\Q}[U^{t}]+G(w)}\cap B(\R^{m})}.\notag
\end{align}
\end{proof}

\begin{proof}[Proof of Corollary~\ref{corollary_convex market}]
Let $(\Q,w)\in\mathcal{W}_{m,d}\bs\mathcal{W}_{m,d}^{\convex}$. So there exist $t\in\T$ and $A\in\mathcal{F}_{t}$ such that $\Pr(A)>0$ and $w\cdot\E\sqb{\frac{d\Q}{d\Pr}\mid\mathcal{F}_{t}}(\omega)\notin (0^{+}\mathcal{C}_{t}(\omega))^{+}$ for each $\omega\in A$. Using the fact that the effective domain of the support function of a nonempty closed convex set in $\R^{d}$ is a subset of its recession cone, which is an easy consequence of \citet[Corollary~14.2.1]{rockafellar}, we see that
$\inf_{y^{t}\in \mathcal{C}_{t}(\omega)}\of{w\cdot \E\sqb{\frac{d\Q}{d\Pr}\;\middle\vert\;\mathcal{F}_{t}}(\omega)}^{\mathsf{T}}y^{t}=-\infty$ for each $\omega\in A$. Note that
\begin{align}
&\cl\bigcup_{U^{t}\in L_{d}^{\infty}(\mathcal{F}_{t}, \mathcal{C}_{t})}  B^{*}\of{\of{\E^{\Q}[U^{t}]+G(w)}\cap B(\R^{m})}\notag \\
&=\cb{z\in \R^{m}\mid w^{\mathsf{T}}(Bz)\geq \inf_{U^{t}\in L_{d} ^{\infty}(\mathcal{F}_{t},\mathcal{C}_{t})}w^{\mathsf{T}}\E^{\Q}\sqb{U^{t}}}\notag\\
&=\cb{z\in \R^{m}\mid (B^{*}w)^{\mathsf{T}}z\geq \E\sqb{\inf_{y^{t}\in \mathcal{C}_{t}}\of{w\cdot\E\sqb{\frac{d\Q}{d\Pr}\;\middle\vert\; \mathcal{F}_{t}}}^{\mathsf{T}}y^{t}}},
\end{align}
where the last equality is by \citet[Theorem~14.60]{rockafellar2}. Note that the passage to conditional expectations in the third line is necessary for the application of this theorem. Since $\Pr(A)>0$, this implies $\cl\bigcup_{U^{t}\in L_{d}^{\infty}(\mathcal{F}_{t},\mathcal{C}_{t})}B^{*}((\E^{\Q}\sqb{U^{t}}+G(w))\cap B(\R^{m}))=\R^{m}$. By the computation in the proof of Proposition~\ref{liquidationtheorem}, it follows that $-\alpha_{\cl R^{\market}}(\Q,w)=\R^{m}$.
\end{proof}

\begin{proof}[Proof of Corollary~\ref{liquidationcorollary}]
Let $t\in\T$. For each $\o\in\O$, we have
\begin{equation}
\negthinspace\inf_{y^{t}\in \mathcal{C}_{t}(\omega)}\of{w\cdot\E\sqb{\frac{d\Q}{d\Pr}\middle\vert\mathcal{F}_{t}}\negthinspace(\omega)}^{\mathsf{T}}y^{t}
=\begin{cases}0 & \text{if }w\negthinspace\cdot\negthinspace\E\sqb{\frac{d\Q}{d\Pr}\middle\vert\mathcal{F}_{t}}(\omega)\in (\mathcal{C}_{t}(\o))^{+}\negthinspace,\negthinspace\\ -\infty &\text{else}\end{cases}
\end{equation}
since $\mathcal{C}_{t}(\o)$ is a nonempty closed convex cone. Similar to the calculation in the proof of Corollary~\ref{corollary_convex market}, we have
\begin{align}
&\cl\bigcup_{U^{t}\in L_{d}^{\infty}(\mathcal{F}_{t},\mathcal{C}_{t})}  B^{*}\of{\E^{\Q}\sqb{U^{t}}+G(w))\cap B(\R^{m})}\notag\\
&=\cb{z\in \R^{m}\mid (B^{*}w)^{\mathsf{T}}z\geq \E\sqb{\inf_{y^{t}\in \mathcal{C}_{t}}\of{w\cdot\E\sqb{\frac{d\Q}{d\Pr} \;\middle\vert\;\mathcal{F}_{t}}}^{\mathsf{T}}y^{t}}},
\end{align}
from which the result follows immediately.
\end{proof}

\begin{proof}[Proof of Proposition~\ref{finitedivergencerm}]
For the first part, let $i\in\{1,\ldots,m\}$. From Remark~\ref{divergenceindex}, recall that $r_{i}\ell_{i}(s)=\sup_{y\in\R}\of{sy-r_{i}g_{i}\of{\frac{y}{r_{i}}}}\geq s-r_{i}g_{i}\of{\frac{1}{r_{i}}}$ for every $s\in\R$. Hence, given $X\in L_{m}^{\infty}$,
\begin{equation}\label{ocebound}
\delta_{\ell_{i},r_{i}}(X_{i})=\inf_{y\in\R}\of{y+r_{i}\E\sqb{\ell_{i}(-X_{i}-y)}} \geq -\E\sqb{X_{i}}-r_{i}g_{i}\of{\frac{1}{r_{i}}}
\end{equation}
for every $i\in\cb{1,\ldots,m}$. Then,
\begin{align}
&\inf_{z\in D_{\ell,r}^{\market}(0)}\of{B^{*}\bar{w}}^{\mathsf{T}}z\\
&= \inf_{X \in \Lambda_m(0)}\inf_{z\in D_{\ell,r}(X)}\of{B^{*}\bar{w}}^{\mathsf{T}}z\notag \\
&=\inf_{X\in \Lambda_m(0)}\of{B^{*}\bar{w}}^{\mathsf{T}}\delta_{\ell,r}(X)+\inf_{x\in C}\of{B^{*}\bar{w}}^{\mathsf{T}}\of{r\cdot x}\notag\\
&\geq \inf_{X\in \Lambda_m(0)}\of{B^{*}\bar{w}}^{\mathsf{T}}\E\sqb{-X}-\sum_{i=1}^{m}\bar{w}_{i}r_{i}g_{i}\of{\frac{1}{r_{i}}}+\inf_{x\in C}\of{B^{*}\bar{w}}^{\mathsf{T}}\of{r \cdot x}\notag\\
&= \inf_{X\in \Lambda_m(0)}\bar{w}^{\mathsf{T}}\E\sqb{-BX}-\sum_{i=1}^{m}\bar{w}_{i}r_{i}g_{i}\of{\frac{1}{r_{i}}}+\inf_{x\in C}\of{B^{*}\bar{w}}^{\mathsf{T}}\of{r \cdot x}\notag\\
&\geq \inf_{Y\in \K}\bar{w}^{\mathsf{T}}\E\sqb{-Y}-\sum_{i=1}^{m}\bar{w}_{i}r_{i}g_{i}\of{\frac{1}{r_{i}}}+\inf_{x\in C}\of{B^{*}\bar{w}}^{\mathsf{T}}\of{r \cdot x}\notag\\
&=\sum_{t=0}^{T}\inf_{U\in L_{d}^{\infty}(\F_{t},\mathcal{C}_{t}\cap\mathcal{D}_t)}\E\sqb{\bar{w}^{\mathsf{T}}U}-\sum_{i=1}^{m}\bar{w}_{i}r_{i}g_{i}\of{\frac{1}{r_{i}}}+\inf_{x\in C}\of{B^{*}\bar{w}}^{\mathsf{T}}\of{r\cdot x}\notag\\
&\geq \sum_{t=0}^{T}\inf_{U\in L_{d}^{\infty}(\mathcal{F}_{t},\mathcal{C}_{t})}\E\sqb{\bar{w}^{\mathsf{T}}U}-\sum_{i=1}^{m}\bar{w}_{i}r_{i}g_{i}\of{\frac{1}{r_{i}}}+\inf_{x\in C}\of{B^{*}\bar{w}}^{\mathsf{T}}\of{r\cdot x}\eqqcolon a,\notag 
\end{align}
where the first inequality follows from \eqref{ocebound}, the second inequality follows since $\Lambda_{m}(0)$ $= \{X\in L_{m}^{\infty}$ $ \mid BX\in \K\}$, and the last inequality follows since $L_d^{\infty}(\F_t,\C_t\cap\D_t)\subseteq L_d^{\infty}(\F_t,\C_t)$ for each $t\in\cb{0,\ldots,T}$. By the same arguments as in the proofs of Corollary~\ref{corollary_convex market} and Corollary~\ref{liquidationcorollary}, the hypotheses guarantee that $a > -\infty$. Hence,
\begin{align}
D_{\ell,r}^{\market}(0) &\subseteq \cb{\eta\in\R^{m}\mid\of{B^{*}\bar{w}}^{\mathsf{T}}\eta\geq \inf_{z\in D_{\ell,r}^{\market}(0)}\of{B^{*}\bar{w}}^{\mathsf{T}}z}\notag \\& \subseteq \cb{\eta\in\R^m\mid \of{B^{*}\bar{w}}^{\mathsf{T}}\eta\geq a}\negthinspace\neq \negthinspace\R^m.
\end{align}
Note that $L_{m}^{\infty}\ni X\mapsto \{\eta\in\R^m\mid\of{B^{*}\bar{w}}^{\mathsf{T}}\eta\geq a\}\in\G_m$ is a weak*-closed convex function. Hence, the desired finiteness condition follows since Remark~\ref{closedhull} yields 
\begin{equation}
(\cl D_{\ell,r}^{\market})(0) \subseteq  \{\eta\in\R^m\mid\of{B^{*}\bar{w}}^{\mathsf{T}}\eta\geq a\}\neq\R^{m}.
\end{equation}

For the second part, \eqref{shortfallmaxdivergence} yields $R_{\ell}(X)\subseteq D_{\ell,r}(X)$ for every $X\in L_{m}^{\infty}$; hence, by Definition~\ref{closedhull}, $(\cl R^{\market}_{\ell})(Y)\subseteq(\cl D^{\market}_{\ell,r})(Y)$ for every $Y\in L_d^{\infty}$. The result follows now from the previous~part.
\end{proof}

\section*{Acknowledgments}

The authors are grateful to two anonymous referees whose comments were very helpful in improving the paper. The authors would like to thank Zachary Feinstein and Samuel Drapeau as well for useful comments on set-valued entropic risk measures and optimized certainty equivalents.


\begin{thebibliography}{9}
\bibitem[Ararat \& Rudloff(2016)]{AR15}
\c{C}.~Ararat \& B.~Rudloff (2016) Dual representations for systemic risk measures. arXiv:\href{http://arxiv.org/abs/1607.03430}{1607.03430}.

\bibitem[Astic \& Touzi(2007)]{AT07}
F.~Astic \& N.~Touzi (2007) No arbitrage conditions and liquidity,  \emph{Journal of Mathematical Economics} \textbf{43} (6), 692--708.

\bibitem[Barrieu \& El~Karoui(2008)]{barrieu}
P.~Barrieu \& N. El Karoui (2008) Pricing, hedging and optimally designing derivatives via minimization of risk measures. In: \emph{Volume on Indifference Pricing} (R.~Carmona, ed.), 77--146. Princeton, New Jersey: Princeton University Press.

\bibitem[Ben~Tahar \& Lepinette(2014)]{LBT13}
I.~Ben~Tahar \& E.~Lepinette (2014) Vector-valued coherent risk measure processes, \emph{International Journal of Theoretical and Applied Finance} \textbf{17} (2), 1450011.

\bibitem[Ben~Tahar(2006)]{B06}
I.~Ben~Tahar (2006) Tail conditional expectation for vector-valued risks, \emph{SFB 649 Discussion Papers} 2006-029, Humboldt University, Collaborative Research Center 649.

\bibitem[Ben~Tal \& Teboulle(1986)]{oldoce}
A.~Ben-Tal \& M.~Teboulle (1986) Expected utility, penalty functions and duality in stochastic nonlinear programming, \emph{Management Science} \textbf{32} (11), 1445--1466.

\bibitem[Ben~Tal \& Teboulle(2007)]{bt:oce}
A.~Ben-Tal \& M.~Teboulle (2007) An old-new concept of convex risk measures: the optimized certainty equivalent, \emph{Mathematical Finance} \textbf{17} (3), 449--476.

\bibitem[Biagini \emph{et~al.}(2015)]{BFFMb15}
F.~Biagini, J.-P.~Fouque, M.~Frittelli \& T.~Meyer-Brandis (2015) A unified approach to systemic risk measures via acceptance sets. arXiv:\href{http://arxiv.org/abs/1503.06354}{1503.06354}.

\bibitem[Borwein \& Lewis(1992)]{bl:pfcp}
J.~M.~Borwein \& A.~S.~Lewis (1992) Partially finite convex programming, Part I: Quasi relative interiors and duality theory, \emph{Mathematical Programming} \textbf{57} (1), 15--48.

\bibitem[Burgert \& R{\"u}schendorf(2006)]{BR06}
C.~Burgert \& L.~R{\"u}schendorf (2006) Consistent risk measures for portfolio vectors, \emph{Insurance: Mathematics and Economics} \textbf{38} (2), 289--297.

\bibitem[Campi \& Owen(2011)]{CampiOwen11}
L.~Campi \& M.~P.~Owen (2011) Multivariate utility maximization with proportional transaction costs, Finance and Stochastics \textbf{15} (3), 461--499.

\bibitem[Cascos \& Molchanov(2016)]{CM13}
I.~Cascos \& I.~Molchanov (2016) Multivariate risk measures: a constructive approach based on selections, \emph{Mathematical Finance} \textbf{26} (4), 867--900.

\bibitem[\c{C}etin \emph{et~al.}(2004)]{CJP04}
U.~\c{C}etin, R.~A.~Jarrow \& P.~Protter (2004) Liquidity risk and arbitrage pricing theory, \emph{Finance and Stochastics} \textbf{8} (3), 311--341.

\bibitem[\c{C}etin \& Rogers(2007)]{CR07}
U.~\c{C}etin \& L.~C.~G.~Rogers (2007) Modelling liquidity effects in discrete time, \emph{Mathematical Finance} \textbf{17} (1), 15--29.

\bibitem[Chen \emph{et~al.}(2013)]{CIM13}
C.~Chen, G.~Iyengar \& C.~Moallemi (2013) An axiomatic approach to systemic risk, \emph{Management Science} \textbf{59} (6), 1373--1388.

\bibitem[Cherny \& Kupper(2007)]{CKpre}
A.~Cherny \& M.~Kupper (2007) Divergence utilities. SSRN:\href{https://papers.ssrn.com/sol3/papers2.cfm?abstract_id=1023525}{1023525}.

\bibitem[Csisz\'{a}r(1967)]{csi}
I.~Csisz\'{a}r (1967) On topological properties of f-divergence, \emph{Studia Scientiarum Mathematicarum Hungarica} \textbf{2} (1), 329-–339.

\bibitem[Feinstein \& Rudloff(2013)]{FR13}
Z.~Feinstein \& B.~Rudloff (2013) Time consistency of dynamic risk measures in markets with transaction costs, \emph{Quantitative Finance} \textbf{13} (9), 1473--1489.

\bibitem[Feinstein \& Rudloff(2015a)]{FR13b}
Z.~Feinstein \& B.~Rudloff (2015a) Multi-portfolio time consistency for set-valued convex and coherent risk measures, \emph{Finance and Stochastics} \textbf{19} (1), 67–-107.

\bibitem[Feinstein \& Rudloff(2015b)]{FRsurvey}
Z.~Feinstein \& B.~Rudloff (2015b) A comparison of techniques for dynamic risk measures with transaction costs. In: \emph{Set Optimization and Applications - The State of the Art} (A.~H.~Hamel, F.~Heyde, A.~L\"{o}hne, B.~Rudloff, C.~Schrage, eds.), 3--41. Berlin, Heidelberg: Springer-Verlag.

\bibitem[Feinstein \emph{et~al.}(2017)]{FRW15}
Z.~Feinstein, B.~Rudloff \& S.~Weber (2017) Measures of systemic risk. Forthcoming in
SIAM Journal on Financial Mathematics, arXiv:\href{http://arxiv.org/abs/1502.07961}{1502.07961}.

\bibitem[F\"{o}llmer \& Schied(2002)]{fs:srm}
H.~F\"{o}llmer \& A.~Schied (2002) Convex measures of risk and trading constraints, \emph{Finance and Stochastics} \textbf{6} (4), 429--447.

\bibitem[F\"{o}llmer \& Schied(2011)]{fs:sf}
H.~F\"{o}llmer \& A.~Schied (2001) Stochastic finance: an introduction in discrete time, third revised and extended edition. Berlin, New York: Walter de Gruyter.

\bibitem[Hamel(2009)]{andreasduality}
A.~H.~Hamel (2009) A duality theory for set-valued functions I: Fenchel conjugation theory, \emph{Set-Valued and Variational Analysis} \textbf{17} (2), 153--182.

\bibitem[Hamel \& Heyde(2010)]{hh:duality}
A.~H.~Hamel \& F.~Heyde (2010) Duality for set-valued measures of risk, \emph{SIAM Journal on Financial Mathematics} \textbf{1} (1), 66--95.

\bibitem[Hamel \emph{et~al.}(2011)]{hhr:setval}
A.~H.~Hamel, F.~Heyde \& B.~Rudloff (2011) Set-valued risk measures for conical market models, \emph{Mathematics and Financial Economics} \textbf{5} (1), 1--28.

\bibitem[Hamel \& L\"{o}hne(2014)]{hl:lagrange}
A.~H.~Hamel \& A.~L\"{o}hne (2014) Lagrange duality in set optimization, \emph{Journal of Optimization Theory and Applications} \textbf{161} (2), 368--397.

\bibitem[Hamel \emph{et~al.}(2014)]{HLR13}
A.~H.~Hamel, A.~L\"{o}hne \& B.~Rudloff (2014) Benson type algorithms for linear vector optimization and applications, \emph{Journal of Global Optimization} \textbf{59} (4), 811--836.

\bibitem[Hamel \emph{et~al.}(2013)]{hry:avar}
A.~H.~Hamel, B.~Rudloff \& M.~Yankova (2013) Set-valued average value at risk and its computation, \emph{Mathematics and Financial Economics} \textbf{7} (2), 229--246.

\bibitem[Heyde \& L\"{o}hne(2011)]{heydeloehne11}
F.~Heyde \& A.~L\"{o}hne (2011) Solution concepts in vector optimization: a fresh look at an old story, \emph{Optimization} \textbf{60} (12), 1421-1440.

\bibitem[Jouini \& Kallal(1995)]{JK95}
E.~Jouini \& H.~Kallal (1995) Martingales and arbitrage in securities markets with transaction costs, \emph{Journal of Economic Theory} \textbf{66} (1), 178--197.

\bibitem[Jouini \emph{et~al.}(2004)]{jouini}
E.~Jouini, M.~Meddeb \& N.~Touzi (2004) Vector-valued coherent risk measures, \emph{Finance and Stochastics} \textbf{8} (4), 531--552.

\bibitem[L\"{o}hne \& Rudloff(2014)]{LR13}
A.~L{\"o}hne \& B.~Rudloff (2014) An algorithm for calculating the set of superhedging portfolios in markets with transaction costs, \emph{International Journal of Theoretical and Applied Finance} \textbf{17} (2), 1450012.

\bibitem[Kabanov(1999)]{kabanov}
Y.~M.~Kabanov (1999) Hedging and liquidation under transaction costs in currency markets, \emph{Finance and Stochastics} \textbf{3} (2), 237--248.

\bibitem[Pennanen \& Penner(2010)]{PP10}
T.~Pennanen \& I.~Penner (2010) Hedging of claims with physical delivery under convex transaction costs, \emph{SIAM Journal on Financial Mathematics} \textbf{1} (1), 158--178.

\bibitem[Rockafellar(1970)]{rockafellar}
R.~T.~Rockafellar (1970) Convex analysis. Princeton, New Jersey: Princeton University Press.

\bibitem[Rockafellar \& Wets(1998)]{rockafellar2}
R.~T.~Rockafellar \& R.~J.-B.~Wets (1998) Variational analysis. Berlin, Heidelberg: Springer-Verlag.

\bibitem[Rogers \& Singh(2010)]{RS10}
L.~C.~G.~Rogers \& S.~Singh (2010) The cost of illiquidity and its effects on hedging, \emph{Mathematical Finance} \textbf{20} (4), 597--615.

\bibitem[Schied(2007)]{schied}
A.~Schied (2007) Optimal investments for risk- and ambiguity-averse preferences: a duality approach, \emph{Finance and Stochastics} \textbf{11} (1), 107--129.

\bibitem[Sion(1958)]{sion}
M.~Sion (1958) On general minimax theorems, \emph{Pacific Journal of Mathematics} \textbf{8} (1), 171–-176.

\bibitem[Weber \emph{et~al.}(2013)]{Weber13}
S.~Weber, W.~Anderson, A.-M.~Hamm, T.~Knispel, M.~Liese \& T.~Salfeld (2013) Liquidity-adjusted risk measures, \emph{Mathematics and Financial Economics} \textbf{7} (1), 69--91.

\bibitem[Zalinescu(2002)]{zalinescu}
C.~Zalinescu (2002) Convex analysis in general vector spaces. Singapore: World Scientific.
\end{thebibliography}
\end{document}